\documentclass[english,hyphens]{article}
\usepackage[T1]{fontenc}
\usepackage[utf8]{inputenc}
\usepackage{color}
\usepackage{array}
\usepackage{refstyle}
\usepackage{float}
\usepackage{calc}
\usepackage{units}
\usepackage{mathtools}
\usepackage{multirow}
\usepackage{algorithm2e}
\usepackage{amsmath}
\usepackage{amsthm}
\usepackage{amssymb}
\usepackage{graphicx}
\PassOptionsToPackage{normalem}{ulem}
\usepackage{ulem}
\usepackage{microtype}

\makeatletter

%%%%%%%%%%%%%%%%%%%%%%%%%%%%%% LyX specific LaTeX commands.

\AtBeginDocument{\providecommand\thmref[1]{\ref{thm:#1}}}
\AtBeginDocument{\providecommand\eqref[1]{\ref{eq:#1}}}
\AtBeginDocument{\providecommand\algref[1]{\ref{alg:#1}}}
%% Because html converters don't know tabularnewline
\providecommand{\tabularnewline}{\\}
\RS@ifundefined{subsecref}
  {\newref{subsec}{name = \RSsectxt}}
  {}
\RS@ifundefined{thmref}
  {\def\RSthmtxt{theorem~}\newref{thm}{name = \RSthmtxt}}
  {}
\RS@ifundefined{lemref}
  {\def\RSlemtxt{lemma~}\newref{lem}{name = \RSlemtxt}}
  {}

%%%%%%%%%%%%%%%%%%%%%%%%%%%%%% Textclass specific LaTeX commands.
\numberwithin{equation}{section}
\numberwithin{figure}{section}
\theoremstyle{plain}
\newtheorem{thm}{\protect\theoremname}
\theoremstyle{definition}
\newtheorem{defn}[thm]{\protect\definitionname}
\theoremstyle{plain}
\newtheorem{conjecture}[thm]{\protect\conjecturename}
\theoremstyle{definition}
\newtheorem{example}[thm]{\protect\examplename}
\theoremstyle{remark}
\newtheorem{claim}[thm]{\protect\claimname}
\theoremstyle{plain}
\newtheorem{cor}[thm]{\protect\corollaryname}

%%%%%%%%%%%%%%%%%%%%%%%%%%%%%% User specified LaTeX commands.
\usepackage{float}
\usepackage{hyperref}
\hypersetup{colorlinks = false, allcolors=., citebordercolor = [rgb]{0,0,0}, linkbordercolor = [rgb]{0,0,0}, urlbordercolor = [rgb]{0,0,0}}
\usepackage[T1]{fontenc}
\usepackage{lipsum}
\usepackage{blindtext,titlefoot}
\usepackage{ragged2e}

\usepackage{pgfplots}
\pgfplotsset{width=7cm,compat=1.10}

\newenvironment{skproof}{%
  \proof}{\endproof}

\makeatother

\usepackage{babel}
\providecommand{\claimname}{Claim}
\providecommand{\conjecturename}{Conjecture}
\providecommand{\corollaryname}{Corollary}
\providecommand{\definitionname}{Definition}
\providecommand{\examplename}{Example}
\providecommand{\theoremname}{Theorem}

\begin{document}
\title{\textbf{Zero-Knowledge Proof-of-Identity}: {\Large{}Sybil-Resistant,
Anonymous Authentication on Permissionless Blockchains and Incentive
Compatible, Strictly Dominant Cryptocurrencies}}
\author{David Cerezo Sánchez\textsuperscript{}\\
{\small{}david@calctopia.com}}
\maketitle
\begin{abstract}
Zero-Knowledge Proof-of-Identity from trusted public certificates
(e.g., national identity cards and/or ePassports; eSIM) is introduced
here to permissionless blockchains in order to remove the inefficiencies
of Sybil-resistant mechanisms such as Proof-of-Work (i.e., high energy
and environmental costs) and Proof-of-Stake (i.e., capital hoarding
and lower transaction volume). The proposed solution effectively limits
the number of mining nodes a single individual would be able to run
while keeping membership open to everyone, circumventing the impossibility
of full decentralization and the blockchain scalability trilemma when
instantiated on a blockchain with a consensus protocol based on the
cryptographic random selection of nodes. Resistance to collusion is
also considered.

Solving one of the most pressing problems in blockchains, a zk-PoI
cryptocurrency is proved to have the following advantageous properties:

- an incentive-compatible protocol for the issuing of cryptocurrency
rewards based on a unique Nash equilibrium

- strict domination of mining over all other PoW/PoS cryptocurrencies,
thus the zk-PoI cryptocurrency becoming the preferred choice by miners
is proved to be a Nash equilibrium and the Evolutionarily Stable Strategy

- PoW/PoS cryptocurrencies are condemned to pay the Price of Crypto-Anarchy,
redeemed by the optimal efficiency of zk-PoI as it implements the
social optimum

- the circulation of a zk-PoI cryptocurrency Pareto dominates other
PoW/PoS cryptocurrencies

- the network effects arising from the social networks inherent to
national identity cards and ePassports dominate PoW/PoS cryptocurrencies

- the lower costs of its infrastructure imply the existence of a unique
equilibrium where it dominates other forms of payment

\textbf{Keywords}: zero-knowledge, remote attestation, anonymous credentials,
incentive compatibility, dominant strategy equilibria, Nash equilibria,
Price of Crypto-Anarchy, Pareto dominance, blockchain, cryptocurrencies

\pagebreak{}
\end{abstract}

\section{Introduction}

Sybil-resistance for permissionless consensus comes at a big price
since it needs to waste computation using Proof-of-Work (PoW), in
addition to assuming that a majority of the participants must be honest.
In contrast, permissioned consensus is able to overcome these issues
assuming the existence of a Public-Key Infrastructure\cite{aaba,cpps,pbft}
otherwise it would be vulnerable to Sybil attacks\cite{the-sybil-attack}:
indeed, it has been recently proved\cite{cryptoeprint:2018:302} that
consensus without authentication is impossible without using Proof-of-Work.
Proof-of-Stake, the alternative to PoW, is economically inefficient
because participants must keep capital at stake which incentivise
coin hoarding and ultimately leads to lower transaction volume.

Another major challenge in permissionless blockchains is scalability,
both in number of participants and total transaction volume. Blockchains
based on Proof-of-Work are impossible to scale because they impose
a winner-take-all contest between rent-seeking miners who waste enormous
amounts of resources, and their proposed replacements based on Proof-of-Stake
don't exhibit the high decentralization desired for permissionless
blockchains.

The solution proposed in this paper prevents Sybil attacks without
resorting to Proof-of-Work and/or Proof-of-Stake on permissionless
blockchains while additionally guaranteeing anonymous identity verification:
towards this goal, zero-knowledge proofs of trusted PKI certificates
(i.e., national identity cards and/or ePassports) are used to limit
the number of mining nodes that a single individual could run; alternatively,
a more efficient solution based on mutual attestation is proposed
and demonstrated practical \ref{subsec:Performance-Evaluation}. Counterintuitively,
the blockchain would still be permissionless even though using government
IDs because the term ``permissionless'' literally means ``without
requiring permission'' (i.e., to access, to join, ...) and governments
would not be authorizing access to the blockchain; moreover, the goal
is to be open to all countries of the world \ref{subsec:Worldwide-Coverage-and},
thus its openness is indistinguishable from PoW/PoS blockchains (i.e.,
the union of all possible national blockchains equals a permissionless,
open and global blockchain). Later papers in the literature \cite{leshno2024viabilityopensourcefinancialrails}
agree with this definition of permissionless (see Definition 1 of
\cite{leshno2024viabilityopensourcefinancialrails}) and further highlight
the importance of the use of identities in blockchain protocols (see
Section 5 of \cite{leshno2024viabilityopensourcefinancialrails}).
Coincidently, the latest regulations \cite{chinaIdentityRegulation,chinaRules}
point to the obligation to verify and use real-world identities on
blockchains, and the banning of contaminant cryptocurrency mining\cite{chinaBanMining,NDRCbanlist}.

Blockchain research has focused on better consensus algorithms obviating
that incentives are a central aspect of permissionless blockchains
and that better incentive mechanisms would improve the adoption of
blockchains much more that scalability improvements. To bridge this
gap, new proofs are introduced to demonstrate that mining a new cryptocurrency
based on Zero-Knowledge Proof-of-Identity would strictly dominate
previous PoW/PoS cryptocurrencies, thus replacing them is proved to
be a Nash equilibrium; additionally, the circulation of the proposed
cryptocurrency would Pareto dominate other cryptocurrencies. Furthermore,
thanks to the network effects arising from the network of users of
trusted public certificates, the proposed cryptocurrency could become
dominant over previous cryptocurrencies and the lower costs of its
infrastructure imply the existence of a unique equilibrium where it
dominates other forms of payment.

\subsection{Contributions}

The main and novel contributions are:
\begin{itemize}
\item The use of anonymous credentials in permissionless blockchains in
order to prevent Sybil attacks \ref{sec:Authentication-Protocols}:
previous works\cite{DBLP:journals/corr/KulynychITD17,blockchainCA}
considered the use of PKI infrastructures in blockchains (i.e., permissioned
ledgers) but without transforming them into anonymous credentials
in order to obtain the equivalent of a permissionless blockchain.
Other works have considered anonymous credentials on blockchains\cite{cryptoeprint:2013:622,coconut,DBLP:conf/ccs/CamenischDD17,quisquis,indyAnonCreds},
but requiring the issuance of new credentials and not reusing previously
existing ones: verifying real-world identities and issuing their corresponding
digital certificates is the most expensive part of any real-world
deployment.

\begin{itemize}
\item The practical implementation and its perfomance evaluation \ref{subsec:Performance-Evaluation}
for national identity cards and ePassports.
\end{itemize}
\item Circumventing the impossibility of full decentralization \ref{subsec:Circumventing-the-Impossibility}
and the blockchain scalability trilemma.
\item A protocol for an incentive-compatible cryptocurrency \ref{alg:Incentive-Compatible-Protocol}:
previous blockchains mint cryptocurrencies tied to the process of
reaching a consensus on the order of the transactions, but the game-theoretic
properties of this mechanism is neither clear nor explicit.
\item A proof that mining the proposed cryptocurrency is a dominant strategy
over other PoW/PoS blockchains and a Nash equilibrium over previous
cryptocurrencies \ref{subsec:Strictly-Dominant-Cryptocurrenci}, in
addition to an Evolutionary Stable Strategy \ref{subsec:Evolutionary-Stable-Strategies}.
\item The insight that the optimal efficiency of zk-PoI resides in that
it's implementing the social optimum, unlike PoW/PoS cryptocurrencies
that have to pay the Price of (Crypto-)Anarchy \ref{subsec:Price-of-Crypto-Anarchy}.
\item A proof that the circulation of the proposed zk-PoI cryptocurrency
Pareto dominates other PoW/PoS cryptocurrencies \ref{subsec:Pareto-Dominance-on}.
\item A proof that the proposed cryptocurrency could become dominant over
previous ones due to stronger network effects and the lack of acceptance
of previous cryptocurrencies as a medium of payment \ref{subsec:On-Network-Effects}.
\item Finally, the lower costs of its infrastructure imply the existence
of a unique equilibrium where it dominates other forms of payment
\ref{subsec:Dominance-over-Cash}.
\end{itemize}

\section{Related Literature}

This section discusses how the present paper is significantly better
and more innovative than previous approaches in order to fulfill the
objective of providing a Sybil-resistant and permissionless blockchain
with anonymous transaction processing nodes (i.e., miners). Moreover,
it's considerably cheaper than other approaches\cite{coconut,pop}
that would require the re-identification and issuing of new identities
to the global population because the current proposal relies on the
previously issued credentials of electronic national identity cards
(3.5 billion issued at the time of publication) and electronic passports
(1 billion issued at the time of publication).

Proof of Space\cite{cryptoeprint:2013:796,cryptoeprint:2013:805}
reduces the energy costs of Proof-of-Work but it's not economically
efficient. Proof of Authority\cite{poa}(PoA) maintains a public list
of previously authorised nodes: the identities are not anonymised
and the blockchain is not open to everyone (i.e., the blockchain is
permissioned). Proof of Personhood\cite{pop}(PoP) can be understood
as an improvement over Proof of Authority in that identities are anonymised,
but the parties/gatherings used to anonymise and incorporate identities
into the blockchain don't scale to national/international populations
and could compromise Sybil resistance because it's trivial to get
multiple identities by using different disguises on different parties/gatherings
(i.e., they need to be validated simultaneously and without disguises):
however, the present paper produces Sybil-resistant, anonymised identities
on a global scale for a permissionless blockchain. Moreover, Proof
of Personhood\cite{pop} is endogenizing all the costly process of
credential verification and issuing: by contrast, Zero-Knowledge Proof
of Identity is exogenizing/outsourcing this costly process to governments,
thus making the entire blockchain system cheaper. More recently, Private
Proof-of-Stake protocols\cite{cryptoeprint:2018:1105,cryptoeprint:2018:1132}(PPoS)
achieve anonymity, but the economic inefficiencies of staking capital
still remain and the identities have no relation to the real world.

A conceptually close work (``\textit{Decentralized Multi-authority
Anonymous Authentication for Global Identities with Non-interactive
Proofs}'', \cite{cryptoeprint:2019:701}), concurrently developed,
doesn't reuse real-world certificates and therefore it would require
that governments re-issue the cryptographic credentials of their citizens:
therefore, it doesn't consider neither Sybil-resistance nor blockchain
integration. Pseudo-anonymous signatures\cite{pseudoanonymousSignatures}
for identity documents provide an interesting technical solution to
the problem of anonymous authentication using identity documents.
However, the proposed schemes present a number of shortcomings that
discourage their use in the present setting: some schemes are closely
tied to particular countries (i.e., the German Identity Card\cite{cryptoeprint:2012:558,bsitr03110,cryptoeprint:2018:1148}),
thus non-general purpose enough to include any country in the world,
or flexible to adapt to future changes; they require interaction with
an issuer during card initialization; they feature protocols for deanonymisation
and revocation, not desired in the setting considered in this paper;
the initial German scheme\cite{bsitr03110} could easily be subverted\cite{10.1007/978-3-319-40367-0_31}
because the formalization of pseudo-anonymous signatures is still
incipient\cite{10.1007/978-3-319-49151-6_17}, and improvements are
being worked out\cite{cryptoeprint:2014:067,cryptoeprint:2016:070,cryptoeprint:2018:1148}.

Anonymous credentials, first envisioned by David Chaum\cite{10.1007/978-1-4757-0602-4_18},
and first fully realised by Camenisch and Lysyanskaya\cite{cryptoeprint:2001:019}
with follow-up work improving its security/performance\cite{cryptoeprint:2009:107,cryptoeprint:2005:060,cryptoeprint:2010:496,10.1007/978-3-540-28628-8_4,cryptoeprint:2012:298},
are a centrally important building block in e-cash. The use of anonymous
credentials to protect against Sybil attacks\cite{the-sybil-attack}
has already been proposed in previous works\cite{DBLP:conf/wistp/AnderssonKMP08,cryptoeprint:2007:384}
although with different cryptographic techniques and for different
goals. The main problem with anonymous credentials is that they require
a first identification step to an issuing party\cite{indyAnonCreds}
and that would compromise anonymity. This problem is shared with other
schemes for pseudoanymization: for example, Bitnym\cite{sybilPseudonym}
requires that a Trusted Third Party must check the real identity of
a user before allowing the creation of a bounded number of valid genesis
pseudonyms. Decentralized Anonymous Credentials\cite{cryptoeprint:2013:622}
was first to show how to decentralise the issuance of anonymous credentials
and integrate them within a blockchain (i.e., Bitcoin), but they do
not re-use previously existing credentials and they still rely on
Proof-of-Work for Sybil-resistance. Decentralized Blacklistable Anonymous
Credentials with Reputation\cite{cryptoeprint:2017:389} introduce
blacklistable reputation on blockchains, but users must also publish
their real-world identity (i.e., non-anonymous). QuisQuis\cite{quisquis}
introduces the novel primitive of updatable public keys in order to
provide anonymous transactions in cryptocurrencies, but it doesn't
consider their Sybil-resistance. DarkID\cite{darkID} is a practical
implementation of an anonymous decentralised identification system,
but requires non-anonymous pre-authentication and doesn't consider
Sybil-resistance. A previous work\cite{whoami} on secure identity
registration on distributed ledgers achieved anonymity from a credential
issuer, but the pre-authentication is non-anonymous, it doesn't consider
Sybil-resistance and it doesn't re-use real-world cryptographic credentials.
Recently, anonymous credentials on standard smart cards have been
proved practical\cite{cryptoeprint:2019:460}, but in a different
setting where the credential issuer and the verifier are the same
entity.

Previous works have also considered anonymous PKIs: for example, generating
pseudonyms\cite{anonymousPKI} using a Certificate Authority and a
separate Private Certificate Authority; however, this architecture
is not coherent for a permissionless blockchain because both certificate
authorities would be open to everyone and that would allow the easy
linking of anonymous identities. Another recent proposal for a decentralised
PKI based on a blockchain\cite{cryptoeprint:2018:853} does not provide
anonymity, although it improves the work on cryptographic accumulators
on blockchains started by Certcoin\cite{certcoin,cryptoeprint:2015:718};
another proposals introduce privacy-aware PKIs on blockchains\cite{pbpki,cryptoeprint:2019:527},
but they are not Sybil-resistant and do not re-use certificates from
other CAs. Previously, BitNym\cite{sybilPseudonym} introduced Sybil-resistant
pseudonyms to Bitcoin, but a Trusted-Third Party must check the real
identities of users before allowing the creation of a bounded number
of valid genesis pseudonyms. ChainAnchor\cite{chainAnchor} wasn't
Sybil-resistant and used Direct Anonymous Attestation just for anonymous
authentication, but not for mutual authentication: it worked on the
permissioned model, explicitly not permissionless, and the GroupOwner
initially knew the true identity of members; moreover, the Permissions
Issuer is supposed not to collude with the Verifier, although it has
reading access to the identity database. ClaimChain\cite{DBLP:journals/corr/KulynychITD17}
improves the decentralised distribution of public keys in a privacy-preserving
way with non-equivocation properties, but it doesn't consider their
Sybil-resistance because it's more focused on e-mail communications.
Blind Certificate Authorities\cite{WAPRS18,cryptoeprint:2016:925}
can simultaneously validate an identity and prove a certificate binding
a public key to it, without ever learning the identity, which sounds
perfect for the required scenario except that it requires 3 parties
and it's impossible to achieve in the 2-party setting; moreover, it
doesn't consider Sybil-resistance.

Other approaches to anonymous identity include: Lightweight Anonymous
Subscription with Efficient Revocation\cite{cryptoeprint:2018:290},
although it doesn't consider the real-world identity of users because
it's focused on the host and its Trusted Platform Module; One Time
Anonymous Certificates\cite{onetimeanonymouscertificate} extends
the X.509 standard to support anonymity through group signatures and
anonymous credentials, although it doesn't consider Sybil-resistance
and their group signatures require that users hold two group secret
keys, a requisite that is not allowed in the current scenario because
the user is not trusted to store them on the national identity card
(for the very same reasons, Linkable Ring Signatures\cite{cryptoeprint:2004:027}
and Linkable Message Tagging\cite{cryptoeprint:2014:014} are not
allowed as cryptographic tools whilst group signatures and Deniable
Anonymous Group Authentication\cite{daga} would require a non-allowed
setup phase). Opaak\cite{opaak} provides anonymous identities with
Sybil-resistance based on the scarcity of mobile phone numbers: however,
users must register by receiving an SMS message (i.e., the Anonymous
Identity Provider knows the real identity of participants). Oblivious
PRFs\cite{cryptoeprint:2018:733} are not useful in the permissionless
blockchains because the secret key of the OPRF would be known by everyone,
and the forward secrecy of the scheme that would provide security
even if the secret key is known would not be of any use because the
object identifiers ObjID would be easily predictable (i.e., derived
from national identifiers). SPARTA\cite{sparta} provides pseudonyms
through a distributed third-party-based infrastructure; however, it
requires non-anonymous pre-registration. UnlimitID\cite{unlimitid}
provides anonymity to OAuth/OpenID protocols, although users must
create keypairs and keep state between and within sessions, a requisite
that is not allowed in the current scenario. Another proposal for
anonymous pseudonyms with one Trusted-Third Party\cite{Yilek_traceableanonymous}
requires a division of roles between the TTP and the server that is
not coherent in a permissionless blockchain. With Self-Certified Sybil-free
Pseudonyms\cite{MKAP08,DBLP:conf/wistp/AnderssonKMP08}, the user
must keep state (i.e., dispenser D) generated by the issuer during
enrollment and the Sybil-free identification is based on unique featurs
of the devices, not on the user identity. Another anonymous authentication
using smart cards\cite{smartcardAnonAuth} is only anonymous from
an eavesdropping adversary, not from the authentication server itself.
TATA provides a novel way to achieve Sybil-resistant anonymous authentication:
members of an induction group must interact and keep a list of who
has already been given a pseudonym; therefore, a list of participants
could be collected, but they can't be linked to their real-world identities;
it's not clear how to bootstrap the initial set of trusted users to
get them to blindly sign each other's certificates.

Self-sovereign identity solutions usually rely on identities from
social networks, but their Sybil-resistance is very questionable because
almost half of their accounts could be fake\cite{facebookNYT}: in
spite of this, SybilQuorum\cite{sybilQuorum,sybilQuorumArxiv} proposes
the use of social network analysis techniques to improve their Sybil-resistance;
other research projects consider privacy-preserving cryptographic
credentials from federated online identities\cite{2014arXiv1406.4053M}.

Regarding the game-theoretic aspects, most papers focus on attacking
only one cryptocurrency (e.g., selfish mining\cite{DBLP:journals/corr/EyalS13},
miner's dilemma\cite{DBLP:journals/corr/Eyal14}, fork after withholding\cite{DBLP:journals/corr/abs-1708-09790}).
For a recent survey of these topics, see\cite{sokGameTheoryCryptocurrencies}.
Exceptionally, ``\textit{Game of Coins}''\cite{DBLP:journals/corr/abs-1805-08979}
considers the competition between multiple cryptocurrencies: a manipulative
miner alters coin rewards in order to move miners to other cryptocurrencies
of his own interest (with a fixed cost and a finite number of steps).
However, in this paper, it's the cryptocurrency issuer who changes
the rewards in order to attract miners from other cryptocurrencies
by producing the most efficient cryptocurrency to mine.

\textbf{}
\begin{table}[H]
\begin{centering}
\begin{tabular}{|c|c|c|c|c|c|c|c|}
\hline 
 & PoW & PoSpace & PoS & PPoS & PoA & PoP & \textbf{zk-PoI}\tabularnewline
\hline 
\hline 
(Pseudo)-Anonymity & \checkmark & \checkmark & $\times$ & \checkmark & $\times$ & \checkmark & \textbf{\checkmark}\tabularnewline
\hline 
Energy-Efficient & $\times$ & \checkmark & \checkmark & \checkmark & \checkmark & \checkmark & \textbf{\checkmark}\tabularnewline
\hline 
Economically Efficient & $\times$ & $\times$ & $\times$ & $\times$ & \checkmark & $\times$ & \textbf{\checkmark}\tabularnewline
\hline 
Permissionless & \checkmark & \checkmark & \checkmark & \checkmark & $\times$ & \checkmark ({*}) & \textbf{\checkmark}\tabularnewline
\hline 
\end{tabular}
\par\end{centering}
\textbf{\caption{Comparison of different Sybil-resistant mechanisms. \protect \\
({*}) Only if open to everyone, with no selective pre-invitation and
no right to exclude. \label{tab:Comparison-Sybil}}
}
\end{table}

\subsection{Proof-of-Personhood Considered Harmful (and Illegal)}

To be considered lawful in the real world, Proof-of-Personhood (PoP,
\cite{pop}) requires the concurrence of multiple unrestricted freedoms:
assembly, association, and wearing of masks. However, in most countries
these freedoms are limited:
\begin{itemize}
\item freedom of assembly\cite{freedomAssembly} and association\cite{freedomAssociation}:
most countries usually require previous notification and permission
from the governing authorities, that may reject for multiple grounds
including but not limited a breach of public order. Thus, PoP cannot
be considered permissionless in these countries.
\item it's forbidden to wear a mask in most countries\cite{antiMaskLaw},
as required for the anonimity of PoP (``All parties are recorded
for transparency, but attendees are free to hide their identities
by dressing as they wish, including hiding their faces for anonymity.'',
\cite{pop}). Thus, PoP won't be anonymous in countries that outlaw
the covering of faces.
\item promoters and organizers of PoP parties may themselves be committing
a crime, due to incitement, conspiracy and complicity.
\end{itemize}
The solution proposed in this paper it's the only possible lawful
one according to current regulations that require the use of national
IDs to register on blockchains (AMLD5\cite{AMLD5}, FATF\cite{fatfGuidance},
Cyberspace Administration of China\cite{chinaRules,chinaIdentityRegulation}).

\section{Building Blocks}

\subsection{Consensus based on the Cryptographic Random Selection of Transaction
Processing Nodes\label{subsec:Modern-Consensus-based-on}}

The new family of consensus algorithms based on the cryptographically
random selection of transaction processing nodes\cite{cryptoeprint:2016:918,cryptoeprint:2016:919,cryptoeprint:2017:406,cryptoeprint:2017:454,dfinityConsensus}
is characterised by:

\begin{flushleft}
\begin{tabular}{|c|>{\centering}p{5cm}|c|}
\hline 
Consensus algorithm & Random selection method & Sybil resistance\tabularnewline
\hline 
\hline 
OmniLedger & PVSS + collective BLS/BDN signatures \cite{cryptoeprint:2016:1067,cryptoeprint:2017:406,cryptoeprint:2018:483,cryptoeprint:2019:676} & PoW/PoS\tabularnewline
\hline 
RapidChain & Performance improvements over OmniLedger\cite{cryptoeprint:2018:460} & PoS\tabularnewline
\hline 
Algorand & Cryptographic sortition by a unique digital signature & PoS\tabularnewline
\hline 
Dfinity & BLS threshold signature scheme\cite{Boneh01shortsignatures}  & PoS\tabularnewline
\hline 
Snow White & Extract public keys based on the amount of currency owned & PoS\tabularnewline
\hline 
\end{tabular}
\par\end{flushleft}
\begin{itemize}
\item Transaction processing workers/nodes are randomly selected from a
larger group: in the case of Dfinity\cite{dfinityConsensus}, an unbiasable,
unpredictable verifiable random function (VRF) based on the BLS threshold
signature scheme\cite{Boneh01shortsignatures} with the properties
of uniqueness and non-interactivity; in the case of OmniLedger\cite{cryptoeprint:2017:406},
the original proposal used a collective Schnorr threshold signature
scheme\cite{DBLP:journals/corr/SytaTVWF15,cryptoeprint:2016:1067,cryptoeprint:2017:406},
although it has been updated to collective BLS/BDN signatures\cite{cryptoeprint:2018:483}
and now it uses MOTOR\cite{cryptoeprint:2019:676} instead of ByzCoin\cite{DBLP:journals/corr/Kokoris-KogiasJ16}
with improvements for open and public settings; in the case of Algorand,
secure cryptographic sortition is generated using an elliptic curve-based
verifiable random function (ECVRF-ED25519-SHA512-Elligator2\cite{algorandVRF});
in the case of Snow White, cryptographic committee reconfiguration
is done by extracting public keys from the blockchain based on the
amount of currency owned. For a detailed comparison of random beacon
protocols, see \cite{cryptoeprint:2018:319}.
\item Regular time intervals (also named epochs or rounds) on which randomly
selected workers/nodes process the transactions.
\item Faster transaction confirmation and finality.
\item High scalability.
\item Decoupling Sybil-resistance from the consensus mechanism (PoW/PoS
is about membership, not consensus).
\item PoW/PoS to protect against Sybil attacks: however, the present paper
proposes the use of Zero-Knowledge Proof-of-Identity (i.e., more economically\cite{stakedPoS}
and environmentally efficient\cite{natureEnergyCarbonCosts,bitcoinCarbonFootprint}).
\end{itemize}

\subsection{X.509 Public Key Infrastructure\label{subsec:X.509-Public-Key}}

X.509 is an ITU-T standard\cite{x509} defining the format of public
key certificates, itself based on the ASN.1 standard\cite{asn1}:
these certificates underpin most implementations of public key cryptography,
including SSL/TLS and smartcards. An X.509v3 certificate has the following
structure:

\begin{flushleft}
\noindent\fbox{\begin{minipage}[t]{1\columnwidth - 2\fboxsep - 2\fboxrule}%
\begin{itemize}
\item Certificate

\begin{itemize}
\item Version Number
\item Serial Number
\item Signature Algorithm ID
\item Issuer Name
\item Validity Period:

\begin{itemize}
\item Not Before
\item Not After
\end{itemize}
\item Subject Name
\item Subject Public Key Info

\begin{itemize}
\item Public Key Algorithm
\item Subject Public Key
\end{itemize}
\item Issuer Unique Identifier (optional)
\item Subject Unique Identifier (optional)
\item Extensions (optional):

\begin{itemize}
\item Key Usage (optional)
\item Authority Information Access (optional)
\item Certificate Policies (optional)
\item Basic Constraints (optional)
\item CRL Distribution Points (optional)
\item Subject Alternative Name (optional)
\item Extended Key Usage (optional)
\item Subject Key Identifier (optional)
\item Authority Key Identifier (optional)
\end{itemize}
\end{itemize}
\item Certificate Signature Algorithm
\item Certificate Signature
\end{itemize}
\end{minipage}}
\par\end{flushleft}

Certificates are signed creating a certificate chain: the root certificate
of an organization is a self-signed certificate that signs intermediate
certificates that themselves are used to sign end-entities certificates.
To obtain a signed certificate, the entity creates a key pair and
signs a Certificate Signing Request (CSR) with the private key: the
CSR contains the applicant's public key that is used to verify the
signature of the CSR and a unique Distinguished Name within the organization.
Then, one of the intermediate certificate authorities issues a certificate
binding a public key to the requested Distinguished Name and that
also contains information identifying the certificate authority that
vouches for this binding.

The certificate validation chain algorithm checks the validity of
an end-entity certificate following the next steps:
\begin{enumerate}
\item The certificates are correct regarding the ASN.1 grammar of X.509
certificates.
\item The certificates are within their validity periods (i.e., non-expired).
\item If access to a Certificate Revocation List is granted, the algorithm
checks that none of the certificates is included (i.e., the certificate
has not been revoked).
\item The certificate chain is traversed checking that:

\begin{enumerate}
\item The issuer matches the subject of the next certificate in the chain.
\item The signature is valid with the public key of the next certificate
in the chain.
\end{enumerate}
\item The last certificate is a valid self-signed certificated trusted by
the end-entity checker.
\end{enumerate}
Additionally, the algorithm could also check complex application policies
(i.e., the certificate can be used for web server authentication and/or
web client authentication).

\subsection{Electronic Passports\label{subsec:Electronic-Passport}}

\textbf{}
\begin{table}[H]
\begin{centering}
\begin{tabular}{|c|c|}
\hline 
\textbf{Data Group} & \textbf{Data Elements}\tabularnewline
\hline 
\hline 
\multirow{14}{*}{Data Group 1} & Document Types\tabularnewline
\cline{2-2} 
 & Issuing State or Organizaton\tabularnewline
\cline{2-2} 
 & Name (of Holder)\tabularnewline
\cline{2-2} 
 & Document Number\tabularnewline
\cline{2-2} 
 & Check Digit - Doc Number\tabularnewline
\cline{2-2} 
 & Nationality\tabularnewline
\cline{2-2} 
 & Date of Birth\tabularnewline
\cline{2-2} 
 & Check Digit - DOB\tabularnewline
\cline{2-2} 
 & Sex\tabularnewline
\cline{2-2} 
 & Date of Expiry or Valid Until Date\tabularnewline
\cline{2-2} 
 & Check Digit DOE/VUD\tabularnewline
\cline{2-2} 
 & Optional Data\tabularnewline
\cline{2-2} 
 & Check Digit - Optional Data Field\tabularnewline
\cline{2-2} 
 & Composite Check Digit\tabularnewline
\hline 
Data Group 11 & Personal Number\tabularnewline
\hline 
Data Group 15 & User's Public Key\tabularnewline
\hline 
\end{tabular}
\par\end{centering}
\textbf{\caption{Data Groups from Electronic Passports\label{tab:Data-Group-1}}
}
\end{table}

Modern electronic passports feature NFC chips\cite{icaoDoc9303part11}
that contain all their printed information in digital form, using
a proprietary format set by International Civil Aviation Organization\cite{icaoDoc9303part10}
and not X.509 certificates \ref{subsec:X.509-Public-Key} like the
ones used in national identity cards: the relevant fields are contained
within its Data Group 1 \ref{tab:Data-Group-1} (i.e., the same information
available within the Machine Readable Zone), and the Document Security
Object contains a hash of all the Data Groups signed by a Document
Signing Certificate issued every three months (also stored on the
passports), itself signed by a Country Signing Certificate Authority
(all the certificates are available online\cite{icaoPKD}). Additionally,
the data within the NFC chips are cryptographically protected and
it's necessary to derive the cryptographic keys by combining the passport
number, date of birth and expiry date (i.e., BAC authentication).

Finally, note that the electronic identity cards of some countries
can also work as ePassports (e.g., Spanish Identity Card -Documento
Nacional de Identidad-).

\subsection{Verifiable Computation\label{subsec:Verifiable-Computation}}

A public verifiable computation scheme allows a computationally limited
client to outsource to a worker the evaluation of a function $F\left(u,w\right)$
on inputs $u$ and $w$: other alternative uses of these schemes allow
a verifier $V$ to efficiently check computations performed by an
untrusted prover $P$. More formally, the following three algorithms
are needed:
\begin{defn}
(Public Verifiable Computation). A public verifiable computation scheme
$VC$ consists of three polynomial-time algorithms $\left(\mbox{Keygen},\mbox{ Compute, Verify}\right)$
defined as follows:
\end{defn}

\begin{itemize}
\item $\left(EK_{F},VK_{F}\right)\leftarrow\mbox{Keygen}\left(F,1^{\lambda}\right)$:
the key generation algorithm takes the function $F$ to be computed
and security parameter $\lambda$; it outputs a public evaluation
key $EK_{F}$ and a public verification key $VK_{F}$.
\item $\left(y,\pi_{y}\right)\leftarrow\mbox{Compute}\left(EK_{F},u,w\right)$:
the prover runs the deterministic worker algorithm taking the public
evaluation key $EK_{F}$, an input $u$ supplied by the verifier and
an input $w$ supplied by the prover. It outputs $y\leftarrow F\left(u,w\right)$
and a proof $\pi_{y}$ of $y$'s correctness (as well as of prover's
knowledge of $w$).
\item $\left\{ 0,1\right\} \leftarrow\mbox{Verify}\left(VK_{F},u,w,y,\pi_{y}\right)$:
the deterministic verification algorithm outputs $1$ if $F\left(u,w\right)=y$,
and $0$ otherwise.
\end{itemize}
A public verification computation scheme $VC$ must comply with the
following properties of correctness, security, and efficiency:
\begin{itemize}
\item Correctness: for any function $F$ and any inputs \uline{\mbox{$u,w$}}
to $F$, if we run $\left(EK_{F},VK_{F}\right)\leftarrow\mbox{Keygen}\left(F,1^{\lambda}\right)$
and $\left(y,\pi_{y}\right)\leftarrow\mbox{Compute}\left(EK_{F},u,w\right)$
then we always get $\mbox{Verify}\left(VK_{F},u,w,y,\pi_{y}\right)=1$.
\item Efficiency: $\mbox{Keygen}$ is a one-time setup operation amortised
over many calculations and $Verify$ is computationally cheaper than
evaluating $F$.
\item Security: for any function $F$ and any probabilistic polynomial-time
adversary $A$, we require that
\[
\mbox{Pr}\left[\left(\hat{u},\hat{w},\hat{y},\hat{\pi_{y}}\right)\leftarrow A\left(EK_{F},VK_{F}\right):F\left(\hat{u},\hat{w}\right)\neq\hat{y}\right]\leq\mbox{negl}\left(\lambda\right)
\]
and
\[
1=\mbox{Verify}\left(VK_{F},\hat{u},\hat{w},\hat{y},\hat{\pi_{y}}\right)\leq\mbox{negl}\left(\lambda\right)
\]
where $\mbox{negl}\left(\lambda\right)$ denotes a negligible function
of inputs $\lambda$.
\end{itemize}
Additionally, we require the public verification computation scheme
$VC$ to be succinct and zero-knowledge:
\begin{itemize}
\item Succinctness: the generated proofs $\pi_{y}$ are of constant size,
that is, irrespective of the size of the function $F$ and inputs
$u$ and $w$.
\item Zero-knowledge: the verifier learns nothing about the prover's input
$w$ beyond the output of the computation.
\end{itemize}
Practical implementations are Pinocchio\cite{cryptoeprint:2013:279}
and Geppeto\cite{cryptoeprint:2014:976}, or Buffet\cite{cryptoeprint:2014:674}
and Pequin\cite{pequin}(a simplified version of Pepper\cite{Setty12makingargument}).

\subsubsection{Verifiable Validation of X.509 Certificates as Anonymous Credentials\label{subsec:Verifiable-Validation-of-X509Certificates}}

The algorithm for certificate chain validation chain in section \ref{subsec:X.509-Public-Key}
can be implemented with the public verifiable computation scheme of
section \ref{subsec:Verifiable-Computation} using zk-SNARKS to obtain
a verifiable computation protocol so that a certificate holder is
able to prove that he holds a valid X.509 certificate chain with a
unique Distinguished Name, without actually sending the public key
to the verifier and selectively disclosing the contents of the certificate:
in other words, we re-use existing certificate chains and PKI infrastructure
without requiring any modifications, turning X.509 certificates into
anonymous credentials. A previous work already demonstrated the technical
and practical viability of this approach\cite{cinderella-turning-shabby-x-509-certificates-into-elegant-anonymous-credentials-with-the-magic-of-verifiable-computation}:
the only handicap was that the proof generation could take a long
time (e.g., more than 10 minutes) and large keys (e.g., 1 Gbyte)..

Recent research advances have improved\cite{cryptoeprint:2017:602}
the initial setup of the zk-SNARK protocol used to generate the Common-Reference
String (CRS) with an MPC protocol, such that it's secure even if all
participants are malicious (except one). And faster proving times
could be obtained by efficiently composing the non-interactive proving
of algebraic and arithmetic statements\cite{cryptoeprint:2018:557}
since QAP-based zk-SNARKs are only efficient for arithmetic representations
and not algebraic statements, but at the cost of increasing the proof
size.

In this paper, a practical implementation was completed to check a
certificate chain with an additional validation policy and written
as C code for Pequin\cite{pequin}, then compiled into a public evaluation
and verification keys: unfortunately, it isn't scalable to millions
of users and/or the large circuits/constraints required to cover all
the typologies of national identity cards/ePassports, thus an implementation
based on TEE and mutual attestation is the preferred implementation
\ref{subsec:Detailed-Authentication-Protocol-Remote-Attestation}.
The only zero-knowledge proof system that could be scalable enough\cite{cryptoeprint:2018:691}
works on a computer cluster, thus it doesn't fit the setting of a
single user authenticating on his own device, and a libsnark backend
can't handle more than 4 million gates requiring more than an hour
of computation.

\subsection{Cryptographic Accumulators}

Firstly devised by Benaloh and de Mare\cite{10.1007/3-540-48285-7_24},
a cryptographic accumulator \cite{cryptoeprint:2015:087} is a compact
binding set of elements supporting proofs of membership and more space-efficient
than storing all of the elements of the set; given an accumulator,
an element, and a membership witness, the element's presence in the
accumulated set can be verified. Generally speaking, an accumulator
consists of four polynomial-time algorithms:
\begin{itemize}
\item $Generate\left(1^{k}\right)$: given the security parameter $k$,
it instantiates the initial value of the empty accumulator.
\item $Add\left(a,y\right)\rightarrow\left(a',w\right)$: adds the element
$y$ to the current state of the accumulator $a$ producing the updated
accumulator value $a'$ and the membership witness $w$ for $y$.
\item $WitnessAdd\left(w,y\right)\rightarrow w'$: on the basis of the current
state of a witness $w$ and the newly added value $y$, it returns
an updated witness $w'$.
\item $Verify\left(a,y,w\right)\rightarrow\left\{ true,false\right\} $:
verifies the membership of $y$ using its witness $w$ on the current
state of accumulator $a$.
\end{itemize}
The following are interesting security properties of accumulators:
\begin{itemize}
\item Dynamic accumulators\cite{10.1007/3-540-45708-9_5}: accumulators
supporting the removal of elements from the accumulator by means of
a deletion algorithm $Removal()$ and a witness update algorithm $WitnessRemoval\left(\right)$.
\item Universality\cite{10.1007/978-3-540-72738-5_17}: accumulators supporting
non-membership proofs, $NonWitnessAdd\left(\right)$, $NonWitnessRemoval\left(\right)$
and $NonVerify\left(\right)$.
\item Strong accumulators\cite{Camacho2012}: deterministic and publicly
executable, meaning that it does not rely on a trusted accumulator
manager.
\item Public checkable accumulators, the correctness of every operation
can be publicly verified.
\end{itemize}
Recent constructions of cryptographic accumulators specifically tailored
for blockchains are: a dynamic, universal, strong and publicly checkable
accumulator \cite{certcoin}; an asynchronous accumulator\cite{cryptoeprint:2015:718}
with low frequency update and old-accumulator compatibility (i.e.,
up-to-date witnesses can be verified even against an outdated accumulator);
a constant-sized, fair, public-state, additive, universal accumulator\cite{cryptoeprint:2018:853},
and an accumulator optimised for batch and aggregation operations\cite{cryptoeprint:2018:1188}.

\subsection{Remote Attestation}

In the terminology of Intel SGX, remote attestation is used to prove
that an enclave has been established without alterations of any kind:
in other words, remote parties can verify that an application is running
inside an SGX enclave. Concretely, remote attestation is used to verify
three properties: the identity of the application, that it has not
been tampered with, and that it is running securely within an SGX
enclave. Remote attestation is carried out in several stages: requesting
a remote attestation from the challenger; performing a local attestation
of the enclave; converting said local attestation to a remote attestation;
returning the remote attestation to the challenger, and the challenger
verifying the remote attestation to the Intel Attestation Service. 

A detailed technical description is outside of the scope of this paper:
detailed descriptions can be found in the standard technical documentation\cite{cryptoeprint:2016:086,officialAttestation,sampleAttestation}.
Recent attacks\cite{vanbulck2018foreshadow} can be used to extract
the secret attestation keys used to verify the identity of an SGX
enclave, and microcode updates must be installed\cite{intelL1TF}
to prevent their exploitation: that is, it's essential to check that
parties to a remote attestation are using a safe and updated version.
However, our protocols are inherently resistant to deniability attacks\cite{cryptoeprint:2018:424}
because they are based on mutual attestation.

As it would be shown in the next section \ref{subsec:Detailed-Authentication-Protocol-Remote-Attestation},
remote attestation can be used as a more efficient substitute of verifiable
computation.

\section{Authentication Protocols\label{sec:Authentication-Protocols}}

In this section, we describe authentication protocols for Sybil-resistant,
anonymous authentication using Zero-Knowledge protocols \ref{subsec:Detailed-Authentication-Protocol}
and remote attestation \ref{subsec:Detailed-Authentication-Protocol-Remote-Attestation}.

\subsection{Authentication Protocols using Zero-Knowledge\label{subsec:Detailed-Authentication-Protocol}}

\textcolor{black}{The use of zero-knowledge protocols guarantee the
public-verifibility of the correctness of the Sybil-resistant, anonymous
identities committed to the permissionless blockchain.}

\subsubsection{Security Goals}

The following security goals must be met for the system to be considered
secure:
\begin{enumerate}
\item The registered miner's key to the blockchain \textit{opens, but no
one can shut; he can shut, but no one can open} (\textit{Isaiah 22:22},
\cite{isaiah2222}). For the security of the system to be considered
equivalent to the currently available permissionless blockchains,
anyone holding a valid public certificate should be able to register
a pseudo-anonymous identity on the blockchain but no one should be
able to remove it (i.e., uncensorable free entry is guaranteed).
\item Protection against malicious issuers: some certification authorities
may turn against some citizens and try to cancel access to the permissionless
blockchain or stole their funds.

\begin{enumerate}
\item Mandatory passphrase. An issuer may counterfeit a certificate with
the same unique identifiers, thus possessing a valid certificate isn't
secure enough and a passphrase is deemed mandatory.
\item Non-bruteforceable. Operations must be computationally costly on the
client side to prevent brute-forcing.
\item No OCSP checking. Prevention against malicious blacklisting.
\end{enumerate}
\item Privacy: miner's real identity can't be learned by anyone.
\item Unique pseudonyms: from each identity card/ePassport, only one unique
identifier can be generated.
\item Publicly verifiable: anyone should be able to verify the validity
of the miner's public key and its pseudonym.
\end{enumerate}

\subsubsection{Zero-Knowledge Protocols (X.509)\label{subsec:Zero-Knowledge-Protocols-(X.509)}}

\textbf{Anonymous miner registration of a new public key on a permissionless
blockchain. }This protocol generates a unique pseudonym for each miner,
and attaches a verifiable proof that its new public key to be stored
on-chain is signed with a valid public certificate included on a recognised
certification authorities list, and that the new public key is linked
to the blockchain-specific pseudonym that is in turn uniquely linked
to the citizen's public key certificate.

Miners holding a public key certificate must execute the following
steps:
\begin{enumerate}
\item Create a deterministic public/secret key pair based on a secret passphrase
(no need for verifiable computation):
\[
pk,sk=\mbox{Det\_KeyPairGen}\left(KDF\left(passphrase,hash(publicCert)\right)\right)
\]
The generation algorithm must be determistic because the smartcard
may be unable to store them and/or the miner may loose them (i.e.,
as in deterministic wallets). KDF is a password-based key derivation
function (e.g., PBKDF2).
\item Obtain a signature of the previously generated public key $pk$ with
the miner's public key certificate (no need for verifiable computation,
this operation could be executed on a smartcard):
\[
sign_{PK}=\mbox{PKCS\_Sign}\left(secretKey_{publicCert},pk\right)
\]
\item Check the validity of the certificate chain of the miner's public
key certificate as extracted from the smartcard:

\begin{enumerate}
\item Load the public key of the root certificate.
\item Hash and verify all intermediates, based on their certificate templates,
and the public key of their parent certificate starting from the root
certificate and following with the verified public key from the previous
intermediate certificate template.
\item Hash and verify the miner's public key certificate using the last
verified public key returned from the previous step.
\item Check the time validity of the miner's public key certificate.
\item Check that the miner's public key certificate is contained on a list
of trusted certification authorities.
\end{enumerate}
\item Obtain the unique identifier from the miner's public key:
\[
uniqueID=getID(publicCert)
\]
Note that the unique identifier is usually contained on Serial Number
of the certificate, or the Subject Alternative Name extension under
different OIDs, depending on the country.
\item Generate a deterministic pseudonym using the blockchain identifier:
\begin{eqnarray*}
signatureSecret & = & \mbox{PKCS\_Sign}\left(secretKey_{publicCert},\right.\\
 &  & \left."\mbox{PREFIXED\_COMMON\_STRING}"\right)
\end{eqnarray*}
\begin{eqnarray*}
pseudonym & = & Hash\left(signatureSecret||BlockchainIdentifier||uniqueID\right)\\
 &  & ||"\mbox{REG}"
\end{eqnarray*}
PKCS\_Sign is the deterministic PKCS\#1.5 signing algorithm executed
on a prefixed string to obtain a unique, non-predictable secret based
on the certificate's owner. The obtained signature is appended to
the blockchain identifer and the unique identifier, and then hashed
to derive a unique pseudonym. Finally, the string ``REG'' is appended
to differentiate this pseudonym from the one generated during a remove
protocol and prevent replay attacks for removal reusing the generated
zero-knowledge proof.
\item Verify the signature $sign_{PK}$ on the miner's public key certificate
$pk$: 
\[
\mbox{PKCS\_Verify}\left(publicCert,sign_{PK}\right)
\]
\item As the $signatureSecret$ is calculated offline by the smartcard,
it's also necessary to verify it using the miner's public key certificate
$publicCert$:
\[
\mbox{PKCS\_Verify}\left(publicCert,signatureSecret\right)
\]
\item Generate the zero-knowledge proof $\pi$ (e.g., zk-SNARK, zk-STARK
or zk-SNARG) of the miner's public key certificate $pk$, the generated
pseudonym and, signature $sign_{PK}$ such that all the previous conditions
3-7 hold.
\item Anonymously contact the permissionless blockchain: 

\begin{enumerate}
\item optionally, check the miner's real identity on a cryptographic accumulator:

\begin{enumerate}
\item establish a shared secret running a Diffie-Hellman key exchange between
the prospective miner and the permissionless blockchain
\item send attributes of the miner's real identity encrypted with the shared
secret
\item execute the non-membership proof $NonWitnessAdd\left(w,y\right)$
on the cryptographic accumulator
\end{enumerate}
\item register the generated pseudonym, the new public key $pk$, the signature
$sign_{PK}$ and $\pi$: note that they don't reveal the miner's real
identity ($publicCert$, $uniqueID$ and $signatureSecret$ are all
keep as a secret).
\end{enumerate}
\end{enumerate}
The registering node of the permissionless blockchain verifies $\pi$
before adding the new public key, the associated pseudonym, the signature
$sign_{PK}$ and the succinct proof $\pi$: note that the miner is
unable to register multiple pseudonyms, and he can only use one running
node that would be signing messages with the generated secret key
$sk$. Other nodes would be able to efficiently verify $\pi$ to confirm
that the public key $pk$ is a signed by someone from an allowed certificate
authority, and that the pseudonym is the miner's unique alias for
the blockchain.\\

\textbf{Taking offline registrations from a permissionless blockchain.
}This protocol takes offline a pseudonym and its associated public
key $pk$ and signature $sign_{PK}$ from a permissionless blockchain.
Miners must execute the following steps to take offline an identity
from a permissionless blockchain:
\begin{enumerate}
\item Generate a zero-knowledge proof $\pi$ (e.g., zk-SNARK, zk-STARK or
zk-SNARG) of the steps 3-7 of the previous protocol to prove secret
knowledge of $sk$ and that he's able to re-generate the pseudonym,
but this time appending the string ``OFF'' to the pseudonym.
\item Anonymously contact the permissionless blockchain to take offline
the generated pseudonym and all its associated data (including the
cryptographic accumulator), attaching $\pi$.
\end{enumerate}
The registering node of the permissionless blockchain verifies $\pi$
before taking offline the pseudonym without learning the real identity
of the miner (publicCert, uniqueID and signatureSecret remain secret).

\subsubsection{Zero-Knowledge Protocols (ePassports)}

Analogous to the zero-knowledge protocols for X.509 \ref{subsec:Zero-Knowledge-Protocols-(X.509)},
but now considering the specific details of ePassports \ref{subsec:Electronic-Passport},
which usually contain a unique keypair with the public key on Data
Group 15 and the private key hidden within the chip: the Active Authentication
protocol can be used to sign random challenges that can be verified
with the corresponding public key. Some ePassports don't feature Active
Authentication, nonetheless a modified version of the following protocols
could still be executed (see subsection \ref{subsec:zkAbsence-of-AA}).

\textbf{Anonymous miner registration of a new public key on a permissionless
blockchain. }This protocol generates a unique pseudonym for each miner,
and attaches a verifiable proof that its new public key to be stored
on-chain is signed with a valid public certificate included on the
list of Country Signing Certificate Authorities, and that the new
public key is linked to the blockchain-specific pseudonym that is
in turn uniquely linked to the public key certificate of the passport
holder.

Miners holding a public key certificate must execute the following
steps:
\begin{enumerate}
\item Create a deterministic public/secret key pair based on a secret passphrase
(no need for verifiable computation):
\[
pk,sk=\mbox{Det\_KeyPairGen}\left(KDF\left(passphrase,hash(publicCert)\right)\right)
\]
The \textit{publicCert} is taken from the Data Group 15. KDF is a
password-based key derivation function (e.g., PBKDF2).
\item Obtain a signature of the previously generated public key $pk$ with
the miner's public key certificate (no need for verifiable computation,
this operation is executed within the ePassport's chip using the Active
Authentication protocol):
\[
sign_{PK}=\mbox{Sign}\left(secretKey_{publicCert},pk\right)
\]
\item Check the validity of the Data Security Object of the miner's ePassport:

\begin{enumerate}
\item Load the public key of the Country Signing Certificate from a trusted
source \cite{icaoPKD} and the Document Signing Certificate from the
ePassport.
\item Hash all the Data Groups and check their equivalence to the Data Security
Object.
\item Verify the signature of the Data Security Object using the Document
Signing Certificate.
\item Verify the signature of the Document Signing Certificate using the
Country Signing Certificate.
\item Check the time validity of the certificates.
\end{enumerate}
\item Obtain the unique identifier of the ePassport:
\[
uniqueID=getID(DataGroups)
\]
Note that the unique identifier is usually contained on the Data Element
``Document Number'' of the Data Group 1: as it's legally valid for
the same person to own multiple passports with different Document
Numbers, some countries include a unique ``Personal Number'' on
the Data Group 11.
\item Generate a deterministic pseudonym using the blockchain identifier:
\begin{eqnarray*}
signatureSecret & = & \mbox{Sign}\left(secretKey_{publicCert},\right.\\
 &  & \left."\mbox{PREFIXED\_COMMON\_STRING}"\right)
\end{eqnarray*}
\begin{eqnarray*}
pseudonym & = & Hash\left(signatureSecret||BlockchainIdentifier||uniqueID\right)\\
 &  & ||"\mbox{REG}"
\end{eqnarray*}
Sign is the Active Authentication protocol executed within the ePassport's
chip, a deterministic signing algorithm executed on a prefixed string
to obtain a unique, non-predictable secret based on the certificate's
owner. The obtained signature is appended to the blockchain identifer
and the unique identifier, and then hashed to derive a unique pseudonym.
Finally, the string ``REG'' is appended to differentiate this pseudonym
from the one generated during a remove protocol and prevent replay
attacks for removal reusing the generated zero-knowledge proof (e.g.,
zk-SNARK, zk-STARK or zk-SNARG).
\item Verify the signature $sign_{PK}$ on the miner's public key certificate
$pk$: 
\[
\mbox{PKCS\_Verify}\left(publicCert,sign_{PK}\right)
\]
The \textit{publicCert} is taken from the Data Group 15.
\item As the $signatureSecret$ is calculated offline by the ePassport's
chip, it's also necessary to verify it using the miner's public key
certificate $publicCert$:
\[
\mbox{PKCS\_Verify}\left(publicCert,signatureSecret\right)
\]
The \textit{publicCert} is taken from the Data Group 15.
\item Generate the zero-knowledge proof $\pi$ (e.g., zk-SNARK, zk-STARK
or zk-SNARG) of the miner's public key certificate $pk$, the generated
pseudonym, and signature $sign_{PK}$ such that all the previous conditions
3-7 hold.
\item Anonymously contact the permissionless blockchain

\begin{enumerate}
\item optionally, check the miner's real identity on a cryptographic accumulator:

\begin{enumerate}
\item establish a shared secret running a Diffie-Hellman key exchange between
the prospective miner and the permissionless blockchain
\item send attributes of the miner's real identity encrypted with the shared
secret
\item execute the non-membership proof $NonWitnessAdd\left(w,y\right)$
on the cryptographic accumulator
\end{enumerate}
\item register the generated pseudonym, the new public key $pk$, the signature
$sign_{PK}$ and $\pi$: note that they don't reveal the miner's real
identity ($publicCert$, $uniqueID$ and $signatureSecret$ are all
keep as a secret).
\end{enumerate}
\end{enumerate}
The registering node of the permissionless blockchain verifies $\pi$
before adding the new public key, the associated pseudonym, the signature
$sign_{PK}$ and the succinct proof $\pi$: note that the miner is
unable to register multiple pseudonyms, and he can only use one running
node that would be signing messages with the generated secret key
$sk$. Other nodes would be able to efficiently verify $\pi$ to confirm
that the public key $pk$ is a signed by someone from an allowed certificate
authority and that the pseudonym is the miner's unique alias for the
blockchain.\\

\textbf{Taking offline registrations from a permissionless blockchain.
}This protocol takes offline a pseudonym and its associated public
key $pk$ and signature $sign_{PK}$ from a permissionless blockchain.
Miners must execute the following steps to take offline an identity
from a permissionless blockchain:
\begin{enumerate}
\item Generate a zero-knowledge proof $\pi$ (e.g., zk-SNARK, zk-STARK or
zk-SNARG) of the steps 3-7 of the previous protocol to prove secret
knowledge of $sk$ and that he's able to re-generate the pseudonym,
but this time appending the string ``OFF'' to the pseudonym.
\item Anonymously contact the permissionless blockchain to take offline
the generated pseudonym and all its associated data (including the
cryptographic accumulator), attaching $\pi$.
\end{enumerate}
The registering node of the permissionless blockchain verifies $\pi$
before taking offline the pseudonym without learning the real identity
of the miner (publicCert, uniqueID and signatureSecret remain secret).

\subsubsection{Mapping to goals}

The previous protocols achieve the security goals:
\begin{enumerate}
\item The registered miner's key to the blockchain \textit{opens, but no
one can shut; he can shut, but no one can open}. Only someone in possession
of a valid public certificate can create a unique miner identity on
the open blockchain and destroy it. Please note that the signing and
verification of steps 2, 5, 6 and 7 are only needed if it's required
to check that the miner is the real owner of the smartcard/ePassport.
\item Protection against malicious issuers: the passphrase is mandatory,
there's no OCSP checking and the protocol is non-bruteforceable because
it requires the generation of a proof $\pi$ for every passphrase
that is going to be tried (>60 secs per $\pi$).
\item Privacy: miner's real identity can't be learned by anyone because
publicCert and uniqueID are keep secret.
\item Unique pseudonyms: from each identity card/ePassport, only one unique
identifier can be generated because there's only one uniqueID per
citizen.
\item Publicly verifiable: using the proof $\pi$, anyone is able to validate
the miner's public key and its pseudonym.
\end{enumerate}
Additionally, cryptographic accumulators could be added to the protocols
in order to prevent multiple registrations whenever an expired certificate
is renovated.

\subsubsection{Absence of Active Authentication\label{subsec:zkAbsence-of-AA}}

Signing using the secret key of the Active Authentication protocol
provides an extra layer of security: it guarantess that the remote
party executing the protocol owns a physical copy of the ePassport
(i.e., it hasn't stolen a copy of the public certificates from others).
However, some ePassports don't feature Active Authentication, requiring
a simplified version of the previous protocols:
\begin{itemize}
\item Steps 2,6 and 7 are removed.
\item Step 5 doesn't calculate the signature.
\item The zero-knowledge$\pi$ is extended to Step 1, with a password-based
key derivation function using less steps.
\end{itemize}

\subsection{Detailed Authentication Protocols using Mutual Attestation\label{subsec:Detailed-Authentication-Protocol-Remote-Attestation}}

\textcolor{black}{The use of remote attestation protocols guarantee
the efficiency and scalability of the full authentication solution
(i.e., it can easily scale to billions of users). By design, the architecture
has detached the encrypted DB from the mining nodes to maintain the
implementation as blockchain-agnostic as possible: some mining nodes
may include an encrypted DB, but it's not necessary that all mining
nodes include it.}

\begin{figure}[H]
\includegraphics[scale=0.8]{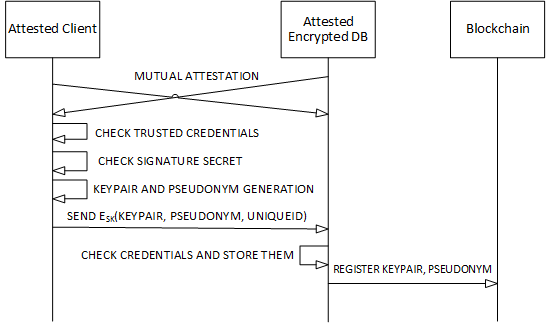}

\caption{\label{fig:Legend:-(1)-Attestation}Simplified overview of mutual
attestation.}
\end{figure}

\subsubsection{Security Goals}

The following security goals must be met for the system to be considered
secure:
\begin{enumerate}
\item The registered miner's key to the blockchain \textit{opens, but no
one can shut; he can shut, but no one can open} (\textit{Isaiah 22:22},
\cite{isaiah2222}). For the security of the system to be considered
equivalent to the currently available permissionless blockchains,
anyone holding a valid public certificate should be able to register
a pseudo-anonymous identity on the blockchain but no one should be
able to remove it (i.e., uncensorable free entry is guaranteed).
\item Protection against malicious issuers: some certification authorities
may turn against some citizens and try to cancel access to the permissionless
blockchain or stole their funds.

\begin{enumerate}
\item Mandatory passphrase. An issuer may counterfeit a certificate with
the same unique identifiers, thus possessing a valid certificate isn't
secure enough and a passphrase is deemed mandatory.
\item Non-bruteforceable. Operations must be computationally costly on the
client side to prevent brute-forcing.
\item No OCSP checking. Prevent against malicious blacklisting.
\end{enumerate}
\item Privacy: miner's real identity can't be learned by anyone.
\item Unique pseudonyms: from each identity card/ePassport, only one unique
identifier can be generated.
\end{enumerate}

\subsubsection{Mutual Attestation for X.509 Certificates\label{subsec:Mutual-Attestation-X509}}

\textbf{Anonymous miner registration of a new public key on a permissionless
blockchain.} This protocol generates a unique pseudonym for each miner,
with a new public key linked to the blockchain-specific pseudonym
that is in turn uniquely linked to the citizen's public key certificate:
the mutual attestation between the parties guarantees the correctness
of the execution of both parties.

The following are the steps to the protocol:
\begin{enumerate}
\item The client locally generates a signature secret using its secret key:
\begin{eqnarray*}
signatureSecret & = & \mbox{PKCS\_Sign}\left(secretKey_{publicCert},\right.\\
 &  & \left."\mbox{PREFIXED\_COMMON\_STRING}"\right)
\end{eqnarray*}
\item Mutual attestation between the authenticating client and the blockchain:
the attestation is anonymous thanks to the use of unlinkable signatures
(Enhanced Privacy ID -EPID-), and both parties obtain a temporary
secret key to encrypt their communications.
\item Client's attested code checks the validity of the certificate chain
of the miner's public key certificate as extracted from the smartcard:

\begin{enumerate}
\item Load the public key of the root certificate.
\item Hash and verify all intermediates, based on their certificate templates,
and the public key of their parent certificate starting from the root
certificate and following with the verified public key from the previous
intermediate certificate template.
\item Hash and verify the miner's public key certificate using the last
verified public key returned from the previous step.
\item Check the time validity of the miner's public key certificate.
\item Check that the miner's public key certificate is contained on a list
of trusted certification authorities.
\end{enumerate}
\item If the previous step concluded satisfactorily, then the client's attested
code verifies the $signatureSecret$ using the miner's public key
certificate $publicCert$ because the $signatureSecret$ is calculated
offline by the smartcard:
\[
\mbox{PKCS\_Verify}\left(publicCert,signatureSecret\right)
\]
\item If the previous step concluded satisfactorily, then the client's attested
code creates a deterministic public/secret key pair based on a secret
passphrase:
\[
pk,sk=\mbox{Det\_KeyPairGen}\left(KDF\left(passphrase,hash(publicCert)\right)\right)
\]
The generation algorithm must be deterministic because the smartcard
may be unable to store them and/or the miner may lose them (i.e.,
as in deterministic wallets). KDF is a password-based key derivation
function (e.g., PBKDF2).
\item The client's attested code generates a deterministic pseudonym using
the blockchain identifier:
\begin{eqnarray*}
pseudonym & = & Hash\left(signatureSecret||BlockchainIdentifier||uniqueID\right)\\
 &  & ||"\mbox{REG}"
\end{eqnarray*}
and it obtains the unique identifier from the miner's public key:
\[
uniqueID=getID(publicCert)
\]
Note that the unique identifier is usually contained on Serial Number
of the certificate, or the Subject Alternative Name extension under
different OIDs, depending on the country.
\item Anonymously contact the attested encrypted database of the permissionless
blockchain to register the generated pseudonym and the new public
key $pk$: the uniqueID is also included using the temporary encrypted
key, but it won't be revealed to the host computer of the blockchain
node because it will only be decrypted within the attested enclave.
\item The blockchain's attested code checks within its encrypted database
that the uniqueID has never been included: then, it proceeds to store
the encrypted uniqueID (i.e., this time with a database secret key
that only resides within the enclaves), the generated pseudonym and
the new public key $pk$.
\item Then, the encrypted database's attested code contacts the permissionless
blockchain to register the generated pseudonym and its new public
key $pk$.
\end{enumerate}
\textbf{Taking offline registrations from a permissionless blockchain.
}This protocol takes offline a pseudonym and its associated public
key $pk$ from a permissionless blockchain. To take offline an identity
from a permissionless blockchain, miners must re-run the previous
protocol to prove that the client is able to re-generate the pseudonym
with the same certificate, but this time appending the string ``OFF''
to the pseudonym.

The registering encrypted database of the permissionless blockchain
verifies that the encrypted uniqueID is included in the database before
taking offline the pseudonym from the permissionless blockchain without
it learning the real identity of the miner.

\subsubsection{Mutual Attestation for ePassports\label{subsec:Mutual-Attestation-ePassports}}

Analogous to the zero-knowledge protocols for X.509 \ref{subsec:Mutual-Attestation-X509},
but now considering the specific details of ePassports \ref{subsec:Electronic-Passport},
which usually contain a unique keypair with the public key on Data
Group 15 and the private key hidden within the chip: the Active Authentication
protocol can be used to sign random challenges that can be verified
with the corresponding public key. Some ePassports don't feature Active
Authentication, nonetheless a modified version of the following protocols
could still be executed (see subsection \ref{subsec:attAbsence-of-AA}).

\textbf{Anonymous miner registration of a new public key on a permissionless
blockchain.} This protocol generates a unique pseudonym for each miner,
with a new public key linked to the blockchain-specific pseudonym
that is in turn uniquely linked to the citizen's ePassport: the mutual
attestation between the parties guarantees the correctness of the
execution of both parties.

The following are the steps to the protocol:
\begin{enumerate}
\item The client locally generates a signature secret using its secret key:
\begin{eqnarray*}
signatureSecret & = & \mbox{Sign}\left(secretKey_{publicCert},\right.\\
 &  & \left."\mbox{PREFIXED\_COMMON\_STRING}"\right)
\end{eqnarray*}
\item Mutual attestation between the authenticating client and the blockchain:
the attestation is anonymous thanks to the use of unlinkable signatures
(Enhanced Privacy ID -EPID-), and both parties obtain a temporary
secret key to encrypt their communications.
\item Client's attested code checks the validity of the Data Security Object
of the miner's ePassport:

\begin{enumerate}
\item Load the public key of the Country Signing Certificate from a trusted
source\cite{icaoPKD} and the Document Signing Certificate from the
ePassport.
\item Hash all the Data Groups and check their equivalence to the Data Security
Object.
\item Verify the signature of the Data Security Object using the Document
Signing Certificate.
\item Verify the signature of the Document Signing Certificate using the
Country Signing Certificate.
\item Check the time validity of the certificates.
\end{enumerate}
\item If the previous step concluded satisfactorily, then the client's attested
code verifies the $signatureSecret$ using the miner's public key
certificate $publicCert$ because the $signatureSecret$ is calculated
offline by the ePassport's chip:
\[
\mbox{PKCS\_Verify}\left(publicCert,signatureSecret\right)
\]
The publicCert is taken from the Data Group 15.
\item If the previous step concluded satisfactorily, then the client's attested
code creates a deterministic public/secret key pair based on a secret
passphrase:
\[
pk,sk=\mbox{Det\_KeyPairGen}\left(KDF\left(passphrase,hash(publicCert)\right)\right)
\]
The generation algorithm must be deterministic because the ePassport
is unable to store them and/or the miner may lose them (i.e., as in
deterministic wallets). KDF is a password-based key derivation function
(e.g., PBKDF2).
\item The client's attested code generates a deterministic pseudonym using
the blockchain identifier:
\begin{eqnarray*}
pseudonym & = & Hash\left(signatureSecret||BlockchainIdentifier||uniqueID\right)\\
 &  & ||"\mbox{REG}"
\end{eqnarray*}
and it obtains the unique identifier from the ePassport:
\[
uniqueID=getID(DataGroups)
\]
Note that the unique identifier is usually contained on the Data Element
“Document Number” of the Data Group 1: as it's legally valid for the
same person to own multiple passports with different Document Numbers,
some countries include a unique ``Personal Number'' on the Data
Group 11. Sign is the Active Authentication protocol executed within
the ePassport's chip, a deterministic signing algorithm executed on
a prefixed string to obtain a unique, non-predictable secret based
on the certificate's owner.
\item Anonymously contact the attested encrypted database of the permissionless
blockchain to register the generated pseudonym and the new public
key $pk$: the uniqueID is also included using the temporary encrypted
key, but it won't be revealed to the host computer of the blockchain
node because it will only be decrypted within the attested enclave.
\item The blockchain's attested code checks within its encrypted database
that the uniqueID has never been included: then, it proceeds to store
the encrypted uniqueID (i.e., this time with a database secret key
that only resides within the enclaves), the generated pseudonym and
the new public key $pk$.
\item Then, the encrypted database's attested code contacts the permissionless
blockchain to register the generated pseudonym and its new public
key $pk$.
\end{enumerate}
\textbf{Taking offline registrations from a permissionless blockchain.
}This protocol takes offline a pseudonym and its associated public
key $pk$ from a permissionless blockchain. To take offline an identity
from a permissionless blockchain, miners must re-run the previous
protocol to prove that the client is able to re-generate the pseudonym
with the same certificate, but this time appending the string ``OFF''
to the pseudonym.

The registering encrypted database of the permissionless blockchain
verifies that the encrypted uniqueID is included in the database before
taking offline the pseudonym from the permissionless blockchain without
it learning the real identity of the miner.

\subsubsection{Mapping to goals}

The previous protocols achieve the security goals:
\begin{enumerate}
\item The registered miner's key to the blockchain \textit{opens, but no
one can shut; he can shut, but no one can open}\cite{isaiah2222}.
Only someone in possession of a valid public certificate can create
a unique miner identity on the open blockchain and destroy it. The
signing and verification operations of steps 1 and 4 are only needed
if it's required to check that the miner is the real owner of the
smartcard/ePassport.
\item Protection against malicious issuers: the passphrase is mandatory,
there's no OCSP checking and the protocol is non-bruteforceable because
it can be rate-limited.
\item Privacy: miner's real identity can't be learned by anyone because
publicCert and uniqueID are keep secret.
\item Unique pseudonyms: from each identity card/ePassport, only one unique
identifier can be generated because there's only one uniqueID per
citizen.
\end{enumerate}
The proposed solution depends on the security of Intel SGX (enclave
and remote attestation protocols): in order to limit the impact of
side-channels attacks on Intel SGX, mining nodes featuring the role
of the Attested Encrypted DB will be restricted to trustworthy nodes.

\subsubsection{Performance Evaluation\label{subsec:Performance-Evaluation}}

\begin{tabular}{|c|c|c|c|c|}
\hline 
\# VMs & Mean Time/Req. & \#Req./Sec & Time/Connections & Total time\tabularnewline
\hline 
\hline 
1 VM & 416 ms & 4.76 & 210 ms & 21 secs\tabularnewline
\hline 
4 VM & 112 ms & 16.5 & 59 ms & 5.9 secs\tabularnewline
\hline 
\end{tabular}

A load testing scenario featuring an Intel Xeon E3-1240 3.5 GHz and
running 1 or 4 virtual machines was performed (with 5 users executing
100 requests per user). Operations like reading and/or signing from
the smartcard were not included in the performance evaluation. The
implementation will be open-sourced.

\subsubsection{Absence of Active Authentication\label{subsec:attAbsence-of-AA}}

Signing using the secret key of the Active Authentication protocol
provides an extra layer of security: it guarantess that the remote
party executing the protocol owns a physical copy of the ePassport
(i.e., it hasn't stolen a copy of the public certificates from others).
However, some ePassports don't feature Active Authentication, requiring
a simplified version of the previous protocols by removing steps 1
and 4.

\subsubsection{Removing Single-Points of Failure}

One of the shortcomings of relying on Intel's Attestation Service
(IAS) is that it becomes a single-point of failure: in practice, Intel
would learn who is performing the attestation. For a public, permissionless
blockchain it would be preferable to remove this trusted third party:
to solve this problem, OPERA\cite{operaSGX} provides the first open,
privacy-preserving attestation service to substitute Intel's Attestation
Service.

\subsubsection{Substituting EPID for DCAP}

Since Intel is deprecating EPID (Enhanced Privacy ID) in favour of
DCAP (Data Center Attestation Primitives \cite{intelDCAP,foundationsDCAP}),
a new re-implementation of the previously described mutual attestation
protocol \ref{subsec:Mutual-Attestation-ePassports} has been carried
out using Occlum\cite{occlum}: additionally, one of the benefits
of DCAP is that it removes Intel as a trusted third party as it doesn't
use Intel's Attestation Service. 

Remote biometric authentication using smartphones has also been improved
with remote attestation and DCAP, as described below \ref{subsec:Remote-attestation-smartphones}.

\subsection{Worldwide Coverage and Distribution\label{subsec:Worldwide-Coverage-and}}

\begin{figure}[H]
\includegraphics[scale=0.3]{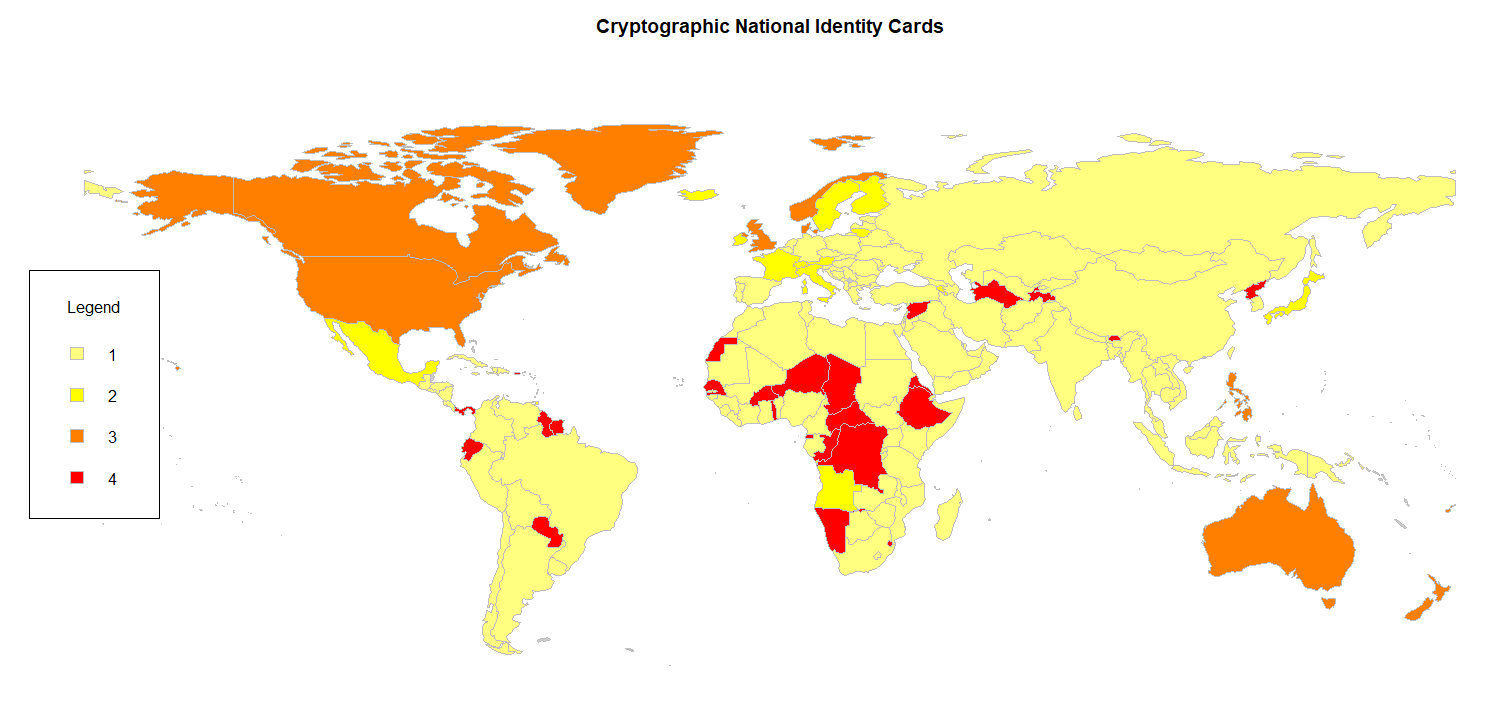}

\caption{\label{fig:Legend:-(1)-National}Legend: (1) National identity card
is a mandatory smartcard; (2) National identity card is a voluntary
smartcard; (3) No national identity card, but cryptographic identification
is possible using an ePassport, driving license and/or health card;
(4) Non-digital identity card.}
\end{figure}

Fortunately, there is a unique cryptographic identifier for most people
in the world: figure \ref{fig:Legend:-(1)-National} shows a worldmap
of the distribution of national identity cards. For some countries,
there is no national identity card -code 3-, but some other unique
cryptographic identifier is available (e.g., ePassport\cite{cryptoeprint:2009:200}
and/or biometric passports as in figure \ref{fig:Availability-of-biometric},
social security card, driving license and/or health card). Transforming
these unique identifiers into anonymous credentials enables the unique
identification of individuals in a permissionless blockchain without
revealing their true identities, making them indistinguishable: that
is, authentication is not only anonymous but permissionless since
there is no need to be pre-invited. Please note the enormous cost
savings resulting from this approach compared to other anonymous credential\cite{cryptoeprint:2013:622,coconut,DBLP:conf/ccs/CamenischDD17}
proposals that would require re-issuing new credentials: for example,
consider that the UK's national identity scheme was estimated at £5.4bn\cite{homeOfficeCostReport}.

In some cases, an individual could obtain multiple cryptographic identifiers
(e.g., multiple nationalities), but their number would still be limited
and certainly less than the number of mining nodes that could be spawned
on PoW permissionless blockchains. Additionally, the true identities
provided by national identity cards could be used for other purposes,
such as non-anonymous accounts identified by their legal identities.

\noindent \begin{center}
\begin{figure}[H]
\includegraphics[scale=0.14]{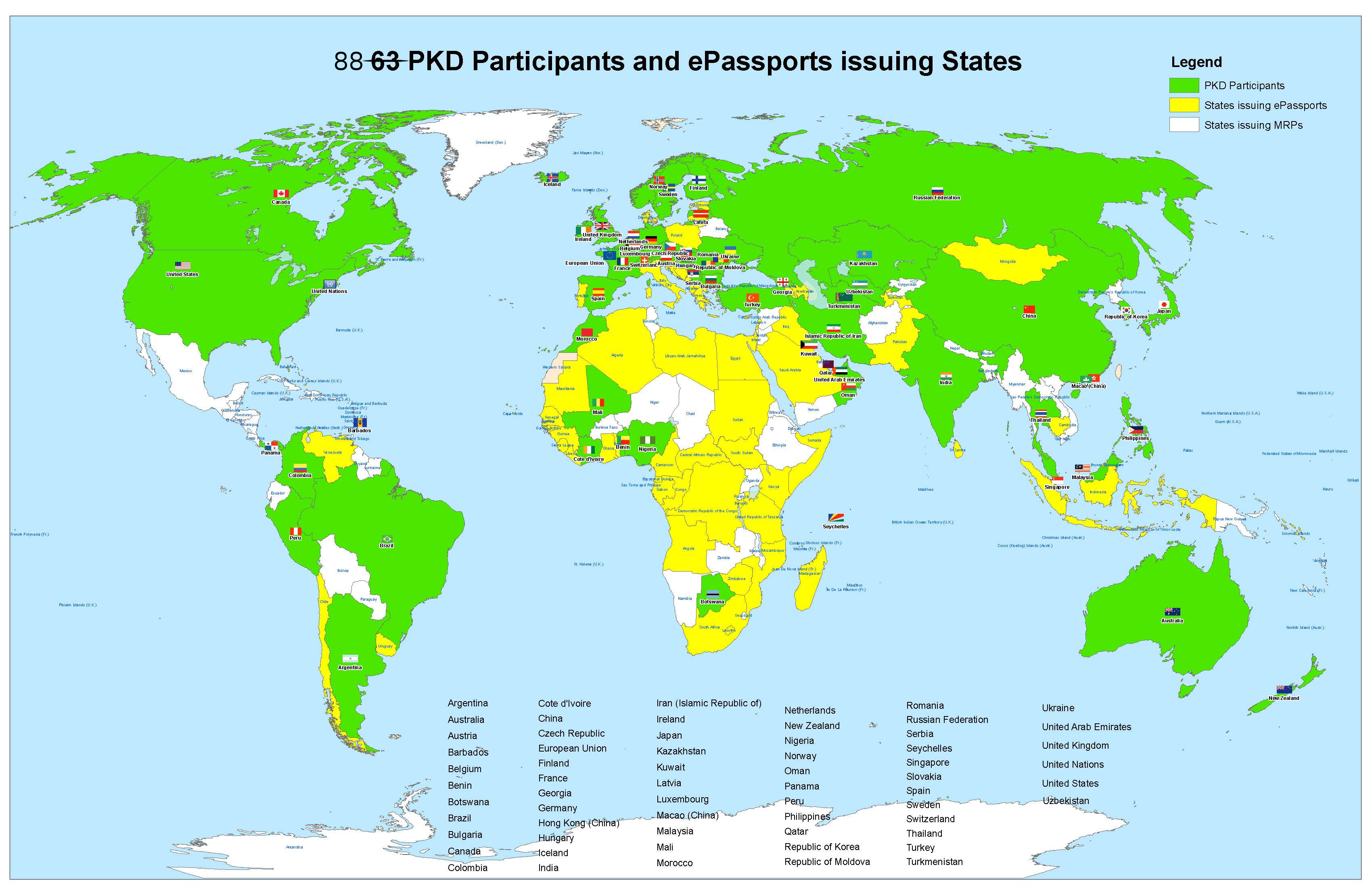}

\caption{\label{fig:Availability-of-biometric}Availability of biometric passports.
Source (ICAO, 2019)}
\end{figure}
\par\end{center}

\noindent \begin{flushleft}
\begin{figure}[H]
\centering{}\includegraphics[scale=0.3]{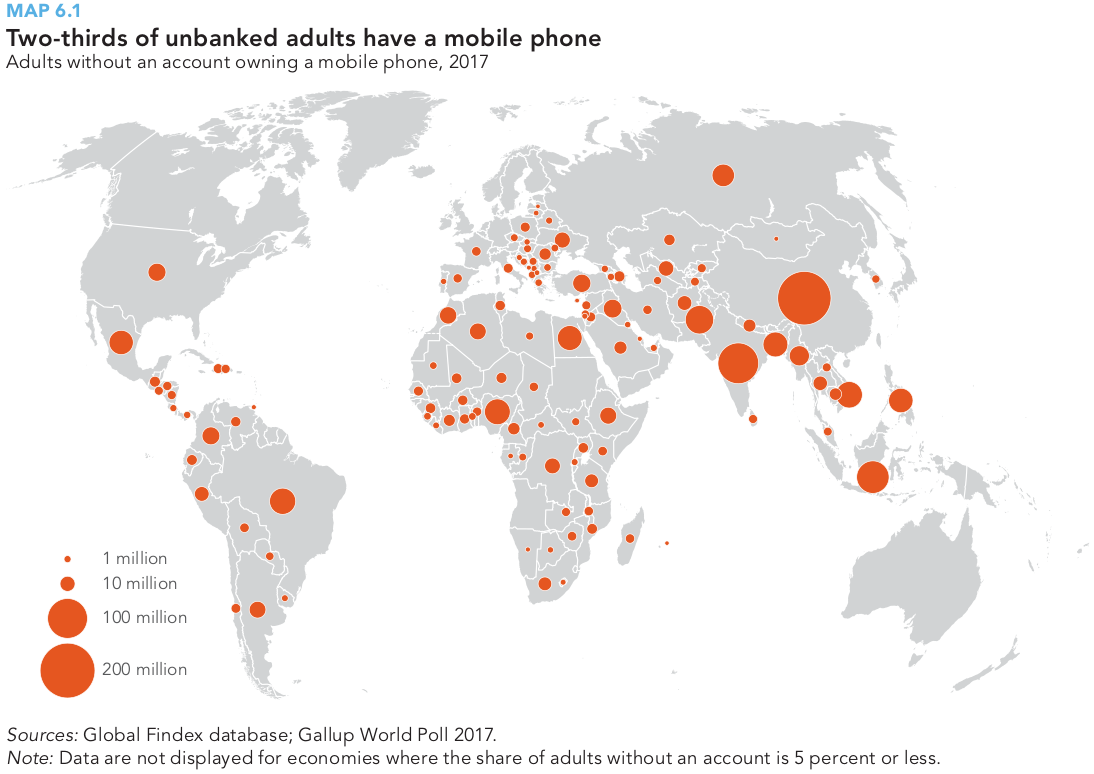}
\end{figure}
\par\end{flushleft}

\subsubsection{eSIM's Public Key Infrastructure}

Latest specifications of SIM cards determine that SIM's identity and
data can be downloaded and remotely provisioned to devices\cite{eSIMwhitepaper}:
instead of the traditional SIM card, there is an embedded SIM (i.e.,
eUICC\cite{eUICCtechnical}) that can store multiple SIM profiles
containing the operator and subscriber data that would be stored on
a traditional SIM card (e.g., IMSI, ICCID, ...).

A novel public key infrastructure has been created in order to protect
the distribution of these new eSIM profiles\cite{gsmaPKI}: every
certified eSIM is signed by its certified manufacturer, with a certificate
that is itself signed by the GSMA root certificate issuer\cite{gsmaRootCertificateIssuer}.
Network operators must also get certified and obtain certificates
for their Subscription Management roles.

The eSIM's PKI provides an alternative identification system for users
where national identity cards and/or ePassports are difficult to obtain,
as they must be unique and non-anonymous (4.13 and 4.1.5\cite{gsmaPKI}),
but only when the mobile operator's KYC processes can be considered
trustworthy.

\newpage{}

\subsubsection{Combining with Non-Zero-Knowledge Authentication}
\noindent \begin{flushleft}
\begin{figure}[H]
\begin{centering}
\includegraphics[scale=0.7]{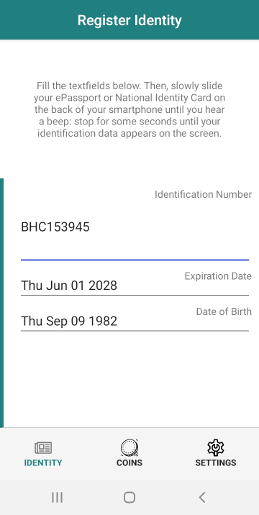}\includegraphics[scale=0.7]{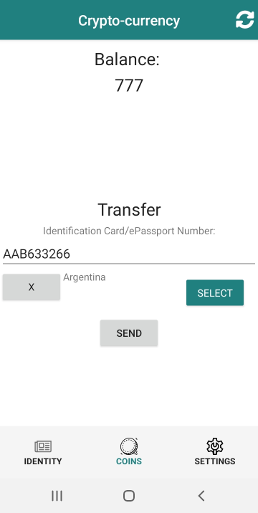}
\par\end{centering}
\caption{\label{fig:Mobile Authentication}Mobile App using Non-Zero-Knowledge
Authentication}
\end{figure}
\par\end{flushleft}

The use of zero-knowledge techniques to authenticate miners doesn't
preclude non-zero-knowledge authentication on the same blockchain:
figure \ref{fig:Mobile Authentication} shows a mobile app\cite{calctopiaApp}
using BAC authentication to read a National Identity Card and/or ePassport
(left), and then transferring to another ePassport (right). All national
identifiers are publicly addressable.

\subsubsection{Remote attestation for smartphones\label{subsec:Remote-attestation-smartphones}}

Remote biometric authentication using smartphones (Android/iOS) has
been improved with remote attestation \cite{feido} based on Intel
SGX Data Center Attestation Primitives:
\begin{itemize}
\item much faster and less memory consumption than solutions based on pure
zero-knowledge proofs, thus even low-end smartphones are supported:
note that pure zero-knowledge solutions may require gigabytes of RAM
to verify the full certificate chain, thus excluding low-end smartphones
(i.e., inclusion is the primary goal of this authentication solution)
\item support for all the complete ciphersuite used in biometric passports/electronic
identity cards: the signatures schemes used in blockchains for zero-knowledge
proofs (e.g., BLS, Secp256k1) are different from the ones used in
government IDs (Brainpool, ED448, ...), thus zero-knowledge implementations
do not even exist yet. Worse still, ICAO standards \cite{icaoDoc9303part10,icaoDoc9303part11}
leave open the use of new elliptic curves described by their parameters,
difficulting the implementation of pure zero-knowledge solutions.
\item better updatability: new techniques for anti-money laundering and/or
sanction screening can be easily implemented by upgrading the server-side
enclaves, while pure zero-knowledge solutions are constrained to previous
data structures
\end{itemize}
Note that the use of remote attestation doesn't exclude the possibility
of using zero-knowledge proofs: both techniques can be combined together,
but relegating zero-knowledge proofs to less demanding cases of selective
disclosure of credentials.

\subsection{Circumventing the Impossibility of Full Decentralization\label{subsec:Circumventing-the-Impossibility}}

Most blockchains using PKIs are consortium blockchains, thus it has
become widespread that they always are permissioned and centralised.
However, the term ``permissionless'' literally means ``without
requiring permission'' (i.e., to access, to join, ...), thus a blockchain
with a PKI could be permissionless if it accepts any self-signed certificate
(i.e., a behaviour conceptually equivalent to Bitcoin), or any certificate
from any government in the world as described in the previous subsection
\ref{subsec:Worldwide-Coverage-and}. In the same way, a blockchain
using PKIs doesn't imply that its control has become centralised,
it means that it accepts identities from said PKIs as described in
this paper: actually, decentralization in the blockchain context strictly
means that the network and the mining are distributed in a large number
of nodes, thus unrelated to authentication.

A recent publication\cite{impossibleDecentralization} proves that
it's impossible for blockchains to be fully decentralised without
real identity management (e.g., PoW, PoS and DPoS) because they cannot
have a positive Sybil cost, defined as the additional cost that a
player should pay to run multiple time nodes compared to the total
cost of when those nodes are run by different players. To reflect
the level of decentralization, they introduce the following definition:
\begin{defn}
($\left(m,\epsilon,\delta\right)$-Decentralization)\cite{impossibleDecentralization}.
For $1\leq m$, $0\leq\epsilon$ and $0\leq\delta\leq100$, a system
is $\left(m,\epsilon,\delta\right)$-decentralized if it satisfies
the following:
\end{defn}

\begin{enumerate}
\item The size of the set of players running nodes in the consensus protocol,
$P$, is not less than $m$ (i.e., $\left|P\right|\geq m$).
\item Define $EP_{p_{i}}$ as the effective power of player $p_{i}$ as
$\sum_{n_{i}\in N_{p_{i}}}\alpha_{n_{i}}$ where $N$ is the set of
all nodes in the consensus protocol and $\alpha_{p_{i}}$ is the resource
power of player $p_{i}$. The ratio between the effective power of
the richest player, $EP_{max}$, and the $\delta-th$ percentile player,
$EP_{\delta}$, is less than or equal to $1+\epsilon$ (i.e., $\left(EP_{max}/EP_{delta}\right)\leq1+\epsilon$).
\end{enumerate}
Ideally, the number $m$ should be as high as possible (i.e., too
many players do not combine into one node); and for the most resourceful
and the $\delta$-th percentile player running nodes, the gap between
their effective power is small. Therefore, full decentralization is
represented by $\left(m,0,0\right)$ for sufficiently large $m$.
\begin{defn}
(Sufficient Conditions for Fully Decentralized Systems)\cite{impossibleDecentralization}.
The four following conditions are sufficient to reach $\left(m,\epsilon,\delta\right)$-decentralization
with probability 1.
\end{defn}

\begin{enumerate}
\item (Give Rewards (GR-$m$)). Nodes with any resource power earn rewards.
\item (Non-delegation (ND-$m$)). It is not more profitable for too many
players to delegate their resource power to fewer participants than
to directly run their own nodes.
\item (No Sybil nodes (NS-$\delta$)). It is not more profitable for a participant
with above the $\delta$-th percentile effective power to run multiple
nodes than to run one node.
\item (Even Distribution (ED-($\epsilon,\delta$)). The ratio between the
resource power of the richest and the $\delta$-th percentile nodes
converges in probability to a value less than $\text{1+\ensuremath{\epsilon}}$.
\end{enumerate}
\begin{thm}
For any initial state, a system satisfying GR-$m$, ND-$m$, NS-$\delta$,
and ED-$\left(\epsilon,\delta\right)$ converges in probability to
$\left(m,\epsilon,\delta\right)$-decentralization. \cite{impossibleDecentralization}
\end{thm}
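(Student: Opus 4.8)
The plan is to check the two defining clauses of $\left(m,\epsilon,\delta\right)$-decentralization one at a time and then glue together the two modes of convergence. First I would establish clause (1), $\left|P\right|\geq m$, as a (probability-one) consequence of the two ``no collapse'' hypotheses. By GR-$m$ every player holding positive resource power earns a reward from running a node, so in the model running at least one node weakly dominates staying out; the only rational way the player count could drop below $m$ is for a large block of players to pool their resource power behind a few delegates, and ND-$m$ says precisely that this is not more profitable than each of them running their own node. Hence, given that the initial state contains at least $m$ resource-holding players, at least $m$ of them run their own node in every realization, so $\left|P\right|\geq m$ surely.

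Next I would establish clause (2), the effective-power ratio bound, and here the key lever is NS-$\delta$. A player whose effective power sits above the $\delta$-th percentile gains nothing by splitting into several Sybil nodes rather than operating one, so in the configuration induced by rational play such a player controls a single node and therefore $EP_{p}=\alpha_{p}$: effective power coincides with resource power for exactly the players that determine the numerator $EP_{max}$ and the denominator $EP_{\delta}$. Consequently $EP_{max}/EP_{\delta}=\alpha_{max}/\alpha_{\delta}$ on the relevant part of the distribution, and ED-$\left(\epsilon,\delta\right)$ asserts exactly that this resource-power ratio converges in probability to a limit strictly below $1+\epsilon$. Hence $EP_{max}/EP_{\delta}\leq 1+\epsilon$ with probability tending to one.

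Finally I would combine the pieces: the event of Step 1 has probability one and the event of Step 2 has probability tending to one, so their intersection also has probability tending to one, and that intersection is precisely the statement that the system converges in probability to $\left(m,\epsilon,\delta\right)$-decentralization. This also reconciles the ``with probability $1$'' phrasing attached to the deterministic clause with the ``in probability'' phrasing attached to the ratio clause, since the weaker mode dominates an intersection.

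The hard part will be the self-referential nature of ``the $\delta$-th percentile player'': the percentile is taken with respect to the effective-power distribution, which is itself endogenous, depending on whether players delegate (governed by ND) and whether they spin up Sybils (governed by NS). So Step 2 really needs a small fixed-point/equilibrium argument showing that, in any configuration consistent with rational play under GR, ND and NS, the membership of the ``above-$\delta$-th-percentile'' set is stable and that on that set effective power equals resource power — only then may ED-$\left(\epsilon,\delta\right)$ be applied to the very index set it is stated for. A secondary subtlety is pinning down the informal quantifier ``too many players'' in ND-$m$ and NS-$\delta$ tightly enough to yield exactly $\left|P\right|\geq m$ and the percentile identification, rather than a weaker bound.
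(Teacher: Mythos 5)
This theorem is not proved in the paper at all: it is imported verbatim from the cited reference \cite{impossibleDecentralization}, and the present paper only uses it as a black box in the sketch of proof of the subsequent theorem about zk-PoI. So there is no in-paper argument to compare yours against; the only fair evaluation is of your attempt on its own terms against the definitions the paper does reproduce.

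On those terms, your decomposition is the natural one and matches how the four conditions are evidently designed to map onto the two clauses of $\left(m,\epsilon,\delta\right)$-decentralization: GR-$m$ and ND-$m$ carry clause (1), NS-$\delta$ and ED-$\left(\epsilon,\delta\right)$ carry clause (2), and the intersection of a probability-one event with events of probability tending to one handles the mode of convergence. But as written it is a plan rather than a proof, and the gaps you flag yourself are the genuine ones. First, GR-$m$, ND-$m$ and NS-$\delta$ are stated as informal incentive conditions (``not more profitable for too many players\ldots''), not as probabilistic hypotheses about the system's dynamics; to extract ``at least $m$ players run their own nodes in every realization'' you need the formal model of \cite{impossibleDecentralization} — its state space, its notion of rational play, and a precise quantifier replacing ``too many'' — none of which is reproduced here, so your Step 1 cannot be closed from what is on the page. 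Second, the endogeneity of the $\delta$-th percentile is a real obstruction, not a ``secondary subtlety'': ED-$\left(\epsilon,\delta\right)$ is stated for \emph{nodes'} resource power while clause (2) is about \emph{players'} effective power, and the bridge (NS-$\delta$ forcing one node per relevant player, hence $EP_{p}=\alpha_{p}$ on the index set that defines both the numerator and the percentile cutoff) requires exactly the fixed-point argument you defer. Your outline is the right shape, but completing it requires the source paper's formal framework rather than the definitions quoted here.
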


As it should be obvious by now, a blockchain using zk-PoI with strong
identities from trusted public certificates (e.g., national identity
cards and/or ePassports \ref{subsec:Worldwide-Coverage-and}) as described
in this paper is the perfect candidate to achieve a fully decentralized
blockchain.
\begin{thm}
A blockchain using zk-PoI with strong identities from trusted public
certificates (e.g., national identity cards and/or ePassports \ref{subsec:Worldwide-Coverage-and})
reaches $\left(m,\epsilon,\delta\right)$-decentralization with probability
1.
\end{thm}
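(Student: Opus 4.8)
The plan is to obtain the statement as an immediate corollary of the preceding theorem on sufficient conditions for full decentralization, by checking that a zk-PoI blockchain --- instantiated on a consensus protocol based on the cryptographic random selection of transaction processing nodes (Section~\ref{sub:Modern-Consensus-based-on}) --- satisfies all four hypotheses GR-$m$, ND-$m$, NS-$\delta$ and ED-$(\epsilon,\delta)$. The structural fact to establish first, from which everything else flows, is that under zk-PoI every registered miner contributes exactly one node and all nodes carry the same resource power $\alpha_{n_i}$: Sybil-resistance is no longer bought with a divisible, accumulable resource (hashrate or stake) but with the possession of a trusted public certificate, and random-selection consensus weights all registered nodes equally. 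Consequently the effective power $EP_{p_i} = \sum_{n_i \in N_{p_i}} \alpha_{n_i}$ of a player is, up to a fixed constant, just the number of distinct credentials that player controls --- which the authentication protocols force to be one per person.

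First I would verify GR-$m$: this is immediate from the incentive-compatible reward protocol (Algorithm~\ref{alg:Incentive-Compatible-Protocol}), under which every node the random beacon selects to process transactions is paid regardless of any accumulated resource, so in particular nodes ``with any resource power'' earn rewards; worldwide coverage (Section~\ref{sub:Worldwide-Coverage-and}) makes the population $|P|$ of certificate holders who elect to run a node at least as large as any desired $m$. Next I would verify NS-$\delta$, the crux of the argument: the \emph{unique pseudonyms} security goal of the authentication protocols (Sections~\ref{sub:Detailed-Authentication-Protocol} and~\ref{sub:Detailed-Authentication-Protocol-Remote-Attestation}) guarantees that each identity card/ePassport yields exactly one registrable on-chain identity, so a participant simply cannot register a second node from the same credential; obtaining a second node requires obtaining a second credential, whose marginal acquisition cost (re-identification by a government, acquiring an additional nationality) strictly exceeds the marginal reward of one extra equally-weighted node. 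Hence for any participant --- a fortiori one above the $\delta$-th percentile --- running multiple nodes is never more profitable than running one. Then ND-$m$ follows because there is no economy of scale in running a random-selection node: the infrastructure is cheap and homogeneous per-node (Section~\ref{sub:Performance-Evaluation}), a delegated unit of power earns its reward for the delegate and not the delegator, and with equal per-node rewards and costs, delegation by many players to few is weakly dominated by each player running its own node. Finally ED-$(\epsilon,\delta)$ is where zk-PoI is strongest: since all registered nodes carry identical resource power, the ratio between the resource power of the richest node and that of the $\delta$-th percentile node is exactly $1$, hence trivially $\le 1+\epsilon$ for every $\epsilon \ge 0$ --- in fact for $\epsilon = 0$ as well. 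With all four hypotheses in place the preceding theorem gives convergence in probability to $(m,\epsilon,\delta)$-decentralization; and because here the ED and NS conditions hold deterministically rather than only in the limit, the conclusion sharpens to holding with probability $1$.

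The main obstacle I expect is making NS-$\delta$ and ED-$(\epsilon,\delta)$ airtight against the small, bounded multiplicity of credentials a single individual may legitimately hold (multiple nationalities, or the fallback identifiers of Figure~\ref{fig:Legend:-(1)-National}): strictly, $\alpha_{n_i}$ is then still uniform but $EP_{p_i}$ can be a small integer greater than $1$, so one must argue either that this multiplicity is $O(1)$ and uncorrelated with wealth (hence negligible against the chosen $\epsilon$, and irrelevant to the $\delta$-th percentile, which among millions of single-credential holders already sits at the uniform value), or restrict $getID$ to a single canonical identifier per person (e.g.\ the Personal Number of Data Group~11) so the multiplicity collapses to exactly $1$ and both conditions hold with $\epsilon=0$. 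A secondary subtlety is that GR-$m$ and ND-$m$ are stated with a ``too many players'' quantifier; I would pin it down with the same incentive-compatibility analysis used in Section~\ref{sub:Strictly-Dominant-Cryptocurrenci}, ensuring the reward schedule never makes delegation a best response for a non-negligible fraction of the population.
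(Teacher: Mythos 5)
Your proposal is correct and takes essentially the same route as the paper: the paper's own (sketch) proof likewise verifies the four sufficient conditions GR-$m$, ND-$m$, NS-$\delta$ and ED-$(\epsilon,\delta)$ from the one-node-per-credential property of zk-PoI and then invokes the preceding convergence theorem, exactly as you do, only in far less detail. The one difference in emphasis is that the paper singles out ND-$m$ (delegation, i.e.\ buying or renting others' identities) as the hardest condition and supplements it with market-based and physical-presence enforcement mechanisms, whereas you place the crux on NS-$\delta$; your observations that ED holds with ratio exactly $1$ and that multiple nationalities give only an $O(1)$ multiplicity are reasonable sharpenings of the paper's one-line sketch rather than a departure from it.
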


\begin{skproof}A blockchain using zk-PoI with strong identities from trusted public certificates effectively limits the number of mining nodes to one per individual (ND-$m$), independently of how resourceful they are (NS-$\delta$, ED-($\epsilon,\delta$)), while keeping membership open to everyone (i.e., achieves a large number of participants (GR-$m$)). The presence of strong identities allows positive Sybil costs, thus the fulfillment of the sufficient conditions for fully decentralized systems\cite{impossibleDecentralization}.
\end{skproof}

Preventing delegation (ND-$m$) is the most difficult condition to
meet:
\begin{itemize}
\item market-enforced: richest participants could buy rights-of-use of others'
identities, but the market value of said identities (e.g., the Net
Present Value of future profits obtained from the exploitation of
said identities by their real owners) should wipe away almost all
the profits from these exchanges.
\item strictly-enforced: miners' software could frequently check for the
presence of the physical trusted public certificate (e.g., national
identity cards and/or ePassports) and/or require them when transferring
funds out of their accounts.
\end{itemize}
A posterior revision of the paper\cite{impossibleDecentralizationv2}
introduces new definitions that try to emphasize that Trusted Third
Parties (i.e., Certificate Authorities) shouldn't be used in decentralized
blockchains: as described in this paper\ref{subsec:Worldwide-Coverage-and},
using 400 CAs of national identity cards and ePassports from over
the world is still being decentralized according to the original definition
of decentralization, and certainly much more decentralized than Bitcoin/Ethereum
that concentrate >50\% of their hashrate in 4-3 entities\cite{decentralizationBitCoinEthereum}.
It's very important not to mix concepts and misattribute qualities
to different concepts:
\begin{itemize}
\item permissionless doesn't imply without identification
\item a decentralized consensus protocol doesn't imply that it can't use
identification from TTPs: recent consensus protocols decouple Sybil-resistance
from the consensus mechanism\ref{subsec:Modern-Consensus-based-on}
\end{itemize}
Thus, permissionless, decentralized and identification(-less) are
different qualities that shouldn't be intermixed. The results of this
paper hold in the \textit{trusted} decentralized setting, while the
results of \cite{impossibleDecentralizationv2} hold in the \textit{trustless}
decentralised setting.

\subsection{Resistance against Dark DAOs}

Dark DAOs\cite{darkDaos} appear as a consequence of permissionless
blockchains where users can create their own multiple identities and
there's no attributability of the actions.
\begin{enumerate}
\item When using real-world identities, it's possible to establish the identity
of the parties running the Dark DAO that are committing frauds (attributability)
or at least, their pseudonyms: therefore, it's possible to punish
them.
\item To prevent Dark DAOs buying real-world identities, a smart contract
can be setup that pays a reward for denouncing the promoters of the
fraud: the whistleblowers would be paid a multiple of what they would
get from the defrauders, thus making denunciation the preferred option.
Then defrauders would be banned as in step 1.
\end{enumerate}

\subsection{Resistance against Collusion and other Attacks}

In this sub-section, we consider different avenues for attack and
provide detailed defense mechanisms:
\begin{enumerate}
\item Corrupt root certificate authorities
\item Attacks against consensus protocols
\item Resistance against collusion
\item On achieving collusion-freeness
\end{enumerate}

\subsubsection{Corrupt Root Certificate Authorities}

Corrupt countries may be tempted to create fake identities or frequently
renovate existing ones: these countries can be easily banned out by
removing them from the list of valid authorities (i.e., root X.509
certificate and/or Country Signing Certificate). Bounties in cryptocurrency
could be offered for whistleblowing any corrupt attack against the
long-term existence of the blockchain.

\subsubsection{Attacks against Consensus Protocols}

Modern consensus protocols based on the cryptographically secure random
choice of the leader (e.g., \cite{dfinityConsensus,cryptoeprint:2017:406})
detect cheating by monitoring changes to the chain quality. The following
table gathers cheating events for different consensus algorithms that
could be detected and punished:

\noindent \begin{center}
\begin{tabular}{|c|>{\raggedright}p{7.5cm}|}
\hline 
Protocol & Cheater detection\tabularnewline
\hline 
\hline 
\multirow{4}{*}{Dfinity\cite{dfinityConsensus}} & Equivocation: multiple blocks for same round with same rank.\tabularnewline
\cline{2-2} 
 & Equivocation: multiple blocks with the highest priority.\tabularnewline
\cline{2-2} 
 & All the blocks must be timely published.\tabularnewline
\cline{2-2} 
 & All the notarizations must be timely published within one round.\tabularnewline
\hline 
\multirow{3}{*}{OmniLedger\cite{cryptoeprint:2017:406,cryptoeprint:2019:676}} & Core validators can detect rogue validators.\tabularnewline
\cline{2-2} 
 & Withholders can be detected after multiple consecutive rounds.\tabularnewline
\cline{2-2} 
 & 5>= failed RandHound views from a rogue validator.\tabularnewline
\hline 
\end{tabular}
\par\end{center}

\subsubsection{Resistance against Collusion}

Consensus protocols already provide collusion-tolerance by design:
an adversary controlling a high number of nodes, or equivalently all
said nodes colluding for the same attack, must confront the difficulties
introduced by shard re-assignment at the beginning of every new epoch.
For the case of OmniLedger\cite{cryptoeprint:2017:406}, the security
of the validator assignment mechanism can be modeled as a random sampling
problem using the binomial distribution,
\[
P\left[X\leq\left\lfloor \frac{n}{3}\right\rfloor \right]=\sum_{k=0}^{n}\left(\begin{array}{c}
n\\
k
\end{array}\right)m^{k}\left(1-m\right)^{n-k}
\]
assuming that each shard has less than $\left\lfloor \frac{n}{3}\right\rfloor $
malicious validators. Then, the failure rate of an epoch is the union
bound of the failures rates of individual shards, each one calculated
as the cumulative distribution over the shard size $n$, with $X$
being the random variable that represents the number of times we pick
a malicious node. An upper bound of the epoch failure event, $X_{E}$,
is calculated as:
\[
P\left[X_{E}\right]\leq\sum_{k=0}^{l}\frac{1}{4^{k}}\cdot n\cdot P\left[X_{S}\right]
\]
where $l$ is the number of consecutive views the adversary controls,
$n$ is the number of shards and $P\left[X_{S}\right]$ is the failure
rate of one individual shard. Finally, for $l\rightarrow\infty$,
we obtain
\[
P\left[X_{E}\right]\leq\frac{4}{3}\cdot n\cdot P\left[X_{S}\right]
\]

\subsubsection{On Achieving Collusion-Freeness}

Start noticing that collusion-freeness is not about preventing malicious
behaviour, only preventing that malicious players act as independently
of each other as possible. Following a previous seminal work\cite{collusionFreeProtocols},
collusion-freeness can only be obtained under very stringent conditions:
(a) the game must be finite; (b) the game must be publicly observable;
and (c) the use of private channels at the beginning of the game is
essential, but forbidden during the execution of the protocol. Although
blockchains are publicly observable, they are also an infinite game
where parties can freely communicate between them using private channels
at any time: therefore, collusion-freeness is impossible in the sense
of \cite{collusionFreeProtocols}.

Fortunately, there is a way to get around this impossibility result:
forbid malicious/Byzantine behaviours requiring the use of mutual
attestation for all the nodes, thus precluding any deviation from
the original protocol.
\begin{conjecture}
If mutual attestation is required for all the nodes, any infinite,
partial-information blockchain game with publicly observable actions
has a collusion-free protocol.
\end{conjecture}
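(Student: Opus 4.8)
The plan is to adapt the mediated-games and collusion-free framework of \cite{collusionFreeProtocols} (Lepinski, Micali and shelat, and its mediated-model refinement by Alwen, shelat and Visconti) to the blockchain setting, and to show that mutual attestation supplies precisely the two ingredients an infinite, private-channel game otherwise lacks. Model each participating node as a pair $(\mathcal{H},\mathcal{E})$ of an untrusted host $\mathcal{H}$ and an attested enclave $\mathcal{E}$ running the canonical protocol code $\Pi$; in the wrapped protocol $\Pi'$ every node, before it is admitted to a round, emits a remote-attestation quote over the measurement of $\Pi$ and every recipient verifies that quote --- this is exactly the ``mutual attestation for all the nodes'' hypothesis --- using an open attestation service such as OPERA\cite{operaSGX} to avoid a single point of trust. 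Declare $\Pi'$ \emph{collusion-free} if for every coalition $C$ of hosts there is a simulator that, given only the publicly observable transcript, reproduces the joint view of $C$'s enclaves by running $|C|$ independent copies of the canonical enclave code; equivalently, no coalition has a steganographic strategy that beats independent play.

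I would then proceed in four steps. First, make all protocol randomness public-coin: $\mathcal{E}$ derives every nonce, leader-election value and signature randomness by a VRF/PRF keyed on the node's registered pseudonym and seeded from the current, publicly observable blockchain state, exactly as the consensus layer of \ref{sub:Modern-Consensus-based-on} already does, so that the content of every message $\mathcal{E}$ emits is a deterministic function of the public transcript and carries no subliminal payload. Second, note that $\mathcal{E}$ reads no input outside the protocol specification --- its only inputs are the public blockchain state --- so a colluding host cannot inject correlated auxiliary input into several enclaves at once; the host's private channels, though unrestricted, are simply not wired to anything the enclave acts upon. Third, invoke the publicly-observable-actions hypothesis: any bit a coalition could exploit must travel either through protocol messages (closed by step one) or through host-to-host private channels (made inert by step two), so the coalition learns nothing beyond the public transcript. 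Fourth, construct the simulator by running the canonical enclave code for each member of $C$ on the public transcript; indistinguishability from the real joint view reduces to unforgeability of the attestation quotes (a deviating enclave cannot produce a valid quote) together with pseudorandomness of the VRF/PRF.

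The infinite-horizon obstruction of \cite{collusionFreeProtocols} --- their finiteness condition --- then evaporates: finiteness is needed there only so that end-of-game verification can catch a deviation after the fact, whereas attestation rejects a deviating node pointwise at every round, so an unbounded horizon merely yields an unbounded sequence of independent attestation checks. Formally, collusion-freeness of $\Pi'$ on the infinite game follows from collusion-freeness of each finite prefix together with a straightforward limiting argument, the prefix simulators being restrictions of a single prefix-monotone simulator.

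The hard part will be closing the covert channels that survive the enclave boundary. The most delicate is the attestation signature itself: an EPID group signature carries fresh randomness, so a quote could encode a few bits per round; the clean fix is to use unique/deterministic attestation signatures, or, failing that, to treat the honest attestation service as the \emph{mediator} in the sense of \cite{collusionFreeProtocols} and prove that it strips this channel --- I expect this reduction to be the real crux of the argument. A second gap is physical side channels on SGX (Foreshadow and successors, already flagged in the remote-attestation subsection), which break enclave isolation and would let a malicious host both extract enclave secrets and steer enclave outputs, so the statement must be made conditional on a side-channel-hardened TEE, consistent with the paper's own mitigation of restricting sensitive roles to trustworthy nodes. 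Finally, timing and network-layer metadata form a residual channel that I would quotient out by assuming the synchronous, publicly scheduled message delivery that the epoch/round structure of \ref{sub:Modern-Consensus-based-on} already imposes. Under these assumptions the conjecture follows; without the first two it fails, which is presumably why it is stated as a conjecture rather than a theorem.
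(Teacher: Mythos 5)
The paper offers no proof of this statement: it is explicitly labelled a conjecture, preceded only by the informal remark that requiring mutual attestation for all nodes ``precludes any deviation from the original protocol.'' There is therefore no argument of the paper's to compare yours against, and your proposal has to be judged on its own terms. On those terms it is a credible and well-informed roadmap that goes considerably further than the paper does --- in particular, you correctly locate the two places where the real mathematical content lives: the attestation signature itself as a subliminal channel (EPID randomness can leak bits per round, so one needs unique signatures or an honest mediator in the sense of \cite{collusionFreeProtocols} to strip it), and the passage from finite prefixes to the infinite game.

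It is not, however, a proof, and you say so yourself. Three gaps are substantive. First, what you argue for is not the conjecture as stated but a conditional variant: you need deterministic or unique attestation signatures, a side-channel-hardened TEE, and synchronous publicly scheduled delivery, none of which appear among the conjecture's hypotheses; under the conjecture as written the EPID channel alone already defeats the claim. Second, the assertion in your second step that the host's private channels are ``inert'' is too quick: the host mediates all of the enclave's network I/O, so selective delay, dropping, reordering, and the decision of whether to participate in a given round at all remain covert channels even when every enclave output is a deterministic function of the public transcript; your closing synchrony assumption only partially quotients these out and is itself an added hypothesis. Third, the limiting argument from collusion-freeness of finite prefixes to the infinite game is asserted rather than given --- collusion-freeness is a simulation-based property and its closure under limits of prefix-monotone simulators is exactly where the finiteness condition of \cite{collusionFreeProtocols} bites, so this step cannot be waved through. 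The statement remains a conjecture after your argument, but your proposal is a useful sharpening of what would need to be proved.
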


As mutual attestation is already required for zk-PoI \ref{subsec:Detailed-Authentication-Protocol-Remote-Attestation},
we would only be extending its use for the rest of the blockchain
protocol.

\section{Incentive Compatible and Strictly Dominant Cryptocurrencies}

The success of cryptocurrencies is better explained by their incentive
mechanisms rather than their consensus algorithms: a cryptocurrency
with poor incentives (e.g., a cryptocurrency not awarding coins to
miners) will not achieve any success; conversely, better incentives
and much more inefficient consensus algorithm could still find some
success.

Much research has been focused on conceiving better consensus algorithms
for decentralised systems and cryptocurrencies\cite{cryptoeprint:2016:918,cryptoeprint:2016:919,cryptoeprint:2017:406,cryptoeprint:2017:454,DBLP:journals/corr/Kokoris-KogiasJ16,dfinityConsensus}:
unfortunately, obtaining consensus mechanisms with better incentives
and economic properties is an area that is lacking much research,
and the combination of all the game-theoretic results contained in
this section fills this gap for the sake of achieving a \textit{focal
point} (i.e., Schelling point\cite{schellingConflict}) in the multi-equilibria
market of cryptocurrencies. Thus, a selective advantage is introduced
by design over all the other cryptocurrencies, in explicit violation
of the neutral model of evolution\cite{2017arXiv170505334E} in order
to obtain an incentive compatible and strictly dominant cryptocurrency.

\subsection{Incentive-Compatible Cryptocurrency\label{subsec:Incentive-Compatible-Cryptocurre}}

Shard-based consensus protocols have been recently introduced in order
to increase the scalability and transaction throughput of public permissionless
blockchains: however, the study of the strategic behaviour of rational
miners within shard-based blockchains is very recent. Unlike Bitcoin,
that grants all rewards to the most recent miner, block rewards and
transactions fees must be proportionally shared between all the members
of the sharding committee\cite{DBLP:journals/corr/Kokoris-KogiasJ16},
and this includes incentives to remain live during all the lifecycle
of the consensus protocol. Even so, existing sharding proposals\cite{cryptoeprint:2017:406,cryptoeprint:2018:460}
remain silent on how miners will be rewarded to enforce their good
behaviour: as it's evident, if all miners are equally rewarded without
detailed consideration of their efforts, rational players will \textit{free-ride}
on the efforts of others. 

One significant difference introduced in this paper with respect to
other shard-based consensus protocols is the use of Zero-Knowledge
Proof-of-Identity as the Sybil-resistance mechanism: as we will see
in the following sections, it's a significant novelty because solving
Proof-of-Work puzzles is the most computationally expensive activity
of consensus protocols, thus it's no longer dominated by computational
costs. This makes the necessity for an incentive-compatible protocol
even more acute: the preferred rational miner's strategy is to execute
the Proof-of-Work of the initial phase of the protocol for each epoch
and to refrain from the transaction verification and consensus of
subsequent phases of the protocol, but still selfishly claim the rewards
as if they had participated. The substitution of costly PoW for cheap
Zero-Knowledge Proof-of-Identity only increases the attractiveness
of this rational strategy, that can only be counteracted by using
an incentive-compatible protocol.

\subsubsection{A Nash Equilibrium for a Cryptocurrency on a Shard-Based Blockchain}

This section is based on a stylised version of a recent game-theoretic
model\cite{2018arXiv180907307M}, taking into consideration that there
is no cost associated with committee formation to enter each shard
since we are using Zero-Knowledge Proof-of-Identity, and not costly
Proof-of-Work: instead, a penalty $p$ is imposed to defective and/or
cheating miners. The following is a list of symbols:

\begin{tabular}{|c|c|}
\hline 
Symbol & Definition\tabularnewline
\hline 
\hline 
$k$ & Number of shards\tabularnewline
\hline 
$N$ & Number of miners\tabularnewline
\hline 
$x_{i}^{j}$ & Vector of received transactions by miner $i$ in shard $j$\tabularnewline
\hline 
$y^{j}$ & Vector of transactions submitted by shard $j$ to blockchain\tabularnewline
\hline 
$c$ & Minimum number of miners in each committee\tabularnewline
\hline 
$\tau$ & Required number of miners in shard for consensus\tabularnewline
\hline 
$r$ & Benefit for each transaction\tabularnewline
\hline 
$b_{i}$ & Benefit of miner $i$ after adding the block\tabularnewline
\hline 
$c_{i}^{t}$ & Total cost of computation for miner $i$\tabularnewline
\hline 
$c^{o}$ & Total optional costs in each epoch\tabularnewline
\hline 
$c^{v}$ & Cost of transaction verification\tabularnewline
\hline 
$c^{f}$ & Fixed costs in optional cost\tabularnewline
\hline 
$p$ & Penalty cost\tabularnewline
\hline 
$BR$ & Block Reward\tabularnewline
\hline 
$l_{j}$ & Number of cooperative miners in each shard\tabularnewline
\hline 
$L$ & Total number of cooperative miners in all shards\tabularnewline
\hline 
$C_{j}^{l_{j}}$ & Set of all cooperative miners in shard $j$\tabularnewline
\hline 
$D_{j}^{n-l_{j}}$ & Set of all defective miners in shard $j$\tabularnewline
\hline 
$C^{L}$ & Set of all cooperative miners\tabularnewline
\hline 
$D^{N-L}$ & Set of all defective miners\tabularnewline
\hline 
$s^{r}$ & Signed receipt of a transaction\tabularnewline
\hline 
\end{tabular}\\

Let $\mathbb{G}$ denote the shard-based blockchain game, defined
as a triplet $\left(P,S,U\right)$ where $P=\left\{ P_{i}\right\} _{i=1}^{N}$
is the set of players, $S=\left\{ C,D\right\} $ is the set of strategies
(Cooperate $C$, or Defect $D$) and $U$ is the set of payoff values.
Each miner receives a reward if and only it has already cooperated
with other miners within the shard, the payoff of cooperative miners
in set $C^{l_{j}}$ is 

\begin{equation}
u_{i}\left(C\right)=\frac{BR}{kl_{j}}+\frac{r\left|y^{j}\right|}{l_{j}}-\left(c^{f}+\left|x_{i}^{j}\right|c^{v}\right)\label{eq:5.1}
\end{equation}
We assume that the block reward $BR$ is uniformly distributed among
shards and each cooperative miner can receive a share of it. A miner
might be cooperative but all other miners may agree on a vector of
transactions $y^{j}$ that is different from his own vector of transactions
$x_{i}^{j}$ (i.e., $\left|x_{i}^{j}\right|\neq\left|y^{j}\right|$):
nonetheless, transaction rewards are uniformly distributed among all
cooperative miners in each shard, proportional to all the transactions
submitted to the blockchain by each shard.

The defective miners' payoff can be calculated as
\[
u_{i}^{D}=-p^{m}
\]
because the defective miners will have to pay a penalty and they will
not receive any benefit (and it doesn't incur in any mandatory cost
such as solving PoW puzzles because we use cheap Zero-Knowledge Proof-of-Identity). 

There exists a cooperative Nash equilibrium profile in game $\mathbb{G}$
under the following conditions:
\begin{thm}
\label{thm:UniqueNashEqulibriumCryptocurrency}Let $C_{j}^{l_{j}}$
and $D_{j}^{m-l_{j}}$ denote the sets of $l_{j}$ cooperating miners
and $n-l_{j}$ defecting miners inside each shard $j$ with $n$ miners,
respectively. $\left(C^{L},D^{N-L}\right)$ represents a Nash equilibrium
profile in each epoch of game $\mathbb{G}$, if the following conditions
are satisfied:

\begin{enumerate}
\item In all shards $j$, $l_{j}\geq\tau$.
\item If for a given miner $P_{i}$ in shard $j$, with $x_{i}^{j}=y^{j}$,
then the number of transactions $\left|x_{i}^{j}\right|$ must be
greater than
\[
\theta_{c}^{1}=\frac{c^{f}-\frac{BR}{kl_{j}}+p}{\nicefrac{r}{l_{j}}-c^{v}}
\]
\item If for a given miner $P_{i}$ in shard $j$, with $x_{i}^{j}\neq y^{j}$,
then the number of transactions $\left|x_{i}^{j}\right|$must be smaller
than
\[
\theta_{c}^{2}=\frac{\frac{BR}{kl_{j}}+\frac{r\left|y^{j}\right|}{l_{j}}-c^{f}-p}{c^{v}}
\]
\end{enumerate}
\end{thm}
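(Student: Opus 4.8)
The plan is to unfold the definition of Nash equilibrium directly: the profile $\left(C^{L},D^{N-L}\right)$ is an equilibrium of $\mathbb{G}$ in a given epoch precisely when no single miner can strictly increase its own payoff by unilaterally switching its action. So I would fix the configuration induced by $\left(C^{L},D^{N-L}\right)$ --- in particular the shard counts $l_{j}$ and the emitted vectors $y^{j}$ --- and verify the two families of inequalities $u_{i}(C)\geq u_{i}^{D}$ for every miner currently playing $C$ in shard $j$, and $u_{i}^{D}\geq u_{i}(C)$ for every miner currently playing $D$, where $u_{i}(C)$ is the cooperator payoff (\ref{eq:5.1}) and $u_{i}^{D}=-p^{m}$ is the flat defector penalty. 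Each of the three stated conditions will emerge either as one of these constraints or as a prerequisite for them to be well posed.

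First I would dispose of Condition~1. Formula (\ref{eq:5.1}) tacitly presumes that shard $j$ actually reaches consensus and emits a vector $y^{j}$, which needs at least $\tau$ cooperating miners; if $l_{j}<\tau$ for some $j$ the cooperators there collect neither block reward nor transaction fees while still paying $c^{f}+\left|x_{i}^{j}\right|c^{v}$, so $D$ strictly dominates $C$ for them and no cooperative profile can be an equilibrium. Hence $l_{j}\geq\tau$ in every shard, which is Condition~1, and from here on the payoffs are the ones written down.

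Next come the incentive constraints. Starting from $u_{i}(C)\geq u_{i}^{D}$ I would collect the terms of (\ref{eq:5.1}) that depend on $\left|x_{i}^{j}\right|$ and split on whether the miner's local view agrees with the shard output. If $x_{i}^{j}=y^{j}$ then $\left|x_{i}^{j}\right|=\left|y^{j}\right|$ and the payoff reads $\tfrac{BR}{kl_{j}}-c^{f}+\left|x_{i}^{j}\right|\!\left(\tfrac{r}{l_{j}}-c^{v}\right)$, so under the natural assumption that a miner's marginal transaction share beats its verification cost, $\tfrac{r}{l_{j}}>c^{v}$, the inequality is equivalent to a \emph{lower} bound $\left|x_{i}^{j}\right|>\theta_{c}^{1}$; this is Condition~2. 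If instead $x_{i}^{j}\neq y^{j}$, the term $\tfrac{r\left|y^{j}\right|}{l_{j}}$ is a constant and only $-\left|x_{i}^{j}\right|c^{v}$ varies with the miner's load, so the same inequality becomes an \emph{upper} bound $\left|x_{i}^{j}\right|<\theta_{c}^{2}$, which is Condition~3. Running the symmetric argument for a miner currently in $D$ --- whose switch to $C$, under Conditions~1--3, leaves it with a payoff no larger than $-p^{m}$ --- closes the equilibrium check.

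The delicate point, and where I expect to spend most of the effort, is the bookkeeping of $l_{j}$ under a unilateral deviation: a cooperator who defects lowers the count to $l_{j}-1$, which perturbs every remaining cooperator's payoff in (\ref{eq:5.1}) and, when $l_{j}=\tau$ exactly, can push the shard below the consensus threshold and annihilate all of its rewards through Condition~1; symmetrically a defector who cooperates raises the count. I would handle this either by adopting the payoff-evaluation convention of the underlying model \cite{2018arXiv180907307M} (payoffs read off the post-deviation configuration) and checking that $\theta_{c}^{1}$ and $\theta_{c}^{2}$ carry enough slack to absorb the $\pm 1$ shift, or by arguing the strictness of the two threshold inequalities directly; this is also exactly the place where the precise placement of the penalty term inside $\theta_{c}^{1}$ versus $\theta_{c}^{2}$ gets pinned down. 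A secondary step is a consistency check that the posited $l_{j}$ and emitted vectors $y^{j}$ can co-exist, so that the configuration we are calling an equilibrium is actually a feasible outcome of the protocol.
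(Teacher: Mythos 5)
Your proposal is correct and follows essentially the same route as the paper's own proof: Condition~1 guarantees the shard actually produces rewards, and the two thresholds are obtained by writing the unilateral no-deviation inequality $u_{i}(C)\geq u_{i}^{D}=-p$ separately for cooperators with $x_{i}^{j}=y^{j}$ and with $x_{i}^{j}\neq y^{j}$ and solving for $\left|x_{i}^{j}\right|$. The additional care you plan to take --- the sign condition $\nicefrac{r}{l_{j}}>c^{v}$ needed to keep Condition~2 a lower bound, the perturbation of $l_{j}$ under a unilateral switch, and the defector-side check --- addresses points the paper's proof silently assumes (it evaluates all payoffs at a fixed $l_{j}$ and dismisses the defector side with the phrase ``largest set of cooperative miners''), and your flag about the placement of the penalty term is well taken, since the paper's theorem statement and its proof in fact disagree on the sign of $p$ in the numerator of $\theta_{c}^{1}$.
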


\begin{proof}
The first condition $l_{j}\geq\tau$ (i.e., the number of cooperative
miners must be greater than $\tau$) must hold so that cooperative
miners will receive benefits for transactions and block rewards.

Let $l_{j}^{*}$ be the largest set of cooperative miners in each
shard, where no miner in $D_{j}^{n-l_{j}}$ can join $C_{j}^{l_{j}}$
to increase its payoff: if miner $P_{i}^{j}$ is among the set of
cooperative miners where $x_{i}^{j}=y^{j}$, then it would not unilaterally
deviate from cooperation if:
\[
\frac{BR}{kl_{j}}+\frac{r\left|x_{i}^{j}\right|}{l_{j}}-\left(c^{f}+\left|x_{i}^{j}\right|c^{v}\right)\geq-p
\]
which shows that $x_{i}^{j}\geq\theta_{c}^{1}$, whereas in the second
condition, 
\[
\theta_{c}^{1}=\frac{c^{f}-\frac{BR}{kl_{j}}-p}{\nicefrac{r}{l_{j}}-c^{v}}
\]
But if $P_{i}^{j}$ is among the cooperators whose vector of transactions
is different from the output of the shard, $x_{i}^{j}\neq y^{j}$,
then it would not deviate from cooperation if:
\[
\frac{BR}{kl_{j}}+\frac{r\left|y^{j}\right|}{l_{j}}-\left(c^{f}+\left|x_{i}^{j}\right|c^{v}\right)\geq-p
\]
which shows that $x_{i}^{j}<\theta_{c}^{2}$, whereas in the third
condition,
\[
\theta_{c}^{2}=\frac{\frac{BR}{kl_{j}}+\frac{r\left|y^{j}\right|}{l_{j}}-c^{f}-p}{c^{v}}
\]
Then if $l_{j}^{\text{*}}$ represents the largest set of cooperative
miners in each shard, then $\left(C^{L},D^{N-L}\right)$ would be
the unique cooperative Nash equilibrium of the game $\mathbb{G}$.
\end{proof}
As can be understood from the proof, cooperative miners have less
incentive to cooperate when: 1) the number of participants $N$ increases;
2) the optional costs of computation increase ($c^{f}$ is in the
numerator or $c^{v}$ in denominator of $\theta_{C}$); 3) or in general,
when the number of transactions is not large enough compared to a
fixed threshold.

\subsubsection{Incentive-Compatible Cryptocurrency on a Shard-Based Coordinated
Blockchain}

In order to increase the incentives to cooperate rather than defect,
an incentive-compatible protocol enforcing cooperation based on the
previously presented Nash equilibrium is introduced here \ref{alg:Incentive-Compatible-Protocol}:
all miners should disclose their list of transactions to a coordinator,
who then announces to each miner whether their cooperation would be
in their interests based on being within the maximum subset of miners
with a similar list of transactions (i.e., $x_{i}^{j}=y^{j}$), and
then enforces their cooperation by checking their compliance and rewarding
them properly.\\
\begin{algorithm}
\noindent\fbox{\begin{minipage}[t]{1\columnwidth - 2\fboxsep - 2\fboxrule}%
function ShardTransactionsAssignment \{

~~~$Shard\leftarrow GetShard(epochRandomness,pseudonym,PK)$

~~~$x_{i}\leftarrow ShardTransactions(Shard)$

\}

function NodeSelection \{

~~~$P_{i}$ send $H\left(x_{i}^{j}\right)$ to Coordinator
\begin{flushleft}
~~~if (PresentNode() == Coordinator) \{
\par\end{flushleft}
~~~~~~Receive all $H\left(x_{i}^{j}\right)$

~~~~~~$l_{j}\leftarrow$Max number of miners with common txs.
from list of $H\left(x_{i}^{j}\right)$

~~~~~~if ($l_{j}<\tau$)

~~~~~~~~~return ``All Defective''

~~~~~~else \{

~~~~~~~~~$C_{j}^{l_{j}}$= list of $l_{j}$miners

~~~~~~~~~Calculate $\theta_{c}^{1}$ and $\theta_{c}^{2}$
from \thmref{UniqueNashEqulibriumCryptocurrency}

~~~~~~~~~return $\theta_{c}^{1}$, $\theta_{c}^{2}$ and
$C_{j}^{l_{j}}$

~~~~~~\}

~~~\}

\}

function ShardParticipation \{

~~~if ($P_{i}\in C_{j}^{l_{j}}$ and $\left|x_{i}^{j}\right|\leq\theta_{c}^{1}$)

~~~~~~return Defect

~~~else if ($P_{i}\notin C_{j}^{l_{j}}$ and $\left|x_{i}^{j}\right|\geq\theta_{c}^{2}$)

~~~~~~return Defect

~~~Verify transactions

~~~$y^{j}$=set of verified transactions by remaining cooperative
$P_{i}$

~~~Consensus on verified transactions

~~~Sign BFT agreement result

~~~return signature, agreed block's header

\}

function VerificationAndRewards \{

~~~Verify cooperation of $P_{i}\in C^{L}$ for each shard

~~~Send rewards to cooperative $P_{i}$ using \eqref{5.1}

\}%
\end{minipage}}

\caption{\label{alg:Incentive-Compatible-Protocol}Incentive-Compatible Protocol
on a Coordinated Shard-Based Blockchain}

\end{algorithm}

The protocol proceeds as follows: for the first function (i.e., \textit{ShardTransactionsAssignment}),
each miner receives a list of transactions $x_{i}^{j}$ to verify
based on the epochRandomness and his pseudonymous identity and public
key obtained by the Zero-Knowledge Proof-of-Identity.

For the second function (i.e., \textit{NodeSelection}), all miners
calculate a hash $H\left(x_{i}^{j}\right)$ over their transaction
list and send it to the coordinator. The coordinator finds the subset
with the maximum number of miners with a common transaction list,
thus calculating $\theta_{c}^{1}$, $\theta_{c}^{2}$, $l_{j}$ and
$C_{j}^{l_{j}}$: in each epoch, the coordinator publicly defines
the list of cooperative miners $C_{j}^{l_{j}}$ and defective miners
$D_{j}^{n-l_{j}}$ using on \thmref{UniqueNashEqulibriumCryptocurrency}.

For the third function (i.e., \textit{ShardParticipation}), all the
transactions of each miner are verified and a signed consensus is
reached.

For the fourth function (i.e., \textit{VerificationAndRewards}), the
rewards are shared between the cooperative miners and denied to those
miners in $C_{j}^{l_{j}}$ that didn't cooperate.

\subsubsection{Improved Incentive-Compatible Cryptocurrency on a Shard-Based Blockchain}

Although the role of a coordinator is essential to BFT protocols\cite{DBLP:journals/corr/Kokoris-KogiasJ16},
its expanded functionality in the previous incentive-compatible protocol
\algref{Incentive-Compatible-Protocol} is problematic: it introduces
latency and network costs due to the new obligations to report to
the coordinator; moreover, it creates new opportunities for malicious
miners which may report false $H\left(x_{i}^{j}\right)$ or not follow
coordinator's instruction to cooperate or defect. The next incentive-compatible
protocol significantly improves over the state of the art: the role
of the coordinator is minimised, strengthing the protocol by removing
the previous vulnerabilities and making it resistant to malicious
miners. 

Information propagation\cite{bitcoinInformationPropagation} is an
essential part of any blockchain, and gossiping transactions to neighbouring
miners is one of its key features. In the new incentive-protocol protocol,
we require that any broadcasted/gossiped/propagated transaction gets
acknowledged with a signed receipt to its sender: then, senders must
attach these receipts to the consensus leaders and verification nodes
in order to ease detection of defective and/or cheating miners. \textcolor{black}{Miners
who were gossiped transactions but didn't participate are considered
defective, and not rewarded}. In other words, the signed receipts
serve as snitches that denounce non-cooperative miners thus preventing
that any reward gets assigned to them: at the same time, all miners
are incentivised to participate in the denunciation in order to gain
the rewards of non-cooperative miners and other \textit{free-riders}.
\begin{algorithm}[h]
\noindent\fbox{\begin{minipage}[t]{1\columnwidth - 2\fboxsep - 2\fboxrule}%
function ShardTransactionsAssignment \{

~~~$Shard\leftarrow GetShard(epochRandomness,pseudonym,PK)$

~~~$x_{i}\leftarrow ShardTransactions(Shard)$

\}

function GossipTransaction \{

~~~GossipTransaction()

~~~$s^{r}$= AcknowledgeTransmission()

~~~Store $s^{r}$

\}

function ReceiveTransaction \{

~~~$tx$=ReceiveTransaction()

~~~ReplyTransaction(sign(hash($tx$)))

\}

function ShardParticipation \{

~~~Verify transactions

~~~Collect lists of $s^{r}$ for every $P_{i}$

~~~$y^{j}$=set of verified transactions by remaining cooperative
$P_{i}$

~~~Consensus on verified transactions

~~~Sign BFT agreement result

~~~return signature, agreed block's header

\}

function VerificationAndRewards \{

~~~Verify cooperation of $P_{i}\in C^{L}$ using lists of $s^{r}$

~~~Send rewards to cooperative $P_{i}$ using \eqref{5.1}

\}%
\end{minipage}}

\caption{\label{alg:Improved-Incentive-Compatible-Protocol}Improved Incentive-Compatible
Protocol on a Shard-Based Blockchain}

\end{algorithm}

In order to save bandwidth, note that it's not obligatory to send
the full list of all signed transaction receipts to consensus leaders
and/or verification nodes: only a random subset per each miner should
be enough to catch defective miners.

Additionally, the absence of signed receipts could be used to detect
the need of a change of a consensus leader (i.e., ``view-change'')
in BFT protocols\cite{DBLP:journals/corr/Kokoris-KogiasJ16,cryptoeprint:2017:406}.

\subsection{On Strictly Dominant Cryptocurrencies}

A cryptocurrency using Zero-Knowledge Proof-of-Identity as the Sybil-resistance
mechanism strictly dominates PoW/PoS cryptocurrencies: a miner having
to choose between mining different cryptocurrencies, one with no costs
associated with its Sybil-resistance mechanism and distributing equally
the rewards, and the others using costly PoW/PoS and thus featuring
mining concentration, will always choose the first one. That is, mining
equally distributed cryptocurrencies using Zero-Knowledge Proof-of-Identity
is a dominant strategy; in other words, the strategy of mining Bitcoin
and other similar cryptocurrencies is strictly dominated by the hereby
described cryptocurrency. \textit{Ceteris paribus}, this cryptocurrency
will have better network effects, thus better long-term valuation.

\subsubsection{Strictly Dominant Cryptocurrencies and a Nash Equilibrium\label{subsec:Strictly-Dominant-Cryptocurrenci}}

The intuition behind the preference to mine fully decentralised cryptocurrencies
with the lowest expenditure (i.e., lowest CAPEX/OPEX implies higher
profitability), thus the search for better hash functions\cite{cryptoeprint:2017:1168,cryptoeprint:2016:989,cryptoeprint:2015:430,cryptoeprint:2016:027},
is formally proved here and then applied to the specific case of the
proposed cryptocurrency.
\begin{defn}
(Power-Law Fee-Concentrated (PLFC) cryptocurrency)\label{def:(Power-law-fee-concentrated-cryp}.
A cryptocurrency whose distribution of mining and/or transaction fees
follows a power-law (i.e., a few entities earn most of the fees/rewards),
usually due to the high costs of its Sybil-resistance mechanism.
\end{defn}

\begin{example}
Proof-of-Work cryptocurrencies are Power-Law Fee-Concentrated: 90\%
of the mining power is concentrated in 16 miners in Bitcoin and 11
in Ethereum\cite{decentralizationBitCoinEthereum}.

Proof-of-Stake cryptocurrencies are Power-Law Fee-Concentrated: miners
earn fees proportional to the amount of money at stake, and wealth
is Pareto-concentrated\cite{Pareto2014-PARMOP-2}.
\end{example}

\begin{defn}
(Uniformly-Distributed Capital-Efficient (UDCE) cryptocurrency)\label{def:(Uniformly-distributed-fee-uncon}.
A cryptocurrency whose distribution of mining and/or transaction fees
is uniformly distributed among all the transaction processing nodes,
and doesn't require significant investments from the participating
miners.
\end{defn}

\begin{example}
The proposed cryptocurrency using Zero-Knowledge Proof-of-Identity
is a Uniformly-Distributed Capital-Efficient cryptocurrency.
\end{example}

\begin{defn}
(Game of Rational Mining of Cryptocurrencies)\label{def:(Game-of-Rational}.
A rational miner ranks the cryptocurrencies according to their expected
mining difficulty, and chooses to mine those with lowest expected
difficulty.
\end{defn}

\begin{example}
Awesome Miner\cite{awesomeMiner}, MinerGate\cite{minerGate}, MultiMiner\cite{multiMiner},
MultiPoolMiner\cite{multipoolMiner}, Smart-Miner\cite{smartMiner,smartMinerPaper}
and NiceHash Miner\cite{niceHash} are practical implementations of
the Game of Rational Mining of Cryptocurrencies (although also considering
their prices in addition to their difficulties). Specific calculators
for mining profitability\cite{whatToMine,2CryptoCalc,CoinWarz,CryptoCompare,CryptoZone,Crypto-CoinZ}
could also be used for the similar purposes. Additionally, other papers\cite{modelMinerHashRate}
provide models regarding optimal hash rate allocation.
\end{example}

Let $u_{i}$ be the payoff or utility function for each miner, expressing
his payoff in terms of the decisions or strategies $s_{i}$ of all
the miners,
\[
u_{i}\left(s_{1},s_{2},\ldots,s_{n}\right)=u_{i}\left(s_{i},s_{-i}\right)
\]
whese $s_{-i}$ are set of the strategies of the rest of miners,
\[
s_{-i}=\left(s_{1},s_{2},\ldots,s_{i-1},s_{i+1},\ldots,s_{n}\right)
\]

\begin{defn}
A strategy $s_{1}$ \textit{strictly dominates a strategy $s_{2}$
for miner $i$} if and only if, for any $s_{-i}$ that miner $i$'s
opponents might use,
\[
u_{i}\left(s_{1},s_{-i}\right)>u_{i}\left(s_{2},s_{-i}\right)
\]
\end{defn}

That is, no matter what the other miners do, playing $s_{1}$ is strictly
better than playing $s_{2}$ for miner $i$. Conversely, we say that
the strategy \textit{$s_{2}$ is strictly dominated by $s_{1}$}:
a rational miner $i$ would never play a strictly dominated strategy.
\begin{defn}
A strategy $s_{i}^{*}$ is a \textit{strictly dominant strategy for
miner $i$} if and only if, for any profile of opponent strategies
$s_{-i}$ and any other strategy $s_{i}^{'}$ that miner $i$ could
choose,
\[
u_{i}\left(s_{i}^{*},s_{-i}\right)>u_{i}\left(s_{i}^{'},s_{-i}\right)
\]
\end{defn}

We now demonstrate that mining \textit{UDCE crypto-cryptocurrencies}
\ref{def:(Uniformly-distributed-fee-uncon} is a strictly dominant
strategy with regard to \textit{PLFC cryptocurrencies \ref{def:(Power-law-fee-concentrated-cryp}}
in the \textit{Game of Rational Mining of Cryptocurrencies} \ref{def:(Game-of-Rational}
by showing that every miner's expected profitability is higher in
UDCE cryptocurrencies.
\begin{thm}
\label{thm:Uniformly-distributed-crypto-cur}UDCE cryptocurrencies
yield a strictly higher miner's expected profitability compared to
PLFC cryptocurrencies in the Game of Rational Mining of Cryptocurrencies.
\end{thm}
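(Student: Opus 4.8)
The plan is to fix an arbitrary rational miner $i$ and an arbitrary profile $s_{-i}$ of the remaining miners' choices, write down $i$'s expected net payoff from mining a PLFC cryptocurrency and from mining a UDCE cryptocurrency, and show the latter strictly exceeds the former. First I would model the PLFC case as a rent-seeking (Tullock-type) contest consistent with Definition~\ref{def:(Power-law-fee-concentrated-cryp}: miner $i$ commits resource power $\alpha_i$ (hash rate or staked capital) at a cost $c_i(\alpha_i)$ that is increasing and, at the margin, proportional to $\alpha_i$, and collects the per-epoch gross reward $R$ with probability $\alpha_i/\sum_j\alpha_j$, so that
\[
\mathbb{E}\!\left[u_i^{\mathrm{PLFC}}(\alpha_i,s_{-i})\right] \;=\; \frac{\alpha_i}{\sum_{j}\alpha_j}\,R \;-\; c_i(\alpha_i).
\]
I would then invoke (citing, or reproducing in two lines) the standard rent-dissipation property of such contests: under free entry the aggregate surplus $R-\sum_j c_j(\alpha_j)$ is non-negative but is squeezed to a vanishing residual $\varepsilon_{\mathrm{PLFC}}$ as the number of competing nodes grows, so the expected profit of a prospective entrant is at most $\varepsilon_{\mathrm{PLFC}}$ and, on the unfavourable branches of the contest lottery, is negative --- which is exactly the concentrated ``a few win most'' outcome asserted in the definition.

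Next I would write the UDCE payoff using equation~\eqref{5.1}: a cooperating miner in shard $j$ earns $\frac{BR}{kl_j}+\frac{r\left|y^{j}\right|}{l_j}-\left(c^{f}+\left|x_i^{j}\right|c^{v}\right)$, in which --- by the defining property of a UDCE cryptocurrency (Definition~\ref{def:(Uniformly-distributed-fee-uncon}) --- there is no Sybil-resistance expenditure and the only costs are the small fixed cost $c^{f}$ and the per-transaction verification cost $c^{v}$. The decisive structural fact is that Zero-Knowledge Proof-of-Identity caps the number of mining nodes at one per individual (Section~\ref{sub:Circumventing-the-Impossibility} and the setting of Theorem~\ref{thm:UniqueNashEqulibriumCryptocurrency}), so $N$, and hence every $l_j$, is bounded above by the finite population of certificate holders; therefore $\frac{BR}{kl_j}$ is bounded below by a strictly positive constant, and under the parameter regime where cooperation is the equilibrium of Theorem~\ref{thm:UniqueNashEqulibriumCryptocurrency} (i.e.\ $\left|x_i^{j}\right|\geq\theta_c^{1}$) the whole expression is strictly positive, deterministic, and --- up to that lower bound --- independent of $s_{-i}$: the same positive amount accrues to every miner.

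The comparison then closes: for every miner $i$ and every opponent profile $s_{-i}$,
\[
\mathbb{E}\!\left[u_i^{\mathrm{UDCE}}\right] \;=\; \frac{BR}{kl_j}+\frac{r\left|y^{j}\right|}{l_j}-c^{f}-\left|x_i^{j}\right|c^{v} \;>\; \varepsilon_{\mathrm{PLFC}} \;\geq\; \mathbb{E}\!\left[u_i^{\mathrm{PLFC}}(\alpha_i,s_{-i})\right],
\]
the middle inequality holding because the left-hand quantity is a fixed positive constant while, in the power-law regime that \emph{defines} a PLFC cryptocurrency, $\varepsilon_{\mathrm{PLFC}}$ is driven below it by competition; hence mining a UDCE cryptocurrency strictly dominates mining a PLFC one in the sense of the strict-dominance definition stated above. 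Finally I would connect this to the Game of Rational Mining of Cryptocurrencies (Definition~\ref{def:(Game-of-Rational}): a UDCE cryptocurrency imposes neither a Proof-of-Work puzzle nor a staking requirement, so its expected mining difficulty is the lowest entry in any ranking a rational miner forms; the miner therefore selects it, and the inequality above shows that this choice is simultaneously the profit-maximising one.

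I expect the principal obstacle to be making the PLFC near-zero-profit claim rigorous while being honest about miner heterogeneity and out-of-equilibrium states: a PLFC incumbent with sunk or unusually cheap hardware, or a large pre-existing stake, can earn strictly positive profit, and a nascent PLFC chain with only a handful of miners need not yet have dissipated its rents. The statement therefore has to be read as concerning the \emph{expected} profitability of a prospective miner in a PLFC cryptocurrency that has reached the concentrated regime of Definition~\ref{def:(Power-law-fee-concentrated-cryp}, and the proof must quantify over that object --- the marginal entrant under free entry --- rather than over a privileged incumbent. A secondary subtlety, and the point at which the structural gap between the two designs actually does the work, is verifying that the UDCE reward per miner does not itself collapse toward zero; this rests squarely on the bounded-population guarantee of zk-PoI, in contrast to the unbounded node count attainable under PoW/PoS, and must be invoked explicitly.
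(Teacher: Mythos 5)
Your proof is correct in substance but follows a genuinely different route from the paper's. The paper argues by a two-case split over the power-law distribution of Definition \ref{def:(Power-law-fee-concentrated-cryp}: for miners on the long tail, the uniform share $\nicefrac{R}{N}$ of a UDCE cryptocurrency ``obviously'' exceeds their near-zero PLFC earnings, while for the few dominant PLFC miners the energy/equipment costs (PoW) or the opportunity cost of staked capital (PoS) are asserted to tip the balance toward UDCE, whose mining cost is negligible. You instead treat all PLFC miners uniformly by modelling the reward allocation as a Tullock-type contest and invoking rent dissipation under free entry to bound \emph{every} miner's expected PLFC profit by a vanishing residual $\varepsilon_{\mathrm{PLFC}}$, and you then lower-bound the UDCE payoff explicitly via equation \eqref{5.1} together with the bounded-population guarantee of zk-PoI (so that $l_{j}$, hence $\nicefrac{BR}{kl_{j}}$, cannot degenerate). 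What your approach buys is a unified, quantifiable argument that does not leave the dominant-miner case as a bare assertion, and it surfaces two hypotheses the paper leaves implicit: that the PLFC chain is in its concentrated, free-entry regime (otherwise an incumbent with sunk or cheap resources can earn strictly positive profit, and the strict inequality can fail for that miner), and that the UDCE per-miner reward is bounded away from zero only because the node count is capped at one per certificate holder. What it costs is an extra modelling commitment (the contest-success-function form and the free-entry limit) that the paper's looser cost-accounting argument avoids; both proofs ultimately rest on the same economic claim that competition dissipates PLFC rents while UDCE rewards do not dilute, so neither is strictly more conclusive than the other, but yours makes the conditional nature of the theorem explicit where the paper's does not.
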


\begin{proof}
Let $N$ be the number of miners and $R$ the average daily minted
reward per day: UDCE cryptocurrencies award an average of $\nicefrac{N}{R}$
units of cryptocurrency to every participant miner. For every miner
on the long tail of the power distribution, the amount earned with
UDCE cryptocurrencies is obviously higher than with PLFC cryptocurrencies.
For the few miners that dominate PLFC cryptocurrencies, their profitability
is lower because they have to account for the energy\cite{natureEnergyCarbonCosts,bitcoinCarbonFootprint}
and equipment costs in the case of PoW cryptocurrencies or the opportunity
cost of staking capital in volatile PoS cryptocurrencies\cite{stakedPoS},
meanwhile in UDCE cryptocurrencies their cost of mining is so negligible
compared to PLFC cryptocurrencies that the balance of profitability
is always tipped in their favour.
\end{proof}
\begin{defn}
\label{def:The-process-to}The process to solve games called \textit{Iterated
Deletion of Strictly Dominated Strategies (IDSDS)} is defined by the
next steps:

\begin{enumerate}
\item For each player, eliminate all strictly dominated strategies.
\item If any strategy was deleted during Step 1, repeat Step 1. Otherwise,
stop.
\end{enumerate}
\end{defn}

If the process eliminates all but one unique strategy profile $s^{*}$,
we say it is the \textit{outcome of iterated deletion of strictly
dominated strategies} or a \textit{dominant strategy equilibrium}.
\begin{defn}
\label{def:pure-nash}A strategy profile $s^{*}=\left(s_{1}^{*},s_{2}^{*},\ldots,s_{n}^{*}\right)$
is a \textit{Pure-Strategy Nash Equilibrium} (PSNE) if, for every
player $i$ and any other strategy $s_{i}^{'}$ that player $i$ could
choose,
\[
u_{i}\left(s_{i}^{*},s_{-i}^{*}\right)\geq u_{i}\left(s_{i}^{'},s_{-i}^{*}\right)
\]
\end{defn}

~
\begin{defn}
\label{def:strict-nash}A strategy profile $s^{*}=\left(s_{1}^{*},s_{2}^{*},\ldots,s_{n}^{*}\right)$
is a \textit{Strict Nash Equilibrium} (SNE) if, for every player $i$
and any other strategy $s_{i}^{'}$ that player $i$ could choose,
\[
u_{i}\left(s_{i}^{*},s_{-i}^{*}\right)>u_{i}\left(s_{i}^{'},s_{-i}^{*}\right)
\]
\end{defn}

Additionally, if a game is solvable by Iterated Deletion of Strictly
Dominated Strategies, the outcome is a Nash equilibrium.
\begin{thm}
\label{thm:UDCE-cryptocurrency-dominates}A UDCE cryptocurrency dominating
PLFC cryptocurrencies is a Nash equilibrium.
\end{thm}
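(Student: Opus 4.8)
The plan is to deduce the statement as an immediate consequence of Theorem~\ref{thm:Uniformly-distributed-crypto-cur} via the Iterated Deletion of Strictly Dominated Strategies procedure of Definition~\ref{def:The-process-to}. First I would fix the relevant normal-form game to be the \emph{Game of Rational Mining of Cryptocurrencies} (Definition~\ref{def:(Game-of-Rational}): the players are the $N$ miners, each miner $i$'s pure-strategy set consists of the available cryptocurrencies to mine --- among them the UDCE cryptocurrency of Definition~\ref{def:(Uniformly-distributed-fee-uncon} and the PLFC cryptocurrencies of Definition~\ref{def:(Power-law-fee-concentrated-cryp} --- and the payoff $u_i(s_i,s_{-i})$ is miner $i$'s expected profitability under the chosen profile. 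The key input, handed over verbatim by Theorem~\ref{thm:Uniformly-distributed-crypto-cur}, is that for every miner $i$ and every opponent profile $s_{-i}$ one has $u_i(\mathrm{UDCE},s_{-i}) > u_i(\mathrm{PLFC},s_{-i})$; that is, ``mine the UDCE cryptocurrency'' strictly dominates every ``mine a PLFC cryptocurrency'' strategy in the sense already defined.

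Second, I would run IDSDS. In the first round, for each player simultaneously delete every PLFC strategy, which is legitimate precisely because each such strategy is strictly dominated by the UDCE strategy. After this single round the only strategy surviving for each miner is ``mine the UDCE cryptocurrency'', so the procedure halts with the unique surviving profile $s^{*}=(\mathrm{UDCE},\dots,\mathrm{UDCE})$ --- a dominant-strategy equilibrium in the terminology of Definition~\ref{def:The-process-to}. Invoking the fact recalled just above, that the outcome of a game solvable by IDSDS is a Nash equilibrium (Definition~\ref{def:pure-nash}), gives that $s^{*}$ is a Nash equilibrium of the game. Moreover, since the profitability gap in Theorem~\ref{thm:Uniformly-distributed-crypto-cur} is \emph{strict}, one in fact has $u_i(s_i^{*},s_{-i}^{*}) > u_i(s_i',s_{-i}^{*})$ for every alternative $s_i'$ and every $i$, so $s^{*}$ is even a Strict Nash Equilibrium (Definition~\ref{def:strict-nash}).

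The hard part is not the IDSDS bookkeeping but making sure the strict inequality of Theorem~\ref{thm:Uniformly-distributed-crypto-cur} really holds \emph{uniformly} over all opponent profiles, which is exactly why that theorem is stated the way it is and is here used as a black box. I would flag the two places where this uniformity is load-bearing: (i) the network-effect, congestion and fee-dilution terms inside $u_i$ must not reverse the ranking when a large fraction of miners crowd onto the UDCE chain --- Theorem~\ref{thm:Uniformly-distributed-crypto-cur} absorbs this by comparing against the negligible CAPEX/OPEX of UDCE mining; and (ii) if several genuinely distinct UDCE cryptocurrencies coexisted, IDSDS would not collapse to a single profile and one would only obtain a set of Nash equilibria among the UDCE coins, so the uniqueness of $s^{*}$ implicitly treats ``the'' zk-PoI cryptocurrency as the salient, Schelling-point choice among them. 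With those caveats made explicit, the argument reduces to one application of strict dominance.
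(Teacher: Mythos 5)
Your proposal is correct and follows essentially the same route as the paper's own proof: invoke Theorem~\ref{thm:Uniformly-distributed-crypto-cur} to establish strict dominance of the UDCE strategy over every PLFC strategy, apply Iterated Deletion of Strictly Dominated Strategies to the Game of Rational Mining of Cryptocurrencies, and conclude a dominant-strategy equilibrium and hence a Nash equilibrium via Definitions~\ref{def:The-process-to}, \ref{def:pure-nash} and \ref{def:strict-nash}. The caveats you flag (uniformity of the strict inequality over opponent profiles, and uniqueness among UDCE coins, which the paper addresses separately in Claim~\ref{claim:(Uniqueness-of-Technical-Solution)}) are reasonable additions but do not change the argument.
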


\begin{proof}
Mining a UDCE cryptocurrency is a strictly dominant strategy with
regard to other miners mining PLFC cryptocurrencies because PLFC cryptocurrencies
are strictly dominated by UCDE cryptocurrencies by Theorem \ref{thm:Uniformly-distributed-crypto-cur},
thus a rational miner will always prefer to miner the latter.

Thus, by the application of Iterated Deletion of Strictly Dominated
Strategies (IDSDS) to the Game of Rational Mining of Cryptocurrencies
\ref{def:(Game-of-Rational}, each miner will eliminate mining PLFC
cryptocurrencies in favor of mining an UDCE cryptocurrency, leaving
this as the unique outcome: therefore, mining a UDCE cryptocurrency
is a \textit{dominant strategy equilibrium} by Definition \ref{def:The-process-to}
and a \textit{Nash equilibrium} by Definition \ref{def:pure-nash}
or by Definition \ref{def:strict-nash}.
\end{proof}
\begin{claim}
\label{claim:(Uniqueness-of-Technical-Solution)}(Uniqueness of Technical
Solution). The proposed technical solution using Zero-Knowledge Proof
of Identity from trusted public certificates (i.e., national identity
cards and/or ePassports) is the only practical and unique solution
for a UCDE cryptocurrency.
\end{claim}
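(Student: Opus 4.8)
The plan is to establish Claim~\ref{claim:(Uniqueness-of-Technical-Solution)} by an elimination argument over the space of Sybil-resistance mechanisms, supplemented by a deployment-cost argument that isolates the reuse of pre-existing trusted certificates. First I would observe that a permissionless UDCE cryptocurrency must rest on \emph{some} Sybil-resistance mechanism, since permissionless consensus without authentication is impossible~\cite{cryptoeprint:2018:302}; it therefore suffices to walk through the catalogue of such mechanisms and show that every one except zero-knowledge Proof-of-Identity violates at least one of the two defining requirements of Definition~\ref{def:(Uniformly-distributed-fee-uncon} --- uniform distribution of mining and transaction fees, and the absence of significant capital investment --- or else fails permissionlessness.

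The second step is the case analysis itself, organised around Table~\ref{tab:Comparison-Sybil} and the related-literature survey. Proof-of-Work, Proof-of-Stake and Private Proof-of-Stake are Power-Law Fee-Concentrated in the sense of Definition~\ref{def:(Power-law-fee-concentrated-cryp} --- hash power concentrates in a handful of pools, and staking rewards are proportional to Pareto-concentrated wealth --- hence disjoint from the UDCE class and, by Theorem~\ref{thm:Uniformly-distributed-crypto-cur}, strictly dominated; Proof-of-Space removes the energy cost but still demands a costly up-front investment in storage, violating capital-efficiency; Proof-of-Authority keeps a fixed roster of pre-approved validators and is therefore not permissionless; and Proof-of-Personhood does not scale to national or international populations, is trivially Sybil-attackable by presenting different disguises at different gatherings, and is unlawful in most jurisdictions, so it is not a \emph{practical} solution either. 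Exhausting the catalogue leaves zero-knowledge proofs of trusted public certificates as the only survivor.

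The third step is to argue uniqueness \emph{within} the zk-PoI family. A zero-knowledge Proof-of-Identity needs a reservoir of trusted, Sybil-resistant credentials issued at planetary scale; minting fresh cryptographic credentials for the world's population is prohibitively expensive --- the UK identity scheme alone was costed in the billions~\cite{homeOfficeCostReport}, and the proposals of~\cite{coconut,pop} endogenise exactly this cost --- whereas electronic national identity cards, ePassports and eSIM profiles already exist in the billions and are backed by trusted PKIs (Sections~\ref{sub:X.509-Public-Key}, \ref{sub:Electronic-Passport} and~\ref{sub:Worldwide-Coverage-and}). Hence the only \emph{practical} instantiation is the one that re-uses these certificates, which is precisely the construction of Section~\ref{sec:Authentication-Protocols}.

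I expect the word ``only'' to be the main obstacle: the elimination argument is only as strong as the exhaustiveness of the catalogue of Sybil-resistance mechanisms, so the statement is really relative --- unique among mechanisms appearing in the literature surveyed here --- and the qualifier ``practical'' is doing essential work, since it is what rules out Proof-of-Personhood and fresh-credential schemes on cost and legality grounds rather than on an impossibility theorem. A fully rigorous version would first need a closed definition of ``practical'' --- say, deployment cost at most polynomial in the population size with a small constant, together with lawfulness under AMLD5/FATF --- after which the argument collapses to a finite, clean case split; absent such a definition, the proof is necessarily a semi-formal, survey-backed elimination rather than a theorem in the strict sense.
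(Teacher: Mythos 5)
Your proposal is correct in substance and ends in the same place as the paper, but it takes a genuinely different route to get there. The paper's own proof is only two sentences long: it invokes Douceur's result from ``The Sybil Attack''~\cite{the-sybil-attack} that Sybil attacks are always possible \emph{unless a trusted identification agency certifies identities}, and then asserts that national identity cards and ePassports (3.5B and 1B issued) are the only globally available source of trusted cryptographic identity, so they are the only way to bootstrap a UCDE cryptocurrency. That single impossibility statement does the work of your entire mechanism-by-mechanism elimination: any mechanism not anchored in a trusted identification agency must substitute resource expenditure (PoW/PoS/PoSpace) for identity, which violates the capital-efficiency half of Definition~\ref{def:(Uniformly-distributed-fee-uncon}, so there is no need to walk the catalogue of Table~\ref{tab:Comparison-Sybil} at all. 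Your second step (fresh credential issuance is prohibitively expensive, hence reuse of existing certificates is forced) coincides with the paper's, just with more supporting citations. What your route buys is explicitness --- the reader sees exactly which defining property each competitor violates --- at the price of depending on the exhaustiveness of the surveyed catalogue, a weakness you correctly flag; what the paper's route buys is generality, since Douceur's dichotomy quantifies over all conceivable mechanisms rather than the ones in the literature, at the price of leaning on an informal reading of that theorem. Your closing observation that ``only'' and ``practical'' are doing unformalized work is well taken, and in fact applies with greater force to the paper's own proof, which never defines either term and treats ``only globally available source of trusted cryptographic identity'' as an empirical premise rather than something to be argued.
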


\begin{proof}
As demonstrated in the paper describing ``The Sybil Attack''\cite{the-sybil-attack},
Sybil attacks are always possible unless a trusted identification
agency certifies identities.

As National Identity Cards and ePassports are the only globally available
source of trusted cryptographic identity (3.5B for National Identity
Cards and 1B for ePassports), the only way to bootstrap a UCDE cryptocurrency
is by using the proposed Zero-Knowledge Proof-of-Identity from trusted
public certificates (National Identity Cards and/or ePassports).
\end{proof}

\subsubsection{Strictly Dominant Cryptocurrencies and Evolutionary Stable Strategies\label{subsec:Evolutionary-Stable-Strategies}}

Another interesting viewpoint to consider in the analysis of the cryptocurrency
market is the one offered by behavioural ecology and its Evolutionary
Stable Strategies \ref{def:ESS}: each cryptocurrency can be considered
a unique individual in a population, genetically programmed to play
a pre-defined strategy. New cryptocurrencies are introduced into the
population as individuals with different mutations that define their
technical features (e.g., forking the code of a cryptocurrency to
change the hashing algorithm, or a zk-PoI cryptocurrency). An Evolutionary
Stable Strategy \ref{def:ESS} is a strategy that cannot be invaded
by any alternative strategy, that is, it can resist to the invasion
of a mutant and it's impenetrable to them: once it's introduced and
becomes dominant in a population, natural selection is sufficient
to prevent invasions from new mutant strategies.
\begin{defn}
\label{def:ESS}The pure strategy $s^{*}$ is an Evolutionary Stable
Strategy\cite{logicAnimalConflict} if there exists $\epsilon_{0}>0$
such that:
\[
\left(1-\epsilon\right)\left(u\left(s^{*},s^{*}\right)\right)+\epsilon\left(u\left(s^{*},s^{'}\right)\right)>\left(1-\epsilon\right)\left(u\left(s^{'},s^{*}\right)\right)+\epsilon\left(u\left(s^{'},s^{'}\right)\right)
\]
for all possible deviations $s^{'}$and for all mutation sizes $\epsilon<\epsilon_{0}$.
There are two conditions for a strategy $s^{*}$ to be an Evolutionary
Stable Strategy: for all $s^{*}\neq s^{'}$ either 
\end{defn}

\begin{enumerate}
\item $u\left(s^{*},s^{*}\right)>u\left(s^{'},s^{*}\right)$, that is, it's
a Strict Nash Equilibrium \ref{def:strict-nash}, \textbf{or},
\item if $u\left(s^{*},s^{*}\right)=u\left(s^{'},s^{*}\right)$ then $u\left(s^{*},s^{'}\right)>u\left(s^{'},s^{'}\right)$
\end{enumerate}
\begin{thm}
Mining a UDCE cryptocurrency is an Evolutionary Stable Strategy.
\end{thm}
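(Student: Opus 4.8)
The plan is to reduce the claim to the first of the two sufficient conditions listed in Definition \ref{def:ESS}, namely that the strategy ``mine the UDCE cryptocurrency'' is a Strict Nash Equilibrium in the Game of Rational Mining of Cryptocurrencies \ref{def:(Game-of-Rational}. First I would fix notation consistent with that definition: let $s^{*}$ denote the strategy ``mine the proposed UDCE cryptocurrency'' and let $s'$ denote an arbitrary mutant strategy, which in the competitive setting of interest is ``mine some PLFC cryptocurrency''; the payoff $u(\cdot,\cdot)$ is the miner's expected profitability, i.e.\ exactly the quantity already compared in Theorem \ref{thm:Uniformly-distributed-crypto-cur}. A miner is then a point in the population, and an invasion is a small fraction $\epsilon$ of the miner population switching to $s'$.

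Second, I would invoke Theorem \ref{thm:Uniformly-distributed-crypto-cur} together with the strict-dominance reasoning already used in Theorem \ref{thm:UDCE-cryptocurrency-dominates}: mining a UDCE cryptocurrency strictly dominates mining a PLFC cryptocurrency, so for \emph{every} opponent profile $s_{-i}$ we have $u(s^{*},s_{-i})>u(s',s_{-i})$. Instantiating this at $s_{-i}=s^{*}$ gives $u(s^{*},s^{*})>u(s',s^{*})$, which is precisely condition (1) of Definition \ref{def:ESS} (equivalently, $s^{*}$ is a Strict Nash Equilibrium in the sense of Definition \ref{def:strict-nash}); hence $s^{*}$ is an Evolutionary Stable Strategy. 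For completeness I would also produce the threshold $\epsilon_{0}$: writing $a=u(s^{*},s^{*})-u(s',s^{*})>0$ and $d=u(s^{*},s')-u(s',s')$, the ESS inequality rearranges to $(1-\epsilon)a+\epsilon d>0$, and since strict dominance also gives $d>0$ (mining UDCE still beats mining PLFC when the rest of the population has mutated), the left-hand side is positive for every $\epsilon\in[0,1]$, so any $\epsilon_{0}\le 1$ suffices and no genuinely infinitesimal-perturbation hypothesis is needed.

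The main obstacle I anticipate is not analytic but definitional: making precise the dictionary between the ``population of cryptocurrencies'' picture sketched in the surrounding text and the ``population of miners choosing which coin to mine'' picture on which Definition \ref{def:ESS} and Theorem \ref{thm:Uniformly-distributed-crypto-cur} actually operate. The subtle point inside this is the congestion effect --- the per-miner UDCE reward $R/N$ shrinks as more miners adopt $s^{*}$ --- which might a priori threaten to reverse the inequality once $\epsilon$ is non-negligible. I would dispatch this by appealing to the fact that Theorem \ref{thm:Uniformly-distributed-crypto-cur} establishes strict dominance uniformly over all opponent profiles, so the congestion is already priced into both $a$ and $d$; I would, however, state explicitly the (mild) standing assumption carried over from that theorem's proof, namely that the UDCE cryptocurrency's negligible mining cost keeps its per-miner payoff strictly above the \emph{net} PLFC payoff (rewards minus energy/equipment or staking opportunity costs) across the entire range of adoption levels, so that the strict inequality is robust to invasion of any size.
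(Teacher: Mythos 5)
Your proposal is correct and follows essentially the same route as the paper: both reduce the claim to the first sufficient condition of Definition \ref{def:ESS} by observing that strict dominance (Theorem \ref{thm:Uniformly-distributed-crypto-cur} via Theorem \ref{thm:UDCE-cryptocurrency-dominates}) yields a Strict Nash Equilibrium, hence an ESS. Your explicit derivation of $\epsilon_{0}$ and the discussion of the congestion effect add rigor the paper omits; the only element of the paper's proof you do not reproduce is its supplementary remark that dominance over national variants/mutants follows from Claim \ref{claim:(Uniqueness-of-Technical-Solution)}, which is an aside rather than part of the core argument.
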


\begin{proof}
Since mining a UCDE cryptocurrency is a strictly-dominant strategy
and a Strict Nash Equilibrium \ref{thm:UDCE-cryptocurrency-dominates},
then it is an Evolutionary Stable Strategy because it fulfills its
first condition \ref{def:ESS}.

Additionally, mining a UCDE cryptocurrency based on the global network
of National Identity Cards and ePassports is an Evolutionary Stable
Strategy over national variants/mutants due to Claim \ref{claim:(Uniqueness-of-Technical-Solution)}.
\end{proof}
Thus, the Game of Rational Mining of Cryptocurrencies \ref{def:(Game-of-Rational}
is a ``survival of the fittest'' ecology, where the cheapest cryptocurrency
to mine offering the higher profits rises above the others.

\subsubsection{Obviating the Price of Crypto-Anarchy\label{subsec:Price-of-Crypto-Anarchy}}

The most cost efficient Sybil-resistant mechanism is the one provided
by a trusted PKI infrastructure\cite{the-sybil-attack} and a centralised
social planner would prefer the use of National Identity Cards and/or
ePassports in order to minimise costs: instead, permissionless blockchains
are paying very high costs by using PoW/PoS as Sybil-resistant mechanisms.
In this paper, Zero-Knowledge Proof-of-Identity is introduced as a
compromise solution between both approaches, thus obtaining a very
efficient Sybil-resistant mechanism with the best of both worlds.

In order to measure how the efficiency of a Sybil-resistant mechanism
degrades due to the selfish behaviour of its agents (i.e., a fixed
amount of block reward to be distributed among a growing and unbounded
number of miners paying high energy costs, as in Bitcoin), we compare
the ratio between the worst Nash equilibrium and the optimal centralised
solution, a concept known as Price of Anarchy in game theory because
it bounds and quantifies the costs of the selfish behavior of the
agents.
\begin{defn}
\textit{The Price of Anarchy}\cite{DBLP:conf/stacs/KoutsoupiasP99}.
Consider a game $G=\left(N,S,u\right)$ defined as a set of players
$N$, strategy sets $S_{i}$ for each player and utilities $u_{i}:S\rightarrow\mathbb{R}$
(where $S=S_{1}x\ldots xS_{n}$ are also called the set of outcomes).
Define a measure of efficiency of each outcome that we want to minimise,
$Cost:S\rightarrow\mathbb{R}$, and let $Equil\subseteq S$ be the
set of strategies in Nash equilibria. The \textit{Price of Anarchy}
is given by the following ratio:
\[
\mbox{Price of Anarchy}=\frac{\max_{s\in Equil}Cost\left(s\right)}{\min_{s\in S}Cost\left(s\right)}
\]
\end{defn}

The competition game between several blockchains and their cryptocurrencies
can be reformulated\cite{altman:hal-01906954} as a congestion game\cite{congestionGames,potentialGames}
(hereby included for completeness), more amenable to the formulations
commonly used for analyzing the Price of Anarchy (the necessity for
the following definitions is already intuited in the Discussion of
\cite{sokGameTheoryCryptocurrencies}): as the number of miners increases,
it also exponentially decreases the chance that a given miner wins
the block reward by being the first to solve the mining puzzle (i.e.,
the system becomes increasingly congested); it has been proved that
free entry is solely responsible for determining the resource usage\cite{bitcoinMarketStructure},
and that the difficulty is not an instrument that can regulate it.

\paragraph{Miners, mining servers and crypto-currencies}

Denote by $\mathcal{N}\coloneqq\left\{ 1,2,\ldots,N\right\} $ the
finite set of miners that alter the utilities of other miners if any
of them change strategies and let $\mathcal{K}\coloneqq\left\{ 1,2,\ldots,K\right\} $
be the set of cryptocurrencies, each associated to exactly one puzzle
that miners are trying to solve. Let $\mathcal{M}\coloneqq\left\{ 1,2,\ldots,M\right\} $denote
the set of Edge computing Service Providers (ESPs), or mining servers
used to offload the costly computational processing.

\paragraph{Strategies}

Let $\mathcal{S}_{i}\subset\mathcal{K}x\mathcal{M}$ denote the set
of ordered pairs (cryptocurrency, ESP) corresponding to ESPs that
miner $i$ can rely on to solve the puzzles of a given cryptocurrency.
A strategy for miner $i$ is denoted by $s_{i}\in\mathcal{S}_{i}$
corresponding to the cryptocurrency (puzzle) which a miner intends
to solve using a given infrastructure. A strategy vector $s\coloneqq\left(s_{i}\right)_{i\in\mathcal{N}}$
produces a load vector $l\coloneqq\left(l_{k,m}\right)_{k,m}$, where
$l_{k,m}$ denotes the number of users mining blockchain $k$ at ESP
$m$.

\paragraph{Rewards, costs, and utilities}

Let $\eta_{k}$ be the load of miners across all ESPs towards cryptocurrency
$k.$ Then,
\[
\eta_{k}\coloneqq\sum_{m^{'}\in\mathcal{M}}l_{k,m'}\mu_{k,m'}
\]
For a given load vector $l$, the time to solve the puzzle of the
$k^{th}$ cryptocurrency is exponentially distributed with expectation
$\nicefrac{1}{\eta_{k}}$. Let $q_{k}$ be the probability that puzzle
$k$ is solved by time $T$,
\[
q_{k}=1-\mbox{exp}\left(-T\eta_{k}\right)
\]
The probability that a given miner using ESP $m$ is the first to
solve puzzle $k$ is
\[
p_{k,m}=1_{l_{k,m>0}}q_{k}\mu_{k,m}/\eta_{k}
\]
where $1_{c}$ equals 1 if condition $c$ holds and 0 otherwise. For
simplification, subscript $m$ can be dropped and we consider a single
ESP. Then, the probability that a miner is the first to solve the
puzzle is
\[
p_{k}\left(l_{k}\right)=\left(1-\mbox{exp}\left(-T\mu_{k}l_{k}\right)\right)/l_{k}
\]

Let $U_{k,m}\left(l\right)$ denote the utility to a miner who tries
to find the solution of the current puzzle associated to cryptocurrency
$k$ using ESP $m$ and $\gamma_{k,m}$ denote the cost of mining
blockchain $k$ at ESP $m$:
\[
U_{k,m}\left(l\right)=\begin{cases}
p_{k,m}-\gamma_{k,m} & \mbox{if }p_{k,m}>\gamma_{k,m},\\
0 & \mbox{otherwise}
\end{cases}
\]
and the utility of a tagged miner to mine a cryptocurrency $k$ when
there are $l_{k}$ miners associated with the same cryptocurrency
is
\[
U_{k}\left(l_{k}\right)=p_{k}-\gamma_{k},\mbox{ if }p_{k}-\gamma_{k}\geq0
\]

\begin{thm}
\label{thm:altmanTheo}\cite{altman:hal-01906954}If for all $i$
and $j$, $S_{i}=S_{j}$ and $s_{i}$ does not depend on $i$, then
the Nash equilibrium is given by the solution of the following optimization
problem,
\begin{eqnarray*}
\mbox{argmin}_{s}\Phi\left(s\right) & \coloneqq & \sum_{k\in\mathcal{K}}\sum_{l=1}^{l_{k}}p_{k}\left(l\right)-\gamma_{k}\\
\mbox{subject to:} &  & \sum_{k\in\mathcal{K}}l_{k}\leq N,l_{k}\geq0
\end{eqnarray*}
\end{thm}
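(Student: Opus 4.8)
The plan is to recognize the mining competition of Theorem~\ref{thm:altmanTheo}, in its symmetric single-ESP reduction, as a \emph{singleton congestion game} and to invoke the potential-game machinery of Rosenthal and Monderer--Shapley. First I would note that, once the ESP index is dropped and all strategy sets coincide, the only feature of a strategy profile $s$ that affects any miner's payoff is the induced load vector $l = (l_k)_{k\in\mathcal{K}}$: a miner who picks cryptocurrency $k$ obtains exactly $U_k(l_k) = p_k(l_k) - \gamma_k$ with $p_k(l_k) = (1-\exp(-T\mu_k l_k))/l_k$, a quantity depending on $s$ only through $l_k$. The outside option ``do not mine'' is encoded as an auxiliary resource $0$ with $U_0 \equiv 0$; this is what turns the resource-count identity $\sum_k l_k = N$ into the inequality $\sum_k l_k \le N$ of the stated program, and the admissible integer vectors $\{ l \ge 0 : \sum_{k} l_k \le N \}$ form a finite set.

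Next I would exhibit Rosenthal's exact potential. Put $\Phi(l) = \sum_{k\in\mathcal{K}} \sum_{j=1}^{l_k} \bigl(p_k(j) - \gamma_k\bigr)$, which is the displayed objective up to the sign/orientation convention that recasts utility maximization as minimization of a disutility. If a miner currently on resource $k$ unilaterally moves to $k'\ne k$, then $l_k$ decreases by one and $l_{k'}$ increases by one, so $\Phi$ changes by $\bigl(p_{k'}(l_{k'}+1) - \gamma_{k'}\bigr) - \bigl(p_k(l_k) - \gamma_k\bigr)$, which is precisely the change $U_{k'}(l_{k'}+1) - U_k(l_k)$ in that miner's own payoff. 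Hence $\Phi$ is an exact potential for the game.

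The remainder is standard. Since the feasible set of load vectors is finite and nonempty, $\Phi$ attains an optimum there, and every optimizer $l^{*}$ is a pure-strategy Nash equilibrium: a profitable unilateral deviation would, by the exact-potential identity, strictly improve $\Phi$, contradicting optimality. This gives ``solution of the program $\Rightarrow$ Nash equilibrium''. For the reading that the equilibrium ``is given by'' the program --- i.e.\ that it is essentially unique --- I would invoke the strict monotonicity of $l \mapsto p_k(l) = (1-\exp(-T\mu_k l))/l$, which is strictly decreasing for $l\ge 1$; strictly decreasing per-miner payoffs place the game in the classical increasing-cost regime, where the equilibrium load vector is unique and therefore coincides with the unique optimizer of the potential.

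The main obstacle I expect is not the potential-game core, which is textbook, but the bookkeeping around the two nonstandard features of this model: the clipping $U_{k} = \max(p_k - \gamma_k,\,0)$ and the outside option. One must verify that these are folded in consistently --- either absorb the clipping into the per-resource payoff functions, which leaves the congestion-game structure intact, or work in the regime $p_k(l)\ge\gamma_k$ in which the written $\Phi$ is literally Rosenthal's potential --- and that $\{ l \ge 0 : \sum_k l_k \le N\}$ over the integers is exactly the set of realizable load vectors. A secondary subtlety is fixing the orientation so that ``$\mathrm{argmin}$'' as written selects the equilibrium and not its worst-case counterpart; this is just a matter of aligning the paper's disutility convention with Rosenthal's.
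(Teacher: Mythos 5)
Your proposal is correct and is essentially the intended argument: the paper itself gives no proof of Theorem~\ref{thm:altmanTheo} (it is imported verbatim from \cite{altman:hal-01906954}), and the cited source establishes it exactly as you do, by recognizing the symmetric single-ESP game as a singleton congestion game whose Rosenthal potential is the displayed objective $\Phi$, so that optimizers of $\Phi$ over the feasible integer load vectors are pure Nash equilibria. Your flagged caveats --- the $\mathrm{argmin}$/$\mathrm{argmax}$ orientation, the outside option that converts $\sum_k l_k = N$ into $\sum_k l_k \le N$, and the clipping of $U_k$ at zero --- are precisely the right bookkeeping points, though the uniqueness claim should be softened to uniqueness of the equilibrium load vector up to ties rather than of the strategy profile.
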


\begin{defn}
Let $NashCongestedEquil\subseteq S$ be the set of strategies given
as solution of the optimization problem of Theorem \ref{thm:altmanTheo},
then the \textit{Price of Crypto-Anarchy} is given by the following
ratio:
\[
\mbox{Price of Crypto-Anarchy=\ensuremath{\frac{\mbox{max}_{\mbox{\ensuremath{s}}\in\mbox{\ensuremath{NashCongestedEquil}}}\mbox{\ensuremath{Cost\left(s\right)}}}{\mbox{\ensuremath{Cost\left(\mbox{zk-PoI}\right)}}}}}
\]
\end{defn}

In practice, the real-world costs of the Zero-Knowledge Proof of Identity
can be considered almost zero because it's subsidised by governments
and thus exogenous to any blockchain system. Quite the opposite, the
energy costs of PoW cryptocurrencies are notoriously high\cite{natureEnergyCarbonCosts,bitcoinCarbonFootprint}:
it is estimated that mining Bitcoin, Ethereum, Litecoin and Monero
consumed an average of 17, 7, 7 and 14 MJ to generate one US\$, respectively;
and that Bitcoin causes around 22 megatons in CO2 emissions annually\cite{bitcoinCarbonFootprint}.

The trivial extension to Proof-of-Stake is left as an exercise to
the reader, although it's not as affordable as it may be seen: as
of March 2019, an average of 40\% of the cryptocurrency supply is
staked at a total of \$4Bn between all PoS blockchains\cite{stakedPoS}.
Actually, Proof-of-Stake is not strictly better than Proof-of-Work
as the distribution of the market shares between both technologies
has been shown to be indistinguishable (Appendix 3, \cite{2017arXiv170505334E}).

\subsubsection{Pareto Dominance on Currency Circulation\label{subsec:Pareto-Dominance-on}}

For completeness, a stylised version of a model of competing currencies\cite{RePEc:nbr:nberwo:22157}
is introduced here to prove that UDCE cryptocurrencies also dominate
PLFC cryptocurrencies on their circulation (i.e., trading, speculating)
due to their stronger network effects, and not only mining as previously
proved. The key observation here is that by definition \ref{def:(Power-law-fee-concentrated-cryp},
the returns of mining PLFC cryptocurrencies is concentrated on a very
limited number of miners and the newly minted cryptocurrency could
be held for long periods of times: otherwise, if they didn't expect
that the held cryptocurrencies would appreciate in time, they would
be mining another set of cryptocurrencies with better expectations.
In direct contrast, the distribution of mining and/or transaction
fees of UDCE cryptocurrencies is uniformly distributed by definition
\ref{def:(Uniformly-distributed-fee-uncon}: therefore, the returns
of the holding strategy after minting them would be lower and their
subsequent circulation much less restricted.

Suppose an economy divided into periods, each period divided into
two subperiods: in the first subperiod, a perishable good demanded
by everyone is produced and consumed in a Centralised Market; in the
second subperiod, buyers who only consume are randomly matched with
sellers who only produce with probability $\sigma\in\left(0,1\right)$
in a Decentralised Market. Let $\beta\in\left(0,1\right)$ denote
the discount factor, $\phi_{t}^{i}\in\mathbb{R}_{+}$ denote the value
of a unit of currency $i\in\left\{ 1,\ldots,N\right\} $ in terms
of the CM food and $\phi_{t}=\left(\phi_{t}^{1},\ldots,\phi_{t}^{N}\right)\in\mathbb{R}_{+}^{N}$
denote the vector of real prices.
\begin{defn}
(Buyers). In a $\left[0,1\right]$-continuum of buyers, $x_{t}^{b}\in\mathbb{R}$
denotes the buyer's net consumption of the CM good and $q_{t}\in\mathbb{R}_{+}$denotes
the consumption of the DM good. The utility function of the buyer's
preferences is given by
\[
U^{b}\left(x_{t}^{b},q_{t}\right)=x_{t}^{b}+u\left(q_{t}\right)
\]
with $u:\mathbb{R}_{+}\rightarrow\mathbb{R}$ continuously differentiable,
increasing and strictly concave with $u'\left(0\right)=\infty$ and
$u\left(0\right)=0$.

Let $W^{b}\left(M_{t-1}^{b},t\right)$ denote the value function for
a buyer who starts period $t$ holding a portfolio $M_{t-1}^{b}\in\mathbb{R}_{+}^{N}$
of cryptocurrencies in the CM and let $V^{b}\left(M_{t}^{b},t\right)$
denote the value function in the DM: the dynamic programming equation
is
\[
W^{b}\left(M_{t-1}^{b},t\right)=\underset{\left(x_{t}^{b},M_{t}^{b}\right)\in\mathbb{R}\times\mathbb{R}_{+}^{N}}{\mbox{max}}\left[x_{t}^{b}+V^{b}\left(M_{t}^{b},t\right)\right]
\]
subject to the budget constraint
\[
\phi_{t}\cdot M_{t}^{b}+x_{t}^{b}=\phi_{t}\cdot M_{t-1}^{b}.
\]
The value for a buyer holding a portfolio $M_{t}^{b}$ in the DM is
\begin{eqnarray*}
V^{b}\left(M_{t}^{b},t\right) & = & \sigma\left[u\left(q\left(M_{t}^{b},t\right)\right)+\beta W^{b}\left(M_{t}^{b}-d\left(M_{t}^{b},t\right),t+1\right)\right]\\
 &  & +\left(1-\sigma\right)\beta W^{b}\left(M_{t}^{b},t+1\right)
\end{eqnarray*}
and let $q^{*}\in\mathbb{R}$ denote the quantity satisfying $u'\left(q^{*}\right)=w'\left(q^{*}\right)$
so that $q^{*}$ gives the surplus-maximizing quantity that determines
the efficient level of production in the DM. The solution to the bargaining
problem is given by 
\[
q\left(M_{t}^{b},t\right)=\begin{cases}
m^{-1}\left(\beta\times\phi_{t+1}\cdot M_{t}^{b}\right) & \mbox{if}\,\phi_{t+1}\cdot M_{t}^{b}<\beta^{-1}\left[\theta w\left(q^{*}\right)+\left(1-\theta\right)u\left(q^{*}\right)\right]\\
q^{*} & \mbox{if}\,\phi_{t+1}\cdot M_{t}^{b}\geq\beta^{-1}\left[\theta w\left(q^{*}\right)+\left(1-\theta\right)u\left(q^{*}\right)\right]
\end{cases}
\]
and
\[
\phi_{t+1}\cdot d\left(M_{t}^{b},t\right)=\begin{cases}
\phi_{t+1}\cdot M_{t}^{b} & \mbox{if}\,\phi_{t+1}\cdot M_{t}^{b}<\beta^{-1}\left[\theta w\left(q^{*}\right)\right.\\
 & \left.+\left(1-\theta\right)u\left(q^{*}\right)\right]\\
\beta^{-1}\left[\theta w\left(q^{*}\right)+\left(1-\theta\right)u\left(q^{*}\right)\right] & \mbox{if}\,\phi_{t+1}\cdot M_{t}^{b}\geq\beta^{-1}\left[\theta w\left(q^{*}\right)\right.\\
 & \left.+\left(1-\theta\right)u\left(q^{*}\right)\right]
\end{cases}
\]
with the function $m:\mathbb{\mathbb{R}}_{+}\rightarrow\mathbb{R}$
defined as
\[
m\left(q\right)\equiv\frac{\left(1-\theta\right)u\left(q\right)w'\left(q\right)+\theta w\left(q\right)u'\left(q\right)}{\theta u'\left(q\right)+\left(1-\theta\right)w'\left(q\right)}.
\]
The optimal portfolio problem can be defined as
\[
\underset{M_{t}^{b}\in\mathbb{R}_{+}^{N}}{\mbox{max}}\left\{ -\phi_{t}\cdot M_{t}^{b}+\sigma\left[u\left(q\left(M_{t}^{b},t\right)\right)-\beta\times\phi_{t+1}\cdot d\left(M_{t}^{b},t\right)\right]+\beta\times\phi_{t+1}\cdot M_{t}^{b}\right\} 
\]
thus the optimal choice satisfies
\begin{equation}
\phi_{t}^{i}=\beta\phi_{t+1}^{i}L_{\theta}\left(\phi_{t+1}\cdot M_{t}^{b}\right)\label{eq:1}
\end{equation}
for every type $i\in\left\{ 1,\ldots,N\right\} $ together with the
transversality condition
\begin{equation}
\underset{t\rightarrow\infty}{\mbox{lim}}\beta^{t}\times\phi_{t}\cdot M_{t}^{b}=0\label{eq:2}
\end{equation}
where $L_{\theta}:\mathbb{R}_{+}\rightarrow\mathbb{R}_{+}$ is given
by
\[
L_{\theta}\left(A\right)=\begin{cases}
\sigma\frac{u'\left(m^{-1}\left(\beta A\right)\right)}{w'\left(m^{-1}\left(\beta A\right)\right)}+1-\sigma & \mbox{if}\,A<\beta^{-1}\left[\theta w\left(q^{*}\right)+\left(1-\theta\right)u\left(q^{*}\right)\right]\\
1 & \mbox{if}\,A\geq\beta^{-1}\left[\theta w\left(q^{*}\right)+\left(1-\theta\right)u\left(q^{*}\right)\right]
\end{cases}
\]
\end{defn}

~
\begin{defn}
(Sellers). In a $\left[0,1\right]$-continuum of sellers, $x_{t}^{s}\in\mathbb{R}$
denotes the seller's net consumption of the CM good and $n_{t}\in\mathbb{R}_{+}$denotes
the seller's effort level to produce the DM good. The utility function
of the seller's preferences is given by
\[
U^{s}\left(x_{t}^{s},n_{t}\right)=x_{t}^{s}-w\left(n_{t}\right)
\]
with $w:\mathbb{R}_{+}\rightarrow\mathbb{R}_{+}$ continuously differentiable,
increasing and weakly convex with $w\left(0\right)=0$.

Let $W^{s}\left(M_{t-1}^{s},t\right)$ denote the value function for
a seller who enters period $t$ holding a portfolio $M_{t-1}^{s}\in\mathbb{R}_{+}^{N}$
of cryptocurrencies in the CM and let $V^{s}\left(M_{t}^{s},t\right)$
denote the value function in the DM: the dynamic programming equation
is
\[
W^{s}\left(M_{t-1}^{s}\right)=\underset{\left(x_{t}^{s},M_{t}^{s}\right)\in\mathbb{R}\times\mathbb{R}_{+}^{N}}{\mbox{max}}\left[x_{t}^{s}+V^{s}\left(M_{t}^{s},t\right)\right]
\]
subject to the budget constraint
\[
\phi_{t}\cdot M_{t}^{s}+x_{t}^{s}=\phi_{t}\cdot M_{t-1}^{s}.
\]
The value for a seller holding a portfolio $M_{t}^{s}$ in the DM
is
\begin{eqnarray*}
V^{s}\left(M_{t}^{s},t\right) & = & \sigma\left[-w\left(q\left(M_{t}^{b},t\right)\right)+\beta W^{s}\left(M_{t}^{s}+d\left(M_{t}^{b},t\right),t+1\right)\right]\\
 &  & +\left(1-\sigma\right)\beta W^{s}\left(M_{t}^{s},t+1\right)
\end{eqnarray*}
\end{defn}

~
\begin{defn}
(Miners). In a $\left[0,1\right]$-continuum of miners of each type-$i\in\left\{ 1,\ldots,N\right\} $
token, $x_{t}^{i}\in\mathbb{R}_{+}$ denotes the miner's consumption
of the CM good and $\triangle_{t}^{i}\in\mathbb{R}_{+}$denotes the
production of the type-$i$ token. The utility function of the miner's
preferences is given by
\[
U^{e}\left(x_{t}^{i},\Delta_{t}\right)=x_{t}^{i}-c\left(\Delta_{t}^{i}\right)
\]
with the cost function $c:\mathbb{R}_{+}\rightarrow\mathbb{R}_{+}$
continuously differentiable, strictly increasing and weakly convex
with $c\left(0\right)=0$.

Let $M_{t}^{i}\in\mathbb{R}_{+}$ denote the per-capita supply of
cryptocurrency $i$ in period $t$ and $\Delta_{t}^{i}\in\mathbb{R}$
denote the miner $i$'s net circulation of newly minted tokens in
period $t$. To describe the miner's problem to determine the money
supply in the economy, we start assuming that all miners solve the
same decision problem, thus the law of motion of type-$i$ tokens
at all date $t\geq0$ is given by

\begin{equation}
M_{t}^{i}=\Delta_{t}^{i}+M_{t-1}^{i}\label{eq:4}
\end{equation}
where $M_{-1}^{i}\in\mathbb{R}_{+}$ denotes the initial stock. The
budget constraint is
\[
x_{j}^{i}=\phi_{t}^{i}\Delta_{t}^{i},
\]
and given that the miner takes prices $\left\{ \phi_{t}\right\} _{t=0}^{\infty}$
as given, the profit maximization of the cryptocurrency emission problem
is solved by
\begin{equation}
\Delta_{t}^{*,i}\in\underset{\Delta\in\mathbb{R}_{+}}{\mbox{arg max}}\left[\phi_{t}^{i}\Delta-c\left(\Delta\right)\right]\label{eq:3}
\end{equation}
\end{defn}

~
\begin{defn}
(Equilibrium). A perfect-foresight monetary equilibrium is an array
$\left\{ M_{t},M_{t}^{b},\Delta_{t}^{*},\phi_{t}\right\} _{t=0}^{\infty}$
satisfying \ref{eq:1}, \ref{eq:2}, \ref{eq:3} and \ref{eq:4} for
each $i\in\left\{ 1,\ldots,N\right\} $ at all dates $t\geq0$ and
satisfying the following market-clearing condition
\begin{equation}
M_{t}=M_{t}^{b}+M_{t}^{s}
\end{equation}
\end{defn}

Suppose that each miner $j$ starts with $M^{i}>0$ units of currency
$i\in\left\{ 1,\ldots,N\right\} $: let $\delta$ denote the fraction
$1-\delta$ of randomly selected miners in each location $j$ at each
date $t\geq0$ who doesn't offer their tokens to sellers because they
are holding them in expectation of their appreciation (i.e., PLFC
cryptocurrencies), so these tokens don't circulate to other $j$ positions
whenever sellers are relocated.

Conversely, an equilibrium with the property that miners don't restrict
the circulation of recently mined tokens (i.e., UDCE cryptocurrencies)
is as follows: the optimal portfolio choice implies the first-order
condition
\[
\frac{u'\left(q\left(M_{t},t\right)\right)}{w'\left(q\left(M_{t},t\right)\right)}=\frac{1}{\beta\gamma_{t+1}^{i}}
\]
for each currency $i$, where $\gamma_{t+1}\in\mathbb{R}_{+}$ represents
the common return across all valued currencies between dates $t$
and $t+1$. The demand for real balances in each location is given
by
\[
z\left(\gamma_{t+1};1\right)\equiv\frac{1}{\gamma_{t+1}}L_{1}^{-1}\left(\frac{1}{\beta\gamma_{t+1}}\right)
\]
because 
\[
\beta\gamma_{t+1}\sum_{i=1}^{N}b_{t}^{i}<\theta w\left(q^{*}\right)+\left(1-\theta\right)u\left(q^{*}\right)
\]
and with 
\[
L_{\delta}\left(A\right)=\begin{cases}
\delta\frac{u'\left(m^{-1}\left(\beta A\right)\right)}{w'\left(m^{-1}\left(\beta A\right)\right)}+1-\delta & \mbox{if}\,A<\beta^{-1}w\left(q^{*}\right)\\
1 & \mbox{if}\,A\geq\beta^{-1}w\left(q^{*}\right)
\end{cases}
\]
Because the market-clearing condition implies
\[
\sum_{i=1}^{N}\phi_{t}^{i}M^{i}=z\left(\gamma_{t+1};1\right)
\]
the equilibrium sequence $\left\{ \gamma_{t}\right\} _{t=0}^{\infty}$
satisfies the law of motion
\[
z\left(\gamma_{t+1};1\right)=\gamma_{t}z\left(\gamma_{t};1\right)
\]
because 
\[
M_{t}^{i}=M_{t-1}^{i}=\Delta^{i}
\]
for each $i$ and provided that $\gamma_{t}\leq t$, and the boundary
condition
\[
\beta\gamma_{t}z\left(\gamma_{t};1\right)\leq w\left(q^{*}\right)
\]
Suppose $u\left(q\right)=\left(1-\eta\right)^{-1}q^{1-\eta}$, with
$0<\eta<1$, and $w\left(q\right)=\left(1+\alpha\right)^{-1}q^{1+\alpha}$
with $\alpha\geq0$. The dynamic system describing the equilibrium
evolution of $\gamma_{t}$ is 
\begin{equation}
\gamma_{t+1}^{\frac{1+\alpha}{\eta+\alpha}-1}=\gamma_{t}^{\frac{1+\alpha}{\eta+\alpha}}\label{eq:28}
\end{equation}

\begin{thm}
The allocation associated with the circulation of UDCE cryptocurrencies
Pareto dominates the allocation with the associated the circulation
of PLFC cryptocurrencies, on a stationary equilibrium with the property
that the quantity traded in the Decentralised Market is given by $\hat{q}\left(1\right)\in\left(\hat{q}\left(\delta\right),q^{*}\right)$
satisfying
\begin{equation}
\frac{u'\left(\hat{q}\left(1\right)\right)}{w'\left(\hat{q}\left(1\right)\right)}=\beta^{-1}\label{eq:29}
\end{equation}
\end{thm}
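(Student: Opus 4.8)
The plan is to read the statement as a comparison, inside the one–parameter family of stationary monetary equilibria of the competing–currencies economy indexed by the circulation fraction $\delta$, between a member with $\delta<1$ (the PLFC configuration, in which a mass $1-\delta$ of freshly minted tokens is withheld from circulation) and the member with $\delta=1$ (the UDCE configuration, in which all tokens circulate), and to show that the $\delta=1$ allocation leaves every agent at least as well off and makes buyers strictly better off.

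First I would pin down the common gross return $\gamma$ along a stationary path. Imposing constancy on the law of motion $z(\gamma_{t+1};\delta)=\gamma_t\,z(\gamma_t;\delta)$ — equivalently on the reduced dynamic system for $\gamma_t$ obtained under the power specifications $u(q)=(1-\eta)^{-1}q^{1-\eta}$ and $w(q)=(1+\alpha)^{-1}q^{1+\alpha}$ — forces $\gamma^{-1}=1$, so that $\gamma=1$ is the unique positive fixed point (the conditions $0<\eta<1$, $\alpha\ge 0$ make the relevant exponent strictly exceed one, ruling out other positive roots). Substituting $\gamma=1$ into the buyer's optimal–portfolio (Euler) condition $L_\delta(A)=1/(\beta\gamma)$, and using that in a stationary equilibrium the real balance $A$ satisfies $m^{-1}(\beta A)=\hat q(\delta)$, yields
\[
\delta\,\frac{u'(\hat q(\delta))}{w'(\hat q(\delta))}+(1-\delta)=\beta^{-1},
\]
which for $\delta=1$ reduces to $u'(\hat q(1))/w'(\hat q(1))=\beta^{-1}$, exactly the first–order condition in the statement, and for $\delta<1$ rearranges to $u'(\hat q(\delta))/w'(\hat q(\delta))=1+\tfrac{1-\beta}{\delta\beta}$.

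Next I would order the quantities and compare welfare. The map $g(q):=u'(q)/w'(q)$ is strictly decreasing on the positive reals because $u$ is increasing and strictly concave while $w$ is increasing and weakly convex, and $g(q^{*})=1$ by the definition of $q^{*}$; since $\delta<1$ gives $\tfrac{1-\beta}{\delta\beta}>\tfrac{1-\beta}{\beta}=\beta^{-1}-1$, strict monotonicity of $g$ delivers $\hat q(\delta)<\hat q(1)<q^{*}$, which is the interiority claim $\hat q(1)\in(\hat q(\delta),q^{*})$. The welfare ranking then follows from the structure of the Decentralised Market: a match generates total surplus $u(q)-w(q)$, strictly increasing on $(0,q^{*})$ because $u'-w'>0$ there, and the proportional bargaining solution encoded in $m(\cdot)$ splits it into a buyer share $\theta\,[u(q)-w(q)]$ and a seller share $(1-\theta)\,[u(q)-w(q)]$ that are respectively strictly and weakly increasing in $q$ on that interval; hence at $\hat q(1)$ the buyer's expected per-period payoff is strictly larger and the seller's is no smaller than at $\hat q(\delta)$. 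Quasi-linearity of Centralised-Market preferences kills any offsetting first-order loss in the CM, and the transversality condition excludes bubble terms that could distort the comparison.

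The last leg — and the step I expect to be the main obstacle — is the miners' side of the Pareto ordering. Here the point is that along a stationary path hoarded balances carry the same gross return $\gamma=1$ as circulated ones, so, with $\beta<1$, deferring the sale of a newly minted token is strictly dominated by selling it at once; thus each miner's emission problem $\max_{\Delta\ge0}[\phi^{i}\Delta-c(\Delta)]$ is solved at the same $\Delta^{*,i}$ in both configurations and its value is no lower under full circulation, so miners are not made worse off. Combined with the strict gains to buyers, this is Pareto dominance. The delicate points are (i) existence and interiority of the two stationary equilibria under the power utilities, which I would obtain from $u'(0)=\infty$, $u(0)=w(0)=0$ and the monotonicity of $g$, together with the constrained (binding real-balance) branch of the bargaining solution being the operative one since $\hat q(\delta),\hat q(1)<q^{*}$; and (ii) that the bargaining split is genuinely in the fixed proportions $\theta:1-\theta$, so that the larger match surplus is transmitted to \emph{each} agent type rather than merely to an aggregate — this is exactly what upgrades the efficiency comparison to a true Pareto comparison, and it is where the argument must be handled with care.
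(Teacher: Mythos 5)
Your proposal is correct and follows essentially the same route as the paper's proof: fix the stationary return $\gamma=1$ from the dynamic system, read the buyer's Euler condition $L_{\delta}(A)=\beta^{-1}$ to obtain $u'(\hat{q}(1))/w'(\hat{q}(1))=\delta\,u'(\hat{q}(\delta))/w'(\hat{q}(\delta))+1-\delta$, and use monotonicity of $u'/w'$ to conclude $\hat{q}(\delta)<\hat{q}(1)<q^{*}$. The only difference is that you make explicit the per-agent welfare accounting (proportional split of the match surplus among buyers and sellers, quasi-linearity in the CM, and the miners' unchanged emission problem) that upgrades $\hat{q}(1)>\hat{q}(\delta)$ to a genuine Pareto ranking, a step the paper asserts without elaboration.
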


\begin{proof}
The sequence $\gamma_{t}=1$ for all $t\geq0$ satifies \ref{eq:28}.
Then, the solution to the optimal portfolio problem implies that the
DM output must satisfy \ref{eq:29}. The quantity $\hat{q}$ satisfies
\[
\frac{u'\left(\hat{q}\left(1\right)\right)}{w'\left(\hat{q}\left(1\right)\right)}=\delta\frac{u'\left(\hat{q}\left(\delta\right)\right)}{w'\left(\hat{q}\left(\delta\right)\right)}+1-\delta
\]
Because $\delta\in\left(0,1\right)$, we have $\hat{q}\left(1\right)>\hat{q}\left(\delta\right)$,
that is, the allocation associated with the circulation of UCDE cryptocurrencies
-$\hat{q}\left(1\right)$- Pareto dominates the allocation associated
with the circulation of PLFC cryptocurrencies -$\hat{q}\left(\delta\right)$-.
\end{proof}

\subsubsection{On Network Effects\label{subsec:On-Network-Effects}}

At the time of the release of this paper, cryptocurrencies have failed
to provide an alternative to traditional payment networks due to a
combination of high transaction fees, high finalization time and high
volatility. The failure to find the favor of merchants is also their
biggest weakness: they aren't part of two-sided networks, and thus
easily replaceable by any newer cryptocurrency better able to create
them. Actually, the first-mover advantage of the most valued cryptocurrencies
is lower than expected if any competing cryptocurrency leverages network
effects from other different sources (e.g., Zero-Knowledge Proof-of-Identity
from trusted PKI certificates).

A simple model is introduced here to analyze the evolution of competing
payment networks: consider the two-sided and incompatible payment
networks of two cryptocurrencies, BTC and zk-PoI, each with their
corresponding groups of merchants $m$ and customers $c$; let $m_{BTC}^{t},m_{zkPOI}^{t}$
denote the number of merchants at time $t$ and $c_{BTC}^{t},c_{zkPOI}^{t}$
the number of customers. A user joins the payment networks at each
time step $t$, with $\lambda$ being the probability of being a customer
and $1-\lambda$ of being a merchant: each merchant prefers to join
BTC or zk-PoI depending on the number of customers in the same network,
thus the probabilities to join one of the networks are given by
\[
\frac{c_{BTC}^{\beta}}{c_{BTC}^{\beta}+c_{zkPOI}^{\beta}},\frac{c_{zkPOI}^{\beta}}{c_{BTC}^{\beta}+c_{zkPOI}^{\beta}}
\]
 and conversely, for customers the probabilities are given by
\[
\frac{m_{BTC}^{\alpha}}{m_{BTC}^{\alpha}+m_{zkPOI}^{\alpha}},\frac{m_{zkPOI}^{\alpha}}{m_{BTC}^{\alpha}+m_{zkPOI}^{\alpha}}.
\]
Note that some categories of users would prefer to use the expected
number of users and not their current tally: forward-looking merchants
that need to invest on equipment to access the payment network are
within this group, thus they would prefer to use expected numbers,
\[
\frac{E\left(c_{BTC}^{\beta}\right)}{E\left(c_{BTC}^{\beta}\right)+E\left(c_{zkPOI}^{\beta}\right)},\frac{E\left(c_{zkPOI}^{\beta}\right)}{E\left(c_{BTC}^{\beta}\right)+E\left(c_{zkPOI}^{\beta}\right)}.
\]
Each user can only join one payment network, modelling the fact that
single-homing is preferred to multi-homing in the real-world, and
the particular network is determined by the distribution of users
on the other side at each time $t$. The parameters $\alpha,\beta>0$
are elasticities of demand with regard to the numbers of users on
the other side of the payment network, effectively acting as measures
of indirect network effects: $\alpha$ can be empirically estimated
by observing joining customers over a small period of time and then
calculating
\[
\alpha=\frac{\mbox{ln }\left(\nicefrac{m_{BTC}^{\alpha}}{\left(m_{BTC}^{\alpha}+m_{zkPOI}^{\alpha}\right)}\right)-\mbox{ln }\left(1-\left(\nicefrac{m_{BTC}^{\alpha}}{\left(m_{BTC}^{\alpha}+m_{zkPOI}^{\alpha}\right)}\right)\right)}{\mbox{ln }m_{BTC}-\mbox{ln }m_{zkPOI}}
\]
and conversely$\beta$ can be empirically estimated by observing joining
merchants and then calculating
\[
\beta=\frac{\mbox{ln }\left(\nicefrac{c_{BTC}^{\beta}}{\left(c_{BTC}^{\beta}+c_{zkPOI}^{\beta}\right)}\right)-\mbox{ln }\left(1-\left(\nicefrac{c_{BTC}^{\beta}}{\left(c_{BTC}^{\beta}+c_{zkPOI}^{\beta}\right)}\right)\right)}{\mbox{ln }c_{BTC}-\mbox{ln }c_{zkPOI}}.
\]

\begin{thm}
(Dominance of the Zero-Knowledge Proof-of-Identity cryptocurrency).
A new cryptocurrency could achieve dominance over previous cryptocurrencies,
overcoming first-mover advantages, if the expected number of accepting
customers would be much higher and the number of merchants using the
previous cryptocurrencies is low.
\end{thm}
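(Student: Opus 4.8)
The plan is to recognize this process as a coupled, generalized P\'olya-urn / stochastic-approximation system and to analyze the ordinary differential equation that it shadows. Write $x_t = m_{zkPOI}^{t}/(m_{BTC}^{t}+m_{zkPOI}^{t})$ for the fraction of merchants on the new network and $y_t = c_{zkPOI}^{t}/(c_{BTC}^{t}+c_{zkPOI}^{t})$ for the fraction of customers; since the total number of merchants grows like $(1-\lambda)t$ and the total number of customers like $\lambda t$, the common scale factors cancel inside the attachment probabilities, so a newly arriving merchant picks zk-PoI with probability $f(y_t):=y_t^{\beta}/\big(y_t^{\beta}+(1-y_t)^{\beta}\big)$ and a newly arriving customer with probability $g(x_t):=x_t^{\alpha}/\big(x_t^{\alpha}+(1-x_t)^{\alpha}\big)$ (with $y_t$ replaced by its rational-expectations value for the forward-looking merchants). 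First I would write the one-step increments of $(x_t,y_t)$, check that the martingale-difference remainder is bounded and that the effective step sizes are of order $1/t$, and invoke the ODE method for stochastic approximation (Arthur--Ermoliev--Kaniovski; Bena\"im) to conclude that $(x_t,y_t)$ converges almost surely to the set of equilibria of
\begin{align*}
\dot x &= f(y)-x, & \dot y &= g(x)-y .
\end{align*}

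Second, I would classify those equilibria. The corners $(0,0)$ and $(1,1)$ are always fixed points, and since $f(\tfrac12)=g(\tfrac12)=\tfrac12$ there is the symmetric interior fixed point $(\tfrac12,\tfrac12)$, which for the power-law weights above is generically the only one. The Jacobian at $(\tfrac12,\tfrac12)$ has diagonal entries $-1$ and off-diagonal entries $f'(\tfrac12)=\beta$, $g'(\tfrac12)=\alpha$, hence eigenvalues $-1\pm\sqrt{\alpha\beta}$; so whenever the indirect network effects are strong, $\alpha,\beta>1$, the interior point is a saddle while $(0,0)$ and $(1,1)$ are locally asymptotically stable. Because a stochastic-approximation trajectory converges to a linearly unstable equilibrium with probability zero (Pemantle's non-convergence theorem), the process lands on one of the two monopolies almost surely: in this stylized model exactly one cryptocurrency survives in the long run, which already overturns any naive belief that the incumbent must win.

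Third, I would pin down which monopoly is selected. The stable manifold of the interior saddle is a separatrix that divides the unit square into the basin of $(0,0)$ and the basin of $(1,1)$; by monotonicity of $f$ and $g$ this separatrix is a strictly decreasing curve through $(\tfrac12,\tfrac12)$, so every initial state with $x_0$ and $y_0$ large enough lies in the basin of $(1,1)$. Here the hypotheses enter directly: ``the expected number of accepting customers would be much higher'' means the forward-looking merchants evaluate $f$ at an expected customer-share near $1$, which pushes $x_t$ toward $1$ from the first steps, while ``the number of merchants using the previous cryptocurrencies is low'' means $x_0$ itself is already close to $1$. Combined, these place the initial state strictly inside the basin of attraction of $(1,1)$ even when zk-PoI has no first-mover advantage, so $(x_t,y_t)\to(1,1)$ almost surely and the new cryptocurrency attains dominance; one may additionally quantify ``how much higher'' and ``how low'' by requiring the initial point to lie on the zk-PoI side of the separatrix.

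The main obstacle is the self-referential nature of the forward-looking term: the probability that a merchant adopts zk-PoI depends on $E(c_{zkPOI}^{\beta})$, an expectation over the very process under study, so one must first exhibit a consistent rational-expectations solution --- a fixed point of the map sending a conjectured limiting customer-share to the share it induces --- and then argue that the ``high expected customers'' hypothesis (powered here by the vast latent customer base of national-identity and ePassport holders) selects the self-fulfilling equilibrium in which that fixed point is near $1$ rather than the coexisting one near $0$. A secondary technicality is a mixed population of myopic and forward-looking merchants, which replaces $f(y_t)$ by a convex combination of $f(y_t)$ and $f(E y_t)$ but leaves the phase portrait --- two stable monopolies separated by a saddle --- qualitatively unchanged.
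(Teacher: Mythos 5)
Your proposal is correct in substance but takes a genuinely different, and considerably more rigorous, route than the paper. The paper's own proof has two informal pieces: (i) a step-counting bound, $\left(m_{BTC}+1\right)\cdot\left(1-\lambda\right)^{-1}$, on the number of arrivals needed for the entrant to overtake the incumbent's merchant count, combined with the hypotheses $E\left(c_{zkPOI}\right)\gg E\left(c_{BTC}\right)$ and $m_{BTC}$ small to argue that the overtaking condition can hold; and (ii) essentially the same pair of differential equations you derive, but written in ratio coordinates $\nicefrac{m_{BTC}}{m_{zkPOI}}$ and $\nicefrac{c_{BTC}}{c_{zkPOI}}$, from which the paper reads off monopolization when $\alpha\cdot\beta>1$ and equalization when $\alpha\cdot\beta<1$ simply ``according to the signs of the derivatives.'' Your steps two and three recover (ii) in share coordinates and make it rigorous: the Jacobian eigenvalues $-1\pm\sqrt{\alpha\beta}$ at the symmetric fixed point, Pemantle's non-convergence theorem to exclude the saddle, and the decreasing separatrix identifying the two basins are all absent from the paper, and your stochastic-approximation framing supplies the missing bridge from the discrete arrival process to the ODEs, which the paper takes for granted. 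Two small corrections: first, by your own eigenvalue computation the tipping condition is $\alpha\beta>1$ (as the paper states), not the stronger ``$\alpha,\beta>1$'' you write; second, you translate ``the number of merchants using the previous cryptocurrencies is low'' as ``$x_{0}$ is already close to $1$,'' but a small absolute $m_{BTC}$ does not make the entrant's initial \emph{share} large (the entrant may start from zero). The paper instead uses the smallness of $m_{BTC}$ to bound the overtaking time in (i); the analogous statement in your framework is that, because forward-looking merchants evaluate $f$ at the expected customer share (near $1$ by hypothesis), only on the order of $m_{BTC}\cdot\left(1-\lambda\right)^{-1}$ arrivals are needed before the state crosses the separatrix into the basin of $\left(1,1\right)$, after which your tipping argument applies. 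With that adjustment your argument is a strict strengthening of the paper's, which like the theorem itself is qualitative rather than fully formal.
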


\begin{proof}
Note that the number of steps needed for a new cryptocurrency, $m_{zkPOI}$,
to overtake the previous one, $m_{BTC}$, on the number of merchants,
$m_{zkPOI}>m_{BTC}$, is given by
\[
\left(m_{BTC}+1\right)\cdot\left(1-\lambda\right)^{-1}
\]
It's possible for a new cryptocurrency to overtake a previous one
on the number of merchants whenever
\[
E\left(c_{zkPOI}\right)-E\left(c_{BTC}\right)>\left(m_{BTC}+1\right)\cdot\left(1-\lambda\right)^{-1}
\]
and since $m_{BTC}$ is a low number and $E\left(c_{zkPOI}\right)\gg E\left(c_{BTC}\right)$,
it's conceivable that the previous condition could hold.

Now let's consider the results of strong network effects on the final
market shares of both payment networks by examining the following
differential equations,
\[
\frac{d\left(\nicefrac{m_{BTC}}{m_{zkPOI}}\right)}{dt}=\left(1-\lambda\right)\frac{\left(\nicefrac{c_{BTC}}{c_{zkPOI}}\right)^{\beta}-\left(\nicefrac{m_{BTC}}{m_{zkPOI}}\right)}{\left(1+\left(\nicefrac{c_{BTC}}{c_{zkPOI}}\right)^{\beta}\right)m_{zkPOI}}
\]
and
\[
\frac{d\left(\nicefrac{c_{BTC}}{c_{zkPOI}}\right)}{dt}=\lambda\frac{\left(\nicefrac{m_{BTC}}{m_{zkPOI}}\right)^{\alpha}-\left(\nicefrac{c_{BTC}}{c_{zkPOI}}\right)}{\left(1+\left(\nicefrac{m_{BTC}}{m_{zkPOI}}\right)^{\alpha}\right)c_{zkPOI}}
\]

According to the signs of the previous derivatives, when $\alpha\cdot\beta>1$
and $t\rightarrow\infty$, the payment network with even a slight
advantage over the other will end acquiring all the merchants and
customers, for example
\[
\underset{t\rightarrow\infty}{\lim}m_{zkPOI}=\infty,\underset{t\rightarrow\infty}{\lim}c_{zkPOI}=\infty
\]
\[
\underset{t\rightarrow\alpha}{\lim}m_{BTC}=0,\underset{t\rightarrow\alpha}{\lim}c_{BTC}=0
\]
but with $\alpha\cdot\beta<1$, the number of merchants and customers
will equalize
\[
m_{zkPOI}=m_{BTC},c_{zkPOI}=c_{BTC}
\]
thus highlighting the importance of network effects.
\end{proof}

\subsubsection{Dominance over Cash and other Cryptocurrencies\label{subsec:Dominance-over-Cash}}

The dominance of subsection \ref{subsec:Strictly-Dominant-Cryptocurrenci}
is based on mining and subsection \ref{subsec:Pareto-Dominance-on}
extends said dominance to the circulation of currencies: in this subsection,
the dominance will be based on the lower costs of a payment network
of the cryptocurrency using Zero-Knowledge Proof-of-Identity; therefore,
there exists a unique equilibrium in which this payment system dominates.

A recent paper\cite{coordinationElectronicPaymentInstrument} offers
a model based on a version of Lagos-Wright\cite{lagosWright} to explain
the substitution of cash by debit cards or any other non-deferred
electronic payment system incurring a fixed cost $\Omega\left(z\right)\tau$
per each period $\tau$, the cost $\Omega\left(z\right)$ being financed
by imposing fee $\omega$ on each payment where $\omega$ should satisfy
\[
\Omega\left(z\right)=S\left[\theta\omega+\left(1-\theta\right)\omega\right]
\]
and where $S$ denotes the instantaneous measure of electronic payment
transactions, $z$ is the state of development of the economy, $\theta\in\left[0,1\right]$
is the share of cost allocated to a buyer and $\left(1-\theta\right)$
is the share of cost allocated to a seller. In this model, an electronic
payment system can achieve dominance over cash using the solution
concept of iterative elimination of conditionally dominated strategies
whenever the state of development of the economy $z$ is sufficiently
high, and there exists a unique equilibrium in the model such that
agents choose electronic payment transactions when $z$ is strictly
higher than the limiting cut-off function $Z_{\infty}$ of the sequence
of boundaries $Z_{0,}Z_{1,}\ldots$ of regions where an agent chooses
electronic payment transactions regardless of the choices of other
agents. In other words, it's strictly dominant to choose electronic
payments in an economy having sufficiently advanced information technology
so that $\Omega$ is negligible.
\begin{cor}
Since the cost function $\Omega\left(z\right)$ of a UCDE cryptocurrency
based on Zero-Knowledge Proof-of-Identity is much cheaper than of
PoW/PoS cryptocurrencies and other forms of electronic payment because
its cards are already distributed (i.e., de facto subsidised by governments),
there exists a unique equilibrium in the model \cite{coordinationElectronicPaymentInstrument}
such that the agents choose the UCDE cryptocurrency using zk-PoI and
it dominates the other forms of payment.
\end{cor}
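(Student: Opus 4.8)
The plan is to invoke the structural result of \cite{coordinationElectronicPaymentInstrument} as a black box and to reduce the claim to a single comparative-statics observation about the fixed cost $\Omega\left(z\right)$. First I would recall that in that model the decisive object is the sequence of boundaries $Z_{0},Z_{1},\ldots$ of the regions in which an agent strictly prefers electronic payment \emph{regardless} of the other agents' choices, together with its limiting cut-off $Z_{\infty}$: the cited paper establishes that whenever $z>Z_{\infty}$ the iterative elimination of conditionally dominated strategies selects ``use electronic payment'' as the unique surviving strategy, and hence that the equilibrium in which all agents adopt it is unique. The substantive step is therefore to place the zk-PoI cryptocurrency inside this framework by identifying its per-period fixed cost $\Omega_{\text{zk-PoI}}\left(z\right)$ and comparing it with the corresponding cost of cash, of PoW/PoS cryptocurrencies, and of conventional debit-card infrastructure.

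Second I would argue the cost comparison itself. The quantity $\Omega\left(z\right)$ aggregates the capital and operating expenses of the payment rails; for the zk-PoI cryptocurrency the authentication/identity layer -- the most expensive component of any real-world deployment, as noted in Section~\ref{sec:Authentication-Protocols} and Section~\ref{sub:Worldwide-Coverage-and} -- is exogenous, since national identity cards and ePassports are already issued and their issuance is \emph{de facto} subsidised by governments (cf.\ the £5.4bn figure for the abandoned UK scheme). By contrast PoW/PoS cryptocurrencies must internalise the energy and equipment costs, or the opportunity cost of staked capital, as quantified in Section~\ref{sub:Price-of-Crypto-Anarchy}, and conventional electronic payment systems must build and maintain their own credentialing and settlement infrastructure. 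Hence one would conclude $\Omega_{\text{zk-PoI}}\left(z\right)<\Omega_{X}\left(z\right)$ for every competing instrument $X$ and every state of development $z$, and in fact $\Omega_{\text{zk-PoI}}$ can be taken negligible in any economy with sufficiently advanced information technology.

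Third I would push the comparison through the cut-off. Since each boundary $Z_{n}$ and the limit $Z_{\infty}$ are monotone increasing in $\Omega$ -- a larger fixed cost makes electronic payment conditionally dominated on a larger region -- the inequality $\Omega_{\text{zk-PoI}}<\Omega_{X}$ yields $Z_{\infty}^{\text{zk-PoI}}\le Z_{\infty}^{X}$, so the set of states $z$ on which the zk-PoI instrument is iteratively dominant contains that of every competitor; in the negligible-$\Omega$ regime $Z_{\infty}^{\text{zk-PoI}}$ collapses to (essentially) the lower endpoint of the state space, so adoption of the zk-PoI cryptocurrency is conditionally dominant for all relevant $z$. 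Applying the uniqueness half of the result of \cite{coordinationElectronicPaymentInstrument} then gives a unique equilibrium in which agents coordinate on the zk-PoI cryptocurrency, and because the comparison was made against cash, PoW/PoS cryptocurrencies and other electronic systems simultaneously, that equilibrium dominates all other forms of payment.

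I expect the main obstacle to be making the second step rigorous: turning ``the cards are already distributed and subsidised'' into a defensible functional inequality $\Omega_{\text{zk-PoI}}\left(z\right)<\Omega_{X}\left(z\right)$ that holds uniformly in $z$ rather than merely on average, and in particular ruling out that the residual costs of the zk-PoI stack (TEE and attestation overhead, the Intel Attestation Service or its OPERA replacement, node operation) re-introduce a fixed cost comparable to the competitors'. A secondary subtlety is the monotonicity of $Z_{\infty}$ in $\Omega$: this is intuitive and almost certainly implicit in \cite{coordinationElectronicPaymentInstrument}, but if their construction of the $Z_{n}$ is not manifestly monotone one would have to re-derive the comparative static from the fixed-point characterisation of the limiting cut-off.
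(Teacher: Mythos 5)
Your proposal follows exactly the route the paper intends: the paper in fact offers no explicit proof of this corollary at all, merely asserting it from the preceding description of the model of \cite{coordinationElectronicPaymentInstrument} (iterative elimination of conditionally dominated strategies, the limiting cut-off $Z_{\infty}$, and the observation that electronic payment is strictly dominant once $\Omega$ is negligible). Your reconstruction — black-boxing the cut-off result, arguing $\Omega_{\text{zk-PoI}}\left(z\right)$ is exogenously subsidised and hence smaller than that of every competitor, and pushing the inequality through the monotonicity of $Z_{\infty}$ in $\Omega$ — is the argument the paper implicitly relies on, and the two obstacles you flag (a uniform-in-$z$ cost inequality that survives the TEE/attestation overhead, and the monotone dependence of $Z_{\infty}$ on $\Omega$) are genuine gaps the paper leaves unaddressed.
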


\section{Conclusion}

Although all permissionless blockchains critically depend on Proof-of-Work
or Proof-of-Stake to prevent Sybil attacks, their high resource consumption
corroborates their non-scalability and act as a limiting factor to
the general diffusion of blockchains. This paper proposed an alternative
approach that not only doesn't waste resources, it could also help
in the real-world identity challenges faced by permissionless blockchains:
the derivation of anonymous credentials from widely trusted public
PKI certificates.

Additionally, we study the better incentives offered by the proposed
cryptocurrency based on our anonymous authentication scheme: mining
is proved to be incentive-compatible and a strictly dominant strategy
over previous cryptocurrencies, thus a Nash equilibrium over previous
cryptocurrencies and an Evolutionary Stable Strategy; furthermore,
zk-PoI is proved to be optimal because it implements the social optimum,
unlike PoW/PoS cryptocurrencies that are paying the Price of (Crypto-)Anarchy.
The circulation of the proposed cryptocurrency is proved to Pareto
dominate other cryptocurrencies based on its negligible mining costs
and it could also become dominant thanks to stronger network effects;
finally, the lower costs of its infrastructure imply the existence
of a unique equilibrium where it dominates other forms of payment.

\bibliographystyle{alpha}
\bibliography{bib}

\end{document}